\documentclass[reqno,11pt]{amsart}
\usepackage{DocumentPreamble}

\AddTitle{Spectral graph clustering with inhomogeneous latent geometry}

\AddAuthor{Konstantin Avrachenkov\textsuperscript{$\ast$}}
\AddAuthor{Lucas S. Sibemberg\textsuperscript{$\dagger$}}
\AddAuthor{Alexander Van Werde\textsuperscript{$\diamond$}}
 
\address{$\ast$ Inria Sophia Antipolis, France
  (\href{mailto:k.avrachenkov@inria.fr}{k.avrachenkov@inria.fr})}
\address{$\dagger$ Federal University of Rio Grande do Sul, Brazil (\href{mailto:lucas.siviero@ufrgs.br}{lucas.siviero@ufrgs.br})}
\address{$\diamond$ University of Münster, Germany
  (\href{mailto:a.van.werde@uni-muenster.de}{a.van.werde@uni-muenster.de})
  } 

\AddMSC{05C80}
\AddMSC{60B20}
\AddMSC{62H30}
\AddMSC{05C82}

\AddKeyword{Block model} 
\AddKeyword{almost exact recovery} 
\AddKeyword{density-based spectral clustering} 
\AddKeyword{geometric random graph}  

\begin{document}
\begin{abstract}    

We study spectral clustering in the presence of a confounding latent geometry. 
The leading eigenvectors may then be dominated by the latent geometry rather than by the communities. 
Nevertheless, we show in a \emph{block latent-space model} that communities can be recovered from eigenvectors deeper in the spectrum. 
We analyze the spectral properties of the adjacency matrix through a limiting integral operator and use its structure to develop DBSPEC, a density-based spectral clustering algorithm that requires only approximate localization of the informative eigenvalue and is robust to poor eigenvalue separation. 
Crucially, this approach handles general latent geometries, overcoming restrictions to homogeneous toroidal models in prior works. 
Our theoretical predictions for the location of the informative eigenvalue notably align with observations in real-world experiments.
\end{abstract}
\maketitle

\section{Introduction}

Many real-world networks are shaped by both geometric and community-based factors.
For example, consider a collaboration graph where an edge connects two researchers who co-author a paper.
Shared research interests increase the chance of collaboration, while geographic distance reduces it due to logistical challenges. 
Such effects are indeed apparent in empirical data:  researchers from the same domain but different institutions often collaborate despite 
geographic distance, whereas collaborations across domains are more frequent 
within the same institution \cite[Table 1]{galhotra2023community}.
Thus, collaboration networks reflect both community structure (research domains) and geometric structure (spatial proximity).

To uncover such underlying structures from empirical data, clustering algorithms that group together similar nodes serve as a key tool.
Various clustering algorithms have been developed depending on the available data and on the structure that one is interested in. 
If one is predominantly interested in the spatial relationships, and one has access to the coordinates, then distance-based algorithms such as $k$-means or \texttt{DBSCAN} may be used \cite{ester1996density,lloyd1982least}.
If instead one is interested in the community structure, as will be our focus, then algorithms that leverage network structure are more suitable.
Among these, spectral algorithms are especially popular due to their strong theoretical foundations and ease of implementation.

Most fundamental understanding of network clustering algorithms comes from theoretical models that include either geometric or community structure, but not both.
For instance, the \emph{stochastic block model} captures purely community-based structure, with edges formed according to group memberships, while ignoring other factors such as geometry \cite{holland1983stochastic}. 
Spectral clustering methods that rely on the eigenvectors associated with the leading few eigenvalues of a Laplacian or adjacency matrix can be shown to be consistent in that model because the observed matrices can be viewed as a low-rank signal plus noise \cite{lei2015consistency}. 
By contrast, \emph{latent position models} provide a purely geometric framework for clustering, in which each node is assigned a position in Euclidean space and edges are formed according to distance \cite{handcock2007model}. Spectral methods are also consistent in this setting, but for a different reason: the graph Laplacian approximates the continuous 
Laplace--Beltrami operator, whose eigenfunctions capture the latent geometry \cite{von2008consistency}.

In practice, however, geometry and community effects often 
coexist. 
Spatial structure can confound algorithms designed for community detection. 
For example, we show in Section \ref{sec: RealWorld} that direct application of standard spectral clustering to a collaboration network may fail to recover the underlying research domains when the effect of distance is strong.
The leading eigenvectors are then dominated by the spatial structure which obscures the desired non-geometric clusters.

To address such issues, one must account for geometric effects.  
To this end, the two-cluster \emph{Soft Geometric Block Model} (SGBM) assigns each node a uniform random position $X_i$ on a $d$-dimensional torus as well as a cluster label in $\{1,2 \}$ \cite{avrachenkov2021higher}. 
The edge probabilities in this random graph model depend on the distance between nodes through the function $F_{\textnormal{in}}(\Vert X_i-X_j\Vert)$ if the nodes belong to the same cluster, and through a different function $F_{\textnormal{out}}(\Vert X_i-X_j\Vert)$ if they belong to different clusters.
This model generalizes the \emph{hard geometric block model} whose study was initiated by \cite{galhotra2018geometric,galhotra2023community}.

It was found in \cite{avrachenkov2021higher} that sign-based spectral clustering can still be made to work in the SGBM, but only if one considers an eigenvalue deeper in the spectrum. 
To locate the appropriate eigenvalue, \cite{avrachenkov2021higher} analyzed the limiting spectrum through a delicate combinatorial approach based on the tracial moments of the adjacency matrix. 
The limiting spectrum there appears somewhat miraculously via a Fourier transform of $F_{\textnormal{in}}$ and $F_{\textnormal{out}}$. 
This analysis strongly depends on the nodes being distributed uniformly in the $d$-dimensional torus. 
Furthermore, the exact appropriate eigenvalue position must be known and the algorithm is not robust when the eigenvalue is not well-separated from other eigenvalues.

Our goal in the present paper is to understand spectral clustering techniques in the presence of inhomogeneous latent geometry.
We study a \emph{block latent-space model} where nodes are independently assigned positions $X_i$ on some arbitrary compact set following a law that does not need to be uniform.   
We let the connection probabilities be proportional to $K(X_i,X_j)$, where $K$ is a general symmetric
kernel satisfying a mild $L^2$-continuity condition. The proportionality constant of the kernel depends
on whether the nodes are in the same cluster or not.
Thus, the latent geometry need not be a torus, and the kernel may depend on factors beyond distance; however, its dependence on the cluster is assumed to enter only through the proportionality constants.
This model hence generalizes the case of the SGBM with $F_{\textnormal{in}} = a F$ and $F_{\textnormal{out}} = b F$ for constants $a,b \in [0,1]$ and a single fixed function $F$. 

We analyze the limiting eigenvalues and eigenvectors in Propositions \ref{prop: EigenvaluesMatch} and \ref{prop: EigenvectorsMatch}. 
Instead of combinatorial analysis of tracial moments, we rely on results from \cite{koltchinskii1998asymptotics} to match the spectral properties with a limiting integral operator.
This approach not only handles inhomogeneous latent geometry, but also provides additional insight through the appearance of the spectrum as that of a limiting operator, and it lends itself to generalizations; see e.g., Remark \ref{rem: MutlipleClusters}.

\pagebreak[4]
The inhomogeneous latent geometry implies that the limiting eigenvectors do not need to be constant on the clusters, different from what occurs in the SGBM in \cite{avrachenkov2021higher}.  
Rather, the relevant limits are continuous functions when restricted to a single cluster but may be discontinuous across clusters. The proposed algorithm \texttt{DBSPEC} (Algorithm~\ref{alg:high-density}) leverages this by using eigenvector coordinates to construct low-dimensional embeddings and then applying \texttt{DBSCAN} to separate them into two connected parts.
This algorithm has the additional advantage of being more robust to eigenvalue selection than the method of \cite{avrachenkov2021higher}, since the continuity of the latent positions is preserved even when a few irrelevant eigenvectors are included in the embedding. Moreover, the use of low-dimensional eigenspaces rather than individual eigenvectors makes the method naturally robust to eigenvalue multiplicities.

Theorem \ref{thm: AlgGuarantee} gives an algorithmic guarantee that almost exact recovery is possible given only approximate knowledge of the ideal eigenvalue positions. 
This is established in a regime with superlogarithmic average degree, which is more lenient than the assumption of linear average degree in \cite{avrachenkov2021higher}. 
This assumption is expected to be close to optimal for this specific algorithm. 
Additional regularization steps in the spirit of \cite{le2017concentration} could likely enable algorithms that extend almost-exact recovery to sparser regimes. 
We do not pursue such additional steps here as they are not directly related to our primary goal of understanding the effect of inhomogeneous latent geometry on spectral clustering. 

The theoretical insights gained in our block latent-space model are leveraged in Section~\ref{sec: RealWorld} to extract community structure despite the confounding background geometric effects in some real-world networks. 
There, we show that standard spectral methods may fail to identify the communities of interest when the leading spectral information is dominated by geometry rather than cluster structure. In such cases, better performance can be achieved by exploiting a non-leading eigenvalue.
Most notably, our theoretical predictions of the location of the ideal eigenvalue match the observed ideal eigenvalues in these empirical experiments.

Let us stress that our algorithms do not assume that the latent positions are known; only the network structure is observed.
An estimator for the ideal eigenvalue position does require some additional information, but even this is only about cluster-level parameters; see \eqref{eq:GrumpySax} and the preceding discussion. 
This has significant relevance in the empirical applications as we indeed lack prior knowledge of the specific latent geometric distributions in those examples.

\subsection*{Additional literature}
The two-cluster SGBM and clustering algorithms from \cite{avrachenkov2021higher} were recently extended in \cite{allem2025multi} to multiple clusters by using $k$-means instead of sign-based clustering. 
The latter comes with similar restrictions to \cite{avrachenkov2021higher}: the geometry is toroidal and the algorithm is non-robust when eigenvalue separation is small.

Clustering when latent positions are known has been considered in a stream of works, starting from \cite{sankararaman2018community,abbe2021community}.
In \cite{sankararaman2018community} information-theoretic conditions for weak recovery (detection) were studied, and \cite{abbe2021community} studied both weak and exact recovery. 
More recently,  \cite{gaudio2024exacta,gaudio2024exactb,avrachenkov2024community,gaudio2025sharp,gaudio2025exact} refined the information-theoretic conditions for exact recovery and proposed efficient algorithms based on local, geometry-aware propagation and refinement procedures. 
The papers \cite{gaudio2024exacta,gaudio2024exactb,gaudio2025sharp} focus on hard random geometric graph models, whereas \cite{avrachenkov2024community,gaudio2025exact} study soft geometric models with distance-dependent connection probabilities.
In all the above works, nodes are assumed uniform on a torus, an assumption that is used both to obtain tractable information-theoretic thresholds and to analyze the algorithms. 
It would be interesting to also pursue sharp information-theoretic thresholds in our setting with unobserved and inhomogeneous latent positions.

There has also been a significant recent work on the spectra of geometric graph models without cluster structure \cite{cao2025spectra,ding2025edge,adhikari2022spectrum,bordenave2008eigenvalues,bubeck2016testing,dubova2023universality}. 
Topics of interest include the effect of high-dimensional latent geometry or sparsity. 
It would also be interesting to understand the effect of such features on clustering problems.  

\subsection*{Outline}
Section \ref{sec: Model} introduces the model. 
We state our results regarding the spectral properties and clustering algorithms in Sections \ref{sec: LimitSpec} and \ref{sec: SpectralClusteringAlgorithm}, respectively. 
Proofs are given in Sections \ref{sec: ProofOperator} and \ref{sec: ProofAlgGuarantee}. 
We conclude with experiments on real-world data in Section \ref{sec: RealWorld}. 
Further details, including an outline of the extension to more than two clusters and additional experiments on synthetic and real-world data are provided in the appendices.

    \section{Model definition and assumptions}\label{sec: Model} 
    We consider a random graph model whose edge probabilities depend on latent variables and cluster labels.
    
    Fix a compact subset $\cX \subseteq \bbR^d$ for some $d\geq 1$ and assume that the nodes are independently assigned latent positions $X_1,\ldots,X_n$ in $\cX$ following a fixed probability distribution $\mu$. 
    Further, independently assign random labels $\sigma_1,\ldots,\sigma_n\sim  \operatorname{Unif}\{1,2 \}$.
    Generalizations with more clusters and non-uniform assignment are considered in Remark \ref{rem: MutlipleClusters} and the appendices. 

    To specify the dependence of the connection probabilities on latent positions, fix a measurable function $K:\cX\times \cX \to [0,1]$ with the symmetry constraint $K(x, y) = K(y,x)$.
    Further, for the dependence on the cluster labels, fix $a,b\in [0,1]$ and define a symmetric $2\times 2$ matrix 
    \begin{align} 
        \bP \de \begin{pmatrix}
            a&b\\ 
            b & a
        \end{pmatrix}.\label{eq: Def_P}
    \end{align}
    Finally, to tune the graph's sparsity, let $\rho_n\in[0,1]$ be an $n$-dependent parameter.
    
    A random graph $G = (V,E)$ on $n$ nodes is then said to come from the \emph{block latent-space model} if its edges are present conditionally independently given the latent positions $X_v$ and cluster assignments $\sigma_v$, with conditional probability specified by 
    \begin{align} 
        \bbP_{X,\sigma}\bigl((v,w)\in E \bigr) = 
            \rho_n K(X_v,X_w) \bP(\sigma_v,\sigma_w).\label{eq: Def_PXsigma}
    \end{align}
    Equivalently, the conditional distribution of the adjacency matrix $\bA\in \{0,1 \}^{n\times n}$ is given by  
    \begin{align} 
        \bbP_{X,\sigma}(\bA) = \prod_{1 \leq v<w \leq n}  \bigl(
            \rho_n K(X_v,X_w) \bP(\sigma_v,\sigma_w) \bigr)^{\bA_{v,w}} \bigl(1 - 
            \rho_n K(X_v,X_w) \bP(\sigma_v,\sigma_w)\bigr)^{1-\bA_{v,w}}. 
    \end{align}
    \begin{example}
        If $\cX = \{0 \}$ is a single point, then the connection probabilities only depend on the clusters assignments and one recovers the \emph{stochastic block model} \cite{holland1983stochastic}. 
        Moreover, if $\cX \subseteq \bbR$ and $K(x,y) = xy$, then one recovers the \emph{degree-corrected stochastic block model} \cite{qin2013regularized,karrer2011stochastic}.
    \end{example}
    \begin{example}\label{example: Geometric}
        Let $K(x,y) = \bb1\{\Vert x-y \Vert \leq R\}$ for some $R>0$. 
        Then, one recovers the \emph{geometric random graph model} \cite{penrose2003random} as the special case where $a=b$ in \eqref{eq: Def_P}. 
        Moreover, by allowing $a$ and $b$ to be different one gets a notion of cluster structure, giving an instance of the \emph{soft geometric block model} studied in \cite{avrachenkov2021higher} in the special case where $\cX$ is a torus.
        The restriction to tori will not be needed in the present work. 
        A visualization of the model in the case, when $\cX = [0,1]\times [0,1]$ is the square, is depicted in Figure \ref{fig:01}. 
        Note that this latent space introduces a boundary effect which makes the geometry of the graph non-homogeneous and was not present in \cite{avrachenkov2021higher}: nodes near the edge have fewer neighbors than those in the middle of the domain.
    \end{example}

    \begin{figure}[h!]
      \centering
      \includegraphics[width=0.68\linewidth]{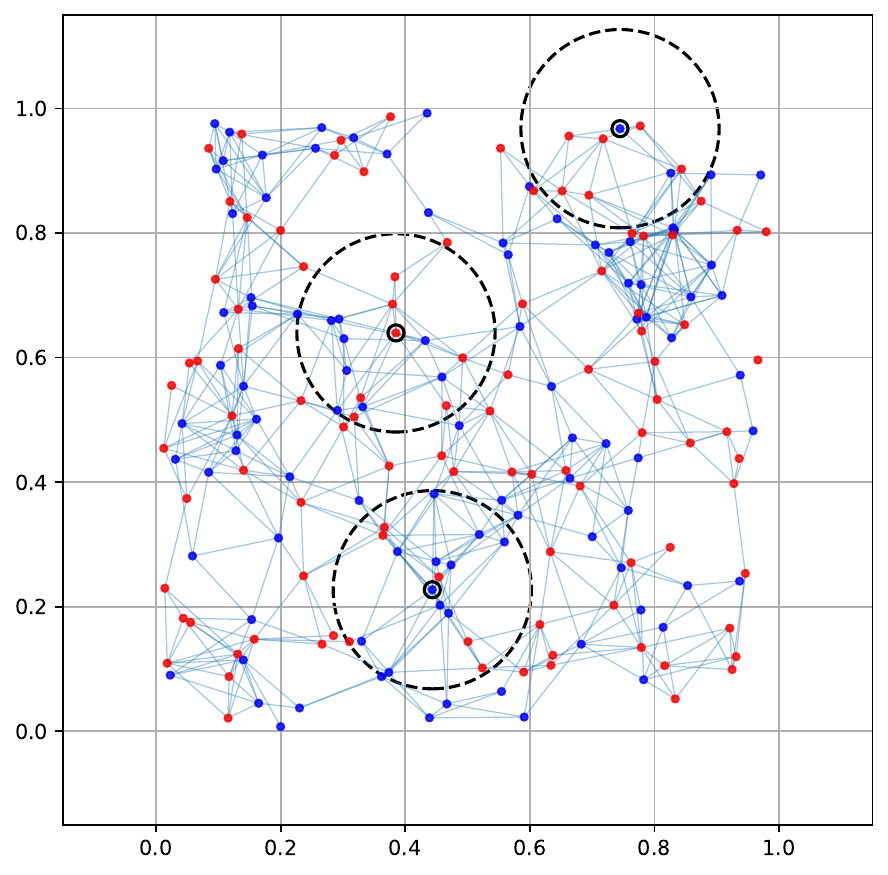}
      \caption{A random soft geometric graph with $200$ points uniformly drawn in the square $\cX$ as in Example~\ref{example: Geometric} with parameters $R=1/2\pi$, $a=0.75$ and $b=0.2$. 
      Community membership is indicated by node colors. 
      }
      \label{fig:01}
    \end{figure}
    \begin{nonexample}
        We stress that the present setup is intended to model a setting where clusters and geometry are two different features. 
        This is reflected in our assumption that the latent probability distribution $\mu$ and the kernel $K$ are the same for both clusters.
        In particular, geometry-based clusters as in a Gaussian mixture block model graph \cite{li2023spectral} are a different notion and are not admitted in our model. 
        Hard geometric block models with cluster-dependent connection kernels as in \cite{galhotra2018geometric,galhotra2023community} are not admitted in our model either. 
    \end{nonexample}
        \subsection{Assumptions}
    So far, the cluster assignments $\sigma_v$ could just as well have been incorporated in the latent positions by replacing the latent space $\cX$ with two disjoint copies of itself and modifying the kernel $K$.
    The following assumption rules this out to make the clustering problem well-defined. 
    \begin{assumption}\label{ass: ConnectedLatent}
        The compact set $\cX \subseteq \bbR^d$ is connected. 
        Moreover, the support of the measure $\mu$ completely covers this set: $\operatorname{supp}(\mu) = \cX$.   
    \end{assumption}
    An assumption on $K$ is required to ensure that it respects the underlying geometry of the latent space. 
    To this end, we will rely on a continuity assumption. 
    It is however too restrictive to impose that $K$ is continuous as a bivariate function as this would rule  out important examples like block geometric random graphs; recall Example \ref{example: Geometric}.
    The following  is more flexible:
    \begin{assumption}\label{ass: ContinuityKernel}
        The function $x \mapsto K(x, \cdot)$ is continuous in the $L^2$-norm for  $\mu$. 
        That is, $\lim_{\varepsilon \to 0} \int_\cX ( K(x,y) - K(x+\varepsilon,y) )^2 \intd \mu(y) =0$ for every fixed $x\in \cX$.  
    \end{assumption}
  
    Assumptions \ref{ass: ConnectedLatent} and \ref{ass: ContinuityKernel} together encode the conceptual difference between clusters and latent variables that we assume in this paper: while the cluster assignments are discrete in nature, the latent space $\cX$ is assumed to have a continuous character. 
    Our focus will be to understand how spectral algorithms for recovering the cluster labels are affected by the confounding latent variables.      
     
    Of course, it may not be possible to recover the cluster label $\sigma_v$ based on an observation of the graph $G$ if the associated vertex has no connections that we can use.
    Further, the two clusters are statistically identical if $a=b$. 
    This motivates the following assumptions:
    \begin{assumption}\label{ass: Nontrivial}
        It holds that $a+b>0$ and $a-b \neq 0$. 
        Further, assume that for every $x\in \cX$ it holds that  
        $
        \int_{\cX} K(x,y)\, \intd\mu(y) >0 
        $.  
    \end{assumption}
    All results presented from here on suppose that Assumptions \ref{ass: ConnectedLatent} to \ref{ass: Nontrivial} are in effect. 
    \section{Limiting spectral properties}\label{sec: LimitSpec}
    Associated with the kernel $K$, we have a linear operator 
    $
        \mathbb{K}:L^2(\cX, \mu) \to L^2(\cX,\mu)
    $ 
    given by 
    \begin{align} 
        (\mathbb{K}f)(x) \de \int_{\cX} f(y) K(x,y)\, \intd\mu(y).    
    \end{align} 
    It follows from $K$ being symmetric and taking values in $[0,1]$ that the operator $\mathbb{K}$ is self-adjoint and compact. 
    In particular, the spectral theorem applies \cite[Ch.~II, \S5, Theorem~5.1]{conway2007course}: there exists an orthonormal system of eigenfunctions $\varphi_1,\varphi_2,\ldots \in L^2(\cX,\mu)$ with eigenvalues $\lvert\kappa_1 \rvert \geq \lvert \kappa_2 \rvert \geq \ldots$ whose only accumulation point is at zero such that for every $f\in L^2(\cX,\mu)$,
    \begin{align} 
        \mathbb{K}f = \sum_{i=1}^\infty \kappa_i\langle f, \varphi_k \rangle \varphi_k. \label{eq:UneasyCamel} 
    \end{align}
    We next show that the limiting eigenvectors and eigenvalues of the adjacency matrix can be computed as a transformation of those of $\mathbb{K}$. 
    \begin{proposition}\label{prop: EigenvaluesMatch}
        Let $\hat{\lambda}_1,\ldots,\hat{\lambda}_n$ be the eigenvalues of the rescaled adjacency matrix $\bA/(n\rho_n)$ and assume that $\rho_n = \omega(\ln(n)/n)$.
        Then, there exists a measure $\nu$ such that for any fixed Borel set $B$ with $\nu(\partial B) = 0$ and $0\not\in \overline{B}$, we have 
        \begin{align} 
            \#\{i \leq n:  \hat{\lambda}_i \in B \}
 \to   \nu(B)
        \end{align}
        in probability.
        Moreover, this limiting measure $\nu$ is explicitly given by 
        \begin{align} 
            \nu = \sum_{i=1}^\infty \delta_{\frac{(a+b)}{2}\kappa_i}  + \delta_{\frac{(a-b)}{2}\kappa_i}.\label{eq:MadImp}
        \end{align}
    \end{proposition}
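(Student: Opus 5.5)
The plan is to split $\bA/(n\rho_n)$ into its conditional expectation and a noise part, to identify the spectrum of the expectation with that of a kernel operator on an enlarged space via \cite{koltchinskii1998asymptotics}, and to control the noise by a perturbation argument.

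\emph{Step 1: decomposition.} Write $\bA = \bbE_{X,\sigma}[\bA] + (\bA - \bbE_{X,\sigma}[\bA])$. Up to the diagonal, $\bbE_{X,\sigma}[\bA]/(n\rho_n)$ equals the matrix $\bM$ with entries
\[
\bM_{v,w} = \frac{1}{n}K(X_v,X_w)\bP(\sigma_v,\sigma_w).
\]
The diagonal correction has operator norm at most $1/n$.

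\emph{Step 2: the noise is negligible.} Conditionally on $(X,\sigma)$, the noise matrix $\bA-\bbE_{X,\sigma}[\bA]$ has independent, centered, bounded entries above the diagonal, each with variance at most $\rho_n$. Standard concentration for such matrices (e.g.\ the Bandeira--van Handel bound, or the Feige--Ofek/Lei--Rinaldo argument) gives a norm of $O(\sqrt{n\rho_n})$ with high probability once $n\rho_n = \omega(\ln n)$. After dividing by $n\rho_n$, the perturbation is $O(1/\sqrt{n\rho_n}) = o(1)$ in operator norm. By Weyl's inequality, every eigenvalue of $\bA/(n\rho_n)$ is then within $o(1)$ of the correspondingly ordered eigenvalue of $\bM$.

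\emph{Step 3: spectrum of $\bM$.} View $\bM$ as the empirical kernel matrix of the i.i.d.\ points $Z_v=(X_v,\sigma_v)\in \cX\times\{1,2\}$. The kernel is $\tilde K((x,s),(y,t)) = K(x,y)\bP(s,t)$ and the measure is $\mu\otimes \mathrm{Unif}\{1,2\}$. This kernel is bounded and symmetric, so by \cite{koltchinskii1998asymptotics} the ordered eigenvalues of $\bM$ converge almost surely to those of the integral operator $\tilde{\mathbb K}$, in the $\ell^2$ rearrangement distance $\delta_2$. Next I identify the spectrum of $\tilde{\mathbb K}$. We have $L^2(\cX\times\{1,2\}) \cong L^2(\cX,\mu)\otimes \bbR^2$ and $\tilde{\mathbb K} = \mathbb K\otimes \tfrac12\bP$. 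The matrix $\tfrac12\bP$ has eigenvalues $\tfrac{a+b}{2}$ and $\tfrac{a-b}{2}$, with eigenvectors $(1,1)$ and $(1,-1)$. Hence the eigenpairs of $\tilde{\mathbb K}$ are $\tfrac{a\pm b}{2}\kappa_i$ with eigenfunctions $\varphi_i(x)\otimes(1,\pm1)$, and these eigenfunctions form a complete system on the orthogonal complement of the kernel. This produces exactly $\nu$ in \eqref{eq:MadImp}.

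\emph{Step 4: counting.} Combine $\delta_2$-convergence with Step 2 to get convergence of the whole spectrum of $\bA/(n\rho_n)$ to that of $\tilde{\mathbb K}$ in $\delta_2$, in probability. To pass to counts, fix $B$ with $0\notin\overline B$. Since $\overline B$ stays a positive distance $\eta$ from $0$, only the finitely many atoms of $\nu$ with modulus above $\eta/2$ matter. Because $\nu(\partial B)=0$, none of these atoms lies on $\partial B$, so each atom has a neighbourhood contained either in $B$ or in its complement. The $\delta_2$-convergence implies that, for each atom of multiplicity $m$, exactly $m$ empirical eigenvalues converge to it. All remaining empirical eigenvalues are eventually smaller than $\eta/2$ in modulus and hence lie outside $B$. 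Thus the count converges to $\nu(B)$. Note that $\nu(B)<\infty$ because the $\kappa_i$ accumulate only at $0$.

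\emph{Main obstacle.} I expect the main difficulty to be Step 2 in the sparse regime $\rho_n=\omega(\ln n/n)$. One must check that the concentration bound holds with variance proxy $\rho_n$ and only logarithmic growth of the degree. One must also make sure the $o(1)$ operator-norm perturbation interacts correctly with the $\delta_2$ convergence: Weyl's inequality gives a uniform shift of all eigenvalues, but not an $\ell^2$ one. For this reason the final counting argument in Step 4 should use only the uniform (sup-norm) closeness of ordered eigenvalues, applied to the finitely many relevant eigenvalues, rather than full $\delta_2$ convergence of the perturbed spectrum.
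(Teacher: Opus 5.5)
Your proposal is correct and follows essentially the same route as the paper. The paper likewise bounds the noise $\bA-\bbE[\bA\mid X,\sigma]$ by concentration, using matrix Bernstein; that already suffices, since $\sqrt{n\rho_n\ln n}=o(n\rho_n)$, so Step~2 is not really an obstacle. It then applies Koltchinskii--Gin\'e's $\delta_2$-convergence (\cite[Theorem 3.1]{koltchinskii2000random}) to the kernel $\sA$ on $\cX\sqcup\cX$, identifies $\bbA$ as $\tfrac12\bP\otimes\bbK$, and transfers counts via Weyl's inequality and $\nu(\partial B)=0$.

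The one slip is the first sentence of your Step~4: the full spectrum of $\bA/(n\rho_n)$ does not converge in $\delta_2$. For example, $\sum_i\hat\lambda_i^2=2\#E/(n\rho_n)^2$ is of order $1/\rho_n$ rather than tending to $\Vert\sA\Vert_{L^2}^2$. As you note yourself, though, only the uniform Weyl bound on ordered eigenvalues is needed, and that is exactly how the paper concludes.
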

    Proposition \ref{prop: EigenvaluesMatch} can be viewed as a generalization of \cite[Theorem 1]{avrachenkov2021higher} and \cite[Theorem 2.1]{allem2025multi} which concern cases where $\cX$ is a torus. 
    Our proof approach, however, is different.
    While the analysis in \cite{allem2025multi,avrachenkov2021higher}  used delicate combinatorial arguments for the tracial moments of the adjacency matrix where the shape of the limiting spectrum appears somewhat miraculously, here we adopt an operator-theoretic perspective. 
    Our proof in Section \ref{sec: ProofOperator}, shows that the adjacency matrix has spectral properties similar to an explicit limiting operator $\bbA$ with a spectrum given by \eqref{eq:MadImp}. 

    Specifically, this operator $\bbA$ can be viewed as a kernel operator on the discrete union of two copies of $\cX$. 
    Suppose that we are given a pair of functions $f_1,f_2:\cX \to  \bbR$ in $L^2(\cX,\mu)$, one for each copy of $\cX$. 
    Then, we define the action of $\bbA$ as follows:  
    \begin{align} 
            \mathbb{A}(f_1, f_2) \de (g_1, g_2) \ \textnormal{ with }\ g_i(x) \de \frac{1}{2}\sum_{j=1}^2\bP(i,j)\int_{\cX}  K(x,y) f_j(y) \, \intd \mu(y).  \label{eq: Def_bbA} 
    \end{align}
    Viewing the two copies of $\cX$ as corresponding to the different clusters and recalling the definition of the connection probabilities in the model from \eqref{eq: Def_PXsigma}, one can interpret \eqref{eq: Def_bbA} as a mean-field approximation to the action of the rescaled adjacency matrix $\bA/(n\rho_n)$.\footnote{This mean-field approximation is similar in spirit to the now-classical result that a random graph sampled from a graphon converges to it, as do the eigenvalues \cite[\S 11.6]{lovasz2012large}. 
    Actually, the specific results from integral operator theory \cite{koltchinskii1998asymptotics,koltchinskii2000random} that we build our proofs on even predate the graphon-theoretic terminology.}

    Not only does this explain \emph{why} the limiting spectrum has the form specified in \eqref{eq:MadImp}, one can also use this approach to deduce the structure of the eigenvectors.
    A direct calculation with \eqref{eq: Def_bbA} verifies that the eigenfunctions of $\bbA$ are $(\varphi_i, \varphi_i)$ and $(\varphi_i, -\varphi_i)$ with eigenvalues $(a+b) \kappa_i/2$ and $(a-b)\kappa_i/2$, respectively. 
    The eigenvectors of the adjacency matrix mimic this. 
    Conditional on the latent positions and the cluster assignments, let us define vectors $\psi_i^{+},\psi_i^-\in \bbR^n$ by  
    \begin{align} 
        (\psi_i^+)_v \propto  \varphi_i(X_{v})\quad \textnormal{ and } \quad 
        (\psi_i^-)_v \propto 
        \begin{cases}
            \varphi_i(X_v) & \textnormal{ if } \sigma_v = 1\\ 
            - \varphi_i(X_v)&\textnormal{ if }\sigma_v = 2
        \end{cases}\label{eq: Def_psi}
    \end{align}
    where the proportionality constant is chosen so that these are unit vectors.\footnote{To make sense of pointwise evaluation of $\varphi_i$ despite the equivalence relation implicit in $L^2(\cX, \mu)$, let us note that it follows from Assumption~\ref{ass: ContinuityKernel} that these eigenfunctions are continuous functions; see Lemma \ref{lem: Continuity}.} 
    Then, we have the following approximation result:  
    \begin{proposition}\label{prop: EigenvectorsMatch}
        Fix a Borel set $B$ with $\nu(B)>0$. 
        Further, assume that $\nu(\partial B) = 0$ and $0 \not\in \overline{B}$. 
        Let $\Psi_B$ be the $n\times \nu(B)$ matrix with columns given the $\psi_i^\pm$ with $(a\pm b)\kappa_i/2 \in B$, and let $\hat{\Psi}_B$ be a matrix with columns given by those eigenvectors of $\bA/(n\rho_n )$ with eigenvalues in $B$.

        Assume that $\rho_n = \omega(\ln(n)/n)$.
        Then, it holds asymptotically almost surely that $\hat{\Psi}_B$ is also $n\times \nu(B)$. 
        Moreover, it holds in probability that 
        \begin{align} 
            \min_{\hat{\bO} \in \mathbb{O}(\nu(B))}\Vert \hat{\Psi}_B \hat{\bO} - \Psi_B \Vert_{F} \to 0\label{eq:Procruste} 
        \end{align}
        where the minimum runs over orthogonal matrices and $\Vert \cdot \Vert_F$ is the Frobenius norm. 
    \end{proposition}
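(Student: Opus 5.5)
We split $\bA/(n\rho_n)$ into its conditional expectation plus noise, and treat each part with a separate tool. Conditionally on $(X,\sigma)$, write $\bA/(n\rho_n) = \bM + \bE$. Here $\bM_{v,w} = K(X_v,X_w)\bP(\sigma_v,\sigma_w)/n$ for $v\neq w$, with the diagonal set to zero, and $\bE = (\bA - \bbE_{X,\sigma}\bA)/(n\rho_n)$.

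\emph{Noise term.} Since $\rho_n=\omega(\ln(n)/n)$, standard spectral-norm concentration for inhomogeneous Bernoulli matrices gives $\Vert \bA - \bbE_{X,\sigma}\bA\Vert = O(\sqrt{n\rho_n})$ with high probability. We will use the bound of Feige--Ofek/Lei--Rinaldo type, valid whenever $\max$ expected degree $\gtrsim \ln n$. Hence $\Vert \bE\Vert = O((n\rho_n)^{-1/2}) \to 0$. The zeroed diagonal contributes at most $1/n$ in operator norm.

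\emph{Signal term.} We view $\bM$ as the empirical version of $\bbA$. Namely, $\bM$ is the $n\times n$ kernel matrix of the symmetric kernel $\tilde K((x,i),(y,j)) = K(x,y)\bP(i,j)$ on the space $\cX\times\{1,2\}$. The points $(X_v,\sigma_v)$ are i.i.d.\ from $\mu\otimes\operatorname{Unif}\{1,2\}$, and the associated integral operator is exactly $\bbA$ from \eqref{eq: Def_bbA}. The kernel $\tilde K$ is bounded, so it lies in $L^2$. We then invoke \cite{koltchinskii1998asymptotics}, which gives the following for bounded square-integrable kernels. First, the ordered spectrum of the kernel matrix converges almost surely to that of the operator in the $\ell^2$ rearrangement distance. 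Second, for an isolated part of the spectrum, the spectral projections of the matrix converge to those of the operator in Hilbert--Schmidt norm, where the operator projections are evaluated at the sample points and normalized by $1/n$. The eigenpairs of $\bbA$ have already been computed to be $(\varphi_i,\pm\varphi_i)$ with eigenvalues $(a\pm b)\kappa_i/2$. Given this, the limit projection onto $B$ is spanned by the functions whose sample evaluations, after normalization, are the $\psi_i^\pm$.

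\emph{Combining the two terms.} Since $0\notin\overline B$ and the spectrum of $\bbA$ accumulates only at $0$, the set $B$ contains finitely many limit eigenvalues, counted with multiplicity, and $\nu(B)<\infty$. The condition $\nu(\partial B)=0$ gives a gap $\delta>0$ between $\operatorname{spec}(\bbA)\cap B$ and the rest of the spectrum, and also between it and $\partial B$. By Weyl's inequality, the eigenvalues of $\bM+\bE$ are within $\Vert\bE\Vert\to0$ of those of $\bM$. It follows that exactly $\nu(B)$ of them lie in $B$ with probability tending to one, which gives the dimension claim. For the eigenvectors, Davis--Kahan gives $\Vert \hat\Psi_B\hat\Psi_B^\top - \Pi_B(\bM)\Vert_F \lesssim \sqrt{\nu(B)}\Vert\bE\Vert/\delta \to 0$, where $\Pi_B(\bM)$ is the spectral projection of $\bM$ onto $B$. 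Koltchinskii's result gives $\Vert \Pi_B(\bM) - \Psi_B(\Psi_B^\top\Psi_B)^{-1}\Psi_B^\top\Vert_F\to0$. The vectors $\psi_i^\pm$ are asymptotically orthonormal by the law of large numbers, since $\frac1n\sum_v \varphi_i(X_v)\varphi_j(X_v)\to\delta_{ij}$. They are also asymptotically orthogonal across the $\pm$ signs, because $\frac1n\sum_v(\pm1)\varphi_i\varphi_j$ averages to zero over the uniform labels. Hence $\Psi_B^\top\Psi_B\to I$. Finally, the standard inequality $\min_{\bO}\Vert \hat\Psi\bO-\Psi\Vert_F\le \Vert\hat\Psi\hat\Psi^\top-\Psi\Psi^\top\Vert_F$ for orthonormal frames yields \eqref{eq:Procruste}. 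Pointwise evaluation of the $\varphi_i$ is meaningful by Lemma \ref{lem: Continuity}.

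\emph{Main obstacle.} The main difficulty is applying \cite{koltchinskii1998asymptotics} cleanly to get the Hilbert--Schmidt convergence of the projections, and matching its normalization (evaluation at sample points over $\sqrt n$) to the unit-normalized $\psi_i^\pm$. If $\kappa$'s have multiplicity inside $B$, eigenvectors are only determined up to rotation. This is exactly why the statement works with projections and a minimum over orthogonal $\hat\bO$. I would first verify Koltchinskii's conditions for the finite-rank truncation of $\tilde K$, and then control the tail $\sum_{|\kappa_i|<\epsilon}$ in Hilbert--Schmidt norm using $\sum\kappa_i^2<\infty$.
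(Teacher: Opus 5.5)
Your proposal is correct and follows essentially the same route as the paper. You split off the noise $\bA-\bbE[\bA\mid X,\sigma]$, apply Koltchinskii's result to the kernel matrix of $\sA$ on $\cX\sqcup\cX$, use the law of large numbers for $\Psi_B^{T}\Psi_B\to I$, and transfer to $\bA$ via Weyl and Davis--Kahan. The paper uses matrix Bernstein for the noise, which already suffices under $\rho_n=\omega(\ln(n)/n)$, so the Lei--Rinaldo bound is not needed. The remaining differences are cosmetic: you phrase the signal step as Hilbert--Schmidt convergence of spectral projections followed by the $\sin\Theta$-to-Procrustes inequality. The paper instead extracts the bilinear statement $\Psi_B^{T}\tilde{P}_B\Psi_B\to I$ from Koltchinskii's function-class theorem and concludes with a trace identity for $Q=\tilde\Psi_B^{T}\Psi_B$. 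The two are equivalent once the projection is known to have rank $\nu(B)$.
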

    
    \begin{remark}  
        The principle that the spectral properties of the adjacency matrix are matched by a limiting operator also extends to more general models, such as those where the kernel $K$ may depend on the cluster labels in addition to the latent positions. 
        An explicit expression for the spectrum and eigenfunctions of the latter operator will however only be possible in special cases. 
        For instance, the special case where $\cX$ has toroidal geometry and $\mu$ is the uniform measure that was considered in  \cite{avrachenkov2021higher} could be recovered in this fashion. 
    \end{remark}
    \begin{remark}        
        Models where the geometric kernel only depends on the distance between latent variables have been studied in a number of works that aimed to understand the performance of (non-spectral) clustering algorithms when the distance scales with $n$, thus making the geometry ever more pronounced \cite{galhotra2018geometric,abbe2021community,gaudio2024exacta,avrachenkov2024community}.  
        For instance, take $K (X_v, X_w) = k(\gamma_n \Vert X_v - X_w \Vert)$ for some sequence $\gamma_n\to \infty$ and fixed compactly supported function $k:\bbR_{\geq 0}\to [0,1]$.
        It would be interesting future work to establish results similar to Propositions \ref{prop: EigenvaluesMatch} and \ref{prop: EigenvectorsMatch} in such regimes, as this would clarify how strong the geometry can be before spectral methods break. 
    \end{remark}
    \section{Spectral clustering algorithm}\label{sec: SpectralClusteringAlgorithm}
    A generalization of the Perron--Frobenius theorem to integral operators ensures that the greatest eigenvalue of $\mathbb{K}$, denoted $\kappa_*$, has an eigenfunction $\varphi_*\in L^2(\cX,\mu)$ that is strictly positive: 
    \begin{align} 
        \varphi_*(x) >0\ \textnormal{ for all }\ x\in \cX.\label{eq: phi_star_strictpositive} 
    \end{align}
    Specifically, this follows by a variation on Jentzsch' theorem \cite[Theorem I]{horiguchi1996variation}; see Lemma \ref{lem: StrictPos}. 
    All other eigenfunctions of $\mathbb{K}$ have non-constant sign due to their orthogonality. 
    
    The entries of the associated vector $\psi_*^- \in \bbR^n$ defined in \eqref{eq: Def_psi} will then have sign that is aligned with the cluster labels. 
    Hence, by Proposition \ref{prop: EigenvectorsMatch}, we may expect that $\bA$ will have an eigenvector with eigenvalue close to $\lambda_* \de n\rho_n (a-b)\kappa_*/2$ that is informative for clustering. 
    Indeed, if this eigenvalue is simple in the limiting spectrum, then one can deduce that clustering based on the sign of this eigenvector yields almost exact recovery of the cluster labels.  
    Such a simple sign-based clustering algorithm however has a few disadvantages. 
    
    First, sign-based clustering becomes unstable when the limiting eigenvalue has nontrivial multiplicity.
    The limiting eigenfunctions are then not uniquely defined---any linear combination again giving an eigenfunction of the same eigenvalue---which causes the eigenvectors of the adjacency matrix with eigenvalues close to $n\rho_n (a-b)\kappa_*/2$ to approximate linear combinations of the vectors $\psi^{\pm}_j$. 
    This is why the orthogonal transformations occur in \eqref{eq:Procruste}. 
    The sign of the entries of a linear combination of $\psi_*^{-}$ with these other vectors may not always respect the cluster labels, which would make sign-based clustering unstable. 
    Example~\ref{example: Separation} below provides simulations  that confirm this instability. 

    Second, sign-based clustering requires that we know \emph{exactly} what eigenvector to use.
    This can be problematic because, different from what happens in classical settings like the stochastic block model, the correct eigenvalue to use is not necessarily the second largest. 
    Rather, the appropriate eigenvalue depends on the model parameters. 
    We will see in Section \ref{sec: RealWorld} that it is in-fact possible to estimate the eigenvalue location using only the observed network and estimates for cluster-level parameters $a,b$, so the situation is not as bad as may first appear, since detailed prior knowledge of the geometry encoded in $K$, $\mu$ and $\cX$ is not necessary. 
    Still, it is preferable to have an algorithm that can work with less information as one may not even know the cluster-level parameters in practice. 
        
    \subsection{Density-based higher-order clustering}
    Rather than assuming that we know the exact ideal eigenvalue location, suppose that we know some set $\Lambda \subseteq \bbR$ that contains it. 
    For instance, one could take $\Lambda = (-\infty, -\varepsilon] \cup [\varepsilon,\infty)$ with $\varepsilon>0$ sufficiently small. 
    Alternatively, if one has a rough guess $\hat{\lambda}_*$ for the location of the ideal eigenvalue then one could take $\Lambda = [\hat{\lambda}_* - \varepsilon,\hat{\lambda}_* + \varepsilon ]$.

    Our algorithm uses the eigenvectors with eigenvalues in $\Lambda$.
    Considering the coordinates of these vectors then yields an embedding of the vertices in low-dimensional Euclidean space. 
    (See Figure \ref{fig:03} below for a simulated example.) 
    The idea is then that the embeddings of vertices from a fixed cluster should depend continuously on their latent position, while different clusters should be well-separated due to the component of the embeddings aligned with $\psi_*^-$. 
    We can exploit this structure using a density-based clustering algorithms like \texttt{DBSCAN} \cite{ester1996density}. 
    Pseudocode for the resulting procedure is presented in Algorithm \ref{alg:high-density}, and the consistency guarantee is stated in Theorem \ref{thm: AlgGuarantee}.
    The proof is given in Section \ref{sec: ProofAlgGuarantee}.

\begin{theorem}\label{thm: AlgGuarantee}
    Suppose that we are given a Borel set $B$ with $(a-b)\kappa_*/2 \in B$ satisfying the usual conditions that $\nu(\partial B) = 0$ and $0\not\in \overline{B}$. 
    Further, assume that $\rho_n = \omega(\ln(n)/n)$ and let 
    \begin{align} 
        \Lambda_n \de \{ n\rho_n y : y\in B \}. \label{eq: Def_Kn}
    \end{align}
    Then, there exist  $c_1,c_2>0$ such that the partition $\sC$ of $\{1,\ldots,n \}$ output by Algorithm \ref{alg:high-density} with parameters $\varepsilon = c_1/\sqrt{n}$ and $\textnormal{\texttt{MinPts}} = c_2n$  satisfies the following asymptotically almost surely: 
    \begin{description}
        \item[(a)] {\bf Two linearly sized clusters.} All except two parts in $\sC$ have cardinality $o(n)$. The remaining two parts $\cC_1,\cC_2 \in \sC$ satisfy that $\#\cC_i = n/2 - o(n).$ 
        \item[(b)] {\bf Almost exact recovery.} There exists a permutation $\pi$ such that $\sigma_v =1$ for all except $o(n)$ members $v \in \cC_{\pi(1)}$, and such that $\sigma_w = 2$ for all except $o(n)$ members $w \in \cC_{\pi(2)}.$  
    \end{description}
\end{theorem}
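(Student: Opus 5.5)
The plan is to reduce everything to the limiting embedding given by Proposition~\ref{prop: EigenvectorsMatch}, and then argue that \texttt{DBSCAN} with $\varepsilon = c_1/\sqrt{n}$ and $\texttt{MinPts} = c_2 n$ behaves on the empirical embedding essentially as it does on the ideal one. Let $m = \nu(B)$. The eigenvectors of $\bA$ with eigenvalues in $\Lambda_n$ are those of $\bA/(n\rho_n)$ with eigenvalues in $B$, so Algorithm~\ref{alg:high-density} embeds vertex $v$ as the $v$-th row $\hat{r}_v\in\bbR^m$ of $\hat{\Psi}_B$. By Proposition~\ref{prop: EigenvectorsMatch} there is an orthogonal $\hat{\bO}$ with $\Vert \hat{\Psi}_B\hat{\bO}-\Psi_B\Vert_F = o(1)$ in probability. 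Since \texttt{DBSCAN} depends only on pairwise distances, we may replace $\hat r_v$ by $\hat r_v\hat{\bO}$. Write $r_v$ for the rows of $\Psi_B$. Because the proportionality constants in \eqref{eq: Def_psi} are $(n^{-1}\sum_v \varphi_i(X_v)^2)^{-1/2} n^{-1/2}\to n^{-1/2}$ by the law of large numbers, we have $\sqrt{n}\, r_v = \Phi(X_v,\sigma_v)+o(1)$ uniformly in $v$. Here $\Phi(x,s)$ collects the $\varphi_i(x)$ for the $+$ columns and $\pm\varphi_i(x)$ for the $-$ columns, with sign according to $s$. The $\varphi_i$ are continuous by Lemma~\ref{lem: Continuity}, and bounded on compact $\cX$.

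Next I would study the ideal point cloud $\{\Phi(X_v,\sigma_v)\}$ at scale $1$, which corresponds to scale $1/\sqrt n$ for the rows $r_v$. There are two facts to establish. \emph{Separation:} since $(a-b)\kappa_*/2\in B$, one coordinate of $\Phi(x,s)$ is $\pm\varphi_*(x)$, with the sign determined by $s$. By \eqref{eq: phi_star_strictpositive} and compactness, $\varphi_*\ge \delta>0$, so $\Vert\Phi(x,1)-\Phi(y,2)\Vert\ge 2\delta$ for all $x,y$. \emph{Connectivity within a cluster:} each map $x\mapsto\Phi(x,s)$ is uniformly continuous, and $\cX$ is connected with $\operatorname{supp}\mu=\cX$ by Assumption~\ref{ass: ConnectedLatent}. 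Choose $c_1<\delta$. By uniform continuity there is $\eta>0$ such that $\Vert x-y\Vert<\eta$ forces $\Vert \Phi(x,s)-\Phi(y,s)\Vert<c_1/4$. Cover $\cX$ by finitely many balls of radius $\eta/2$ centred in $\cX$; each has positive $\mu$-mass $\ge p_0$ by full support. Set $c_2 < p_0/4$, so that by concentration every ball contains at least $c_2 n$ points of each cluster a.a.s. Connectedness of $\cX$ gives that the intersection graph of these balls is connected. Consequently, in the ideal cloud, every point is a core point and each cluster's points form one $\varepsilon$-connected component. Since $c_1<\delta$, no $\varepsilon$-link crosses clusters.

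The main obstacle is transferring this from the ideal to the empirical embedding, because Proposition~\ref{prop: EigenvectorsMatch} controls errors only in Frobenius norm, not uniformly. Write $e_v = \sqrt n\,\Vert \hat r_v\hat{\bO}-r_v\Vert$. Then $\sum_v e_v^2 = n\cdot o(1)$, so by Markov's inequality the set $\mathcal{B}$ of vertices with $e_v > c_1/8$ has cardinality $o(n)$. For the good vertices $v\notin\mathcal{B}$, distances change by at most $c_1/4$, so I rescale constants: compare with the ideal cloud at scales $c_1/2$ and $2c_1$. With this, good vertices remain core points, since their ideal neighbourhoods lose only $o(n)$ members. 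Good vertices in the same ball-chain remain $\varepsilon$-connected, and no good vertex of cluster $1$ lies within $\varepsilon$ of a good vertex of cluster $2$. A bad vertex might still bridge the two clusters as a non-core point, but non-core points do not propagate connectivity. A bad core point needs $c_2 n$ neighbours. If it had neighbours of both clusters among the good vertices, those would be within $2\varepsilon<2\delta$ of each other, which is impossible. Bad core points could still chain among themselves, but the bad set has size $o(n)<c_2 n$, so every core point has mostly good neighbours. Such a chain therefore attaches, through each of its members, to good vertices within $\varepsilon$, and all those good vertices must lie in a single cluster; this deserves a careful check.

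Putting this together gives two clusters $\cC_1,\cC_2$, each containing all good vertices of one label, so $n/2-o(n)$ vertices by the law of large numbers, each with $o(n)$ misassigned vertices. All remaining vertices, noise or small clusters, lie in $\mathcal{B}$ and hence number $o(n)$. This yields (a) and (b), with $c_1,c_2$ depending only on $K,\mu,a,b,B$.
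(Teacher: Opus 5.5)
Your proposal is correct and follows essentially the paper's route: Procrustes alignment plus a Frobenius/Markov bound gives $n-o(n)$ good vertices, and separation comes from the strict positivity of $\varphi_*$. Core-ness and within-cluster density connectivity follow from uniform continuity and a finite cover of the connected support, and cross-cluster chains are excluded because every core point must have good neighbours once the bad set is $o(n)<c_2n$; your labelling of core points by their good neighbours is an equivalent form of the paper's argument that all core points lie in one of two regions $\mathcal{R}_1,\mathcal{R}_2$ at mutual distance $>\epsilon$. To close your ``careful check'', note that good neighbours of two $\epsilon$-adjacent core points can be up to $3\epsilon$ apart, so you need $3c_1$ plus the $c_1/4$ error allowance to stay below the separation $2\delta$; hence tighten $c_1<\delta$ to, say, $c_1<\delta/2$ (matching the paper's $C_1<\varepsilon_0/3$), which is harmless since $c_1$ is yours to choose.
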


\begin{algorithm}[h!]
\fontsize{10.5pt}{12.5pt}\selectfont
\caption{Density-based spectral clustering (\texttt{DBSPEC})}
\label{alg:high-density}
\begin{algorithmic}[1]
\Require The adjacency matrix $\bA \in \{0,1\}^{n\times n}$, a set $\Lambda_n\subseteq \bbR$, and parameters $\epsilon, \texttt{MinPts} > 0$
\Ensure A partition $\sC$ of the vertex set $\{1,\ldots,n\}$

\State Compute eigenvalues $\hat{\lambda}_1, \ldots, \hat{\lambda}_n \in \bbR$ and corresponding eigenvectors $\hat{\psi}_1,\ldots,\hat{\psi}_n \in \bbR^n$ of $\bA$
\State $\cJ_n \gets \{j \leq n : \hat{\lambda}_j \in \Lambda_n\}$
\For{$w = 1,2,\ldots,n$}
    \State $\hat{V}_w \gets [(\hat{\psi}_j)_w : j \in \cJ_n]$
\EndFor
\State $\sC \gets$ the clustering obtained by applying \texttt{DBSCAN} to the points ${\hat{V}_w}$ with parameters $\epsilon$, $\texttt{MinPts}$, and using the Euclidean distance

\end{algorithmic}
\end{algorithm}

\begin{remark}\label{rem: AdmissibleRange}
    The algorithm is not overly sensitive to the choice of the parameters $\epsilon$ and $\texttt{MinPts}$. 
    More precisely, it follows from the analysis that there exists some $C_1 >0$ such that for every fixed $0<c_1< C_1$ every choice of $ \epsilon \in [c_1/\sqrt{n},C_1/\sqrt{n}]$ will work as $n\to \infty$. 
    Moreover, there then exists a $C_2 > 0$ depending on $c_1$ such that for every fixed $c_2>0$ every choice of $\texttt{MinPts} \in [c_2n,C_2n]$ will work as $n\to \infty$.  
\end{remark}

\begin{example}\label{example: Separation}
    Let us illustrate how multi-dimensional spectral embeddings in the new algorithm can be advantageous when the ideal eigenvalue exhibits bad separation from other eigenvalues.
    We revisit the soft geometric block model from Example~\ref{example: Geometric} with $K(x,y)=\bb 1\{\|x-y\|\le R\}$ and let $\cX \de \bbT^1$ be a one-dimensional torus $\bbT^1 \de \{(x,y)\in \mathbb{R}^2: x^2 + y^2 = 1  \}$ (\ie a circle).   
    The eigenvectors in a simulated example for a specific choice of parameters are visualized in Figure \ref{fig:03}. 
    Observe that none of the individual eigenvectors $\psi_i$ close to the ideal eigenvalue produces a clear separation of communities.  
    On the other hand, if the eigenvectors are combined into a $3D$ spectral embedding $V_w=((\psi_2)_w,(\psi_3)_w,(\psi_4)_w)$, then the two communities become cleanly separated. 
    Algorithm~\ref{alg:high-density} would here give accuracy $100\%$, while a sign-based clustering algorithm such as \texttt{HOSC}~\cite{avrachenkov2021higher} using the single eigenvector $\psi_3$ would obtain accuracy only $64\%$.
    
    Additional details related to this example are given in the appendix. We characterize there the conditions under which the ideal eigenvalue is simple.
    In particular, we prove that $ b=a (1-\operatorname{sinc}(2\pi R))/(1+\operatorname{sinc}(2\pi R))$ yields a non-simple eigenvalue in the limit. 
    The qualitative phenomena of this example also serve as a caricature of the expected behavior when the ideal eigenvalue is simple in the limit but has small separation relative to the sampling noise. 
\end{example}

\begin{figure}[htbp]
    \centering
    \begin{minipage}[t]{0.49\textwidth}
        \centering
        \includegraphics[width=0.99\textwidth]{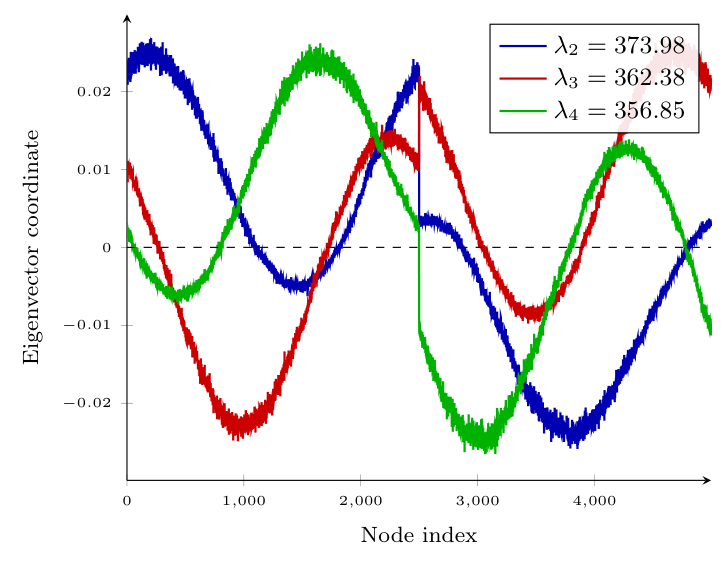} 
    \end{minipage}
    \hfill
    \begin{minipage}[t]{0.49\textwidth}
        \centering
        \includegraphics[width=0.99\textwidth]{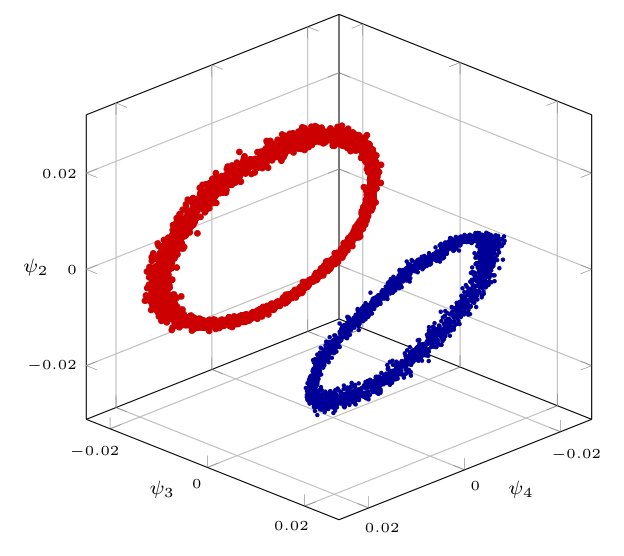} 
    \end{minipage}
    \caption{
    As described in Example \ref{example: Separation}, we consider a simulation of the soft geometric block model on $\cX = \bbT^1$. 
    We take parameters $n=5000$, $R=1/2\pi$, $a=0.5$, and    
    $b=a (1-\operatorname{sinc}(2\pi R))/(1+\operatorname{sinc}(2\pi R))$.  
    It can be shown that the ideal eigenvalue location should here be $362.42$.
    The observed adjacency matrix in-fact has three eigenvalues close to this location: the second largest eigenvalue $\hat{\lambda}_2 \approx 373.98$, third largest $\hat{\lambda}_3 \approx  362.38$ and fourth largest $\hat{\lambda}_4 \approx 356.85$. 
    For comparison, we have $\hat{\lambda}_1 \approx 435.89$ and $\hat{\lambda}_5 \approx 308.20$.   
    (Left) Coordinates of the corresponding eigenvectors $\psi_2$, $\psi_3$, and $\psi_4$, after ordering vertices according to their cluster label and position in $\bbT_1$.
    (Right) Plot of the points $V_w=((\psi_2)_w, (\psi_3)_w,(\psi_4)_w)$ with colors given by the true communities.}
    \label{fig:03}
\end{figure}

\begin{remark}\label{rem: MutlipleClusters}
    Our focus on two clusters is for simplicity, as this case covers all relevant insights with minimal notational burden, but the analysis extends beyond this setting. 
    Suppose that the cluster assignments $\sigma_v$ are in $\{1,\ldots,k \}$ independently with $\mathbb{P}(\sigma_v = j) = q_j$ for $q_1,\ldots,q_k>0$ with $\sum_i q_i = 1$, and that we replace $\bP$ from \eqref{eq: Def_P} by a general symmetric $k\times k$ matrix. 
    Then, using the same argumentation, the limiting operator $\mathbb{A}$ may be replaced by $\tilde{\bP} \otimes \bbK$ with $\tilde{\bP} \de  \bP \operatorname{diag}(q_1,\ldots,q_k) $ implying that the limiting measure in Proposition \ref{prop: EigenvaluesMatch} will be replaced by $\sum_{j=1}^k \sum_{i=1}^\infty \delta_{\tau_j\kappa_i}$ with $\tau_1,\ldots,\tau_k$ the eigenvalues of $\tilde{\bP}$.
    Further, if Assumption \ref{ass: Nontrivial} is generalized to impose that the symmetric matrix $\bP$ has no repeated rows, then a performance guarantee like Theorem \ref{thm: AlgGuarantee} can be derived with the condition $(a-b)\kappa_*/2 \in B$ replaced by $\{\tau_j\kappa_*: j\leq k, \, \tau_j\neq 0 \} \subseteq B$.
    Algorithm \ref{alg:high-density} does not require modification. 
    Arguments for this generalization are given in the appendix.
\end{remark}

    \section{Proof of Propositions \ref{prop: EigenvaluesMatch} and \ref{prop: EigenvectorsMatch}}\label{sec: ProofOperator}     
    
    Denote $\Vert \cdot \Vert$ the spectral norm of a matrix, i.e., $\Vert \bM \Vert = \sup_{x\neq 0} \Vert \bM x \Vert/ \Vert x \Vert$, where $\Vert x \Vert$ denotes the Euclidean norm of $x$. 
    The following preliminary reduction allows us to replace $\bA$ by its conditional expectation given the latent variables $X_v$ and the cluster assignments $\sigma_v$:   
    \begin{lemma}\label{lem: MatrixBernstein}
        Assume that $\rho_n = \omega(\ln(n)/n)$. 
        Then, $\Vert  \bA  - \bbE[\bA  \mid X,\sigma]  \Vert/(n\rho_n) \to 0$ in probability. 
    \end{lemma}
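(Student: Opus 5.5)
Conditionally on $X$ and $\sigma$, the matrix $\bA - \bbE[\bA\mid X,\sigma]$ is symmetric with independent centered entries above the diagonal. I will bound its spectral norm with a matrix Bernstein inequality applied conditionally, and then integrate over $(X,\sigma)$. Because all bounds will hold uniformly in the conditioning, integrating out is trivial.

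\textbf{Decomposition.} Write
\[
\bA - \bbE[\bA\mid X,\sigma] = \sum_{v<w} \bZ_{vw},
\qquad
\bZ_{vw} \de (\bA_{vw} - p_{vw})(\mathbf{e}_v\mathbf{e}_w^\top + \mathbf{e}_w\mathbf{e}_v^\top),
\]
where $p_{vw} = \rho_n K(X_v,X_w)\bP(\sigma_v,\sigma_w)\le \rho_n$. If the diagonal of $\bbE[\bA\mid X,\sigma]$ is nonzero under some convention, that diagonal part has norm at most $\rho_n = o(n\rho_n)$, so it can be discarded.

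\textbf{Bernstein inputs.} The summands are independent given $(X,\sigma)$, centered, and satisfy $\Vert \bZ_{vw}\Vert\le 1$. For the variance, $\bZ_{vw}^2 = (\bA_{vw}-p_{vw})^2(\mathbf{e}_v\mathbf{e}_v^\top+\mathbf{e}_w\mathbf{e}_w^\top)$. Hence $\sum_{v<w}\bbE[\bZ_{vw}^2\mid X,\sigma]$ is diagonal, with $v$-th entry $\sum_{w\ne v} p_{vw}(1-p_{vw}) \le n\rho_n$. The variance proxy is therefore at most $\sigma^2 = n\rho_n$.

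\textbf{Tail bound.} Matrix Bernstein gives
\[
\bbP\bigl(\Vert \bA - \bbE[\bA\mid X,\sigma]\Vert \ge t \mid X,\sigma\bigr)\le 2n\exp\Bigl(-\frac{t^2/2}{n\rho_n + t/3}\Bigr).
\]
Take $t = \delta n\rho_n$ for a fixed $\delta>0$. The exponent is then of order $\delta^2 n\rho_n/(2+2\delta/3)$, which is $\omega(\ln n)$ because $\rho_n=\omega(\ln(n)/n)$. So the right-hand side is $2n\exp(-\omega(\ln n))\to 0$, uniformly in $(X,\sigma)$. Taking expectation over $(X,\sigma)$ shows the unconditional probability tends to $0$ for every $\delta>0$, which is exactly convergence in probability of $\Vert \bA - \bbE[\bA\mid X,\sigma]\Vert/(n\rho_n)$ to $0$.

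\textbf{Main point to check.} Matrix Bernstein only yields a spectral-norm bound of order $\sqrt{n\rho_n\ln n}$. This is $o(n\rho_n)$ precisely under the superlogarithmic degree assumption, and that is where the hypothesis enters. No sharper tools, such as those of Bandeira--van Handel or regularization arguments, are needed, since the target is only the crude scale $o(n\rho_n)$ rather than the optimal $\sqrt{n\rho_n}$.
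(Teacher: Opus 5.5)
Your proposal is correct and matches the paper's proof essentially line by line: the same decomposition into conditionally independent summands $\bY(v,w)$ with $\Vert \bY(v,w)\Vert\le 1$, the same diagonal variance proxy bounded by $n\rho_n$, and the same matrix Bernstein tail $2n\exp\bigl(-\tfrac{t^2/2}{n\rho_n+t/3}\bigr)$ combined with $n\rho_n=\omega(\ln n)$. Your explicit choice $t=\delta n\rho_n$, and your remark that the bound is uniform in $(X,\sigma)$ and therefore survives integrating out the conditioning, fill in details the paper leaves implicit.
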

    \begin{proof}
        One can decompose $\bA -\bbE[\bA \mid X,\sigma]$ into a sum of $\binom{n}{2}$ matrices that are conditionally independent given the latent variables $X_v$ and the cluster assignments $\sigma_v$:
        \begin{equation} 
             \bA-\bbE[\bA \vert X,\sigma]  = \sum_{v<w}  \bY(v,w)    \textnormal{ with }  \bigl(\bY(v,w)\bigr)_{i,j} = 
            \begin{cases}
                \bA_{v,w} -\bbE[\bA_{v,w}\vert X,\sigma] & \text{if } \{i,j \} = \{v,w \},\\ 
                0& \text{else}.
            \end{cases}\label{eq:TrustyLion}
        \end{equation}     
        Note that $\Vert \bY(v,w) \Vert \leq 1$. 
        Further, $\Vert \sum_{v<w} \bbE[\bY(v,w)^2 \mid X,\sigma] \Vert \leq n\rho_n$ by a direct computation using that the operator norm of a diagonal matrix is the maximum of its diagonal entries.  
        The matrix Bernstein inequality \cite[Theorem 1.6]{tropp2012user} hence yields that for every $t>0$, 
        \begin{align} 
            \bbP\Bigl(\Vert  \bA  - \bbE[\bA \mid X,\sigma]  \Vert \geq t \Bigr) \leq 2 n \exp\Bigl(\frac{-t^2/2}{n\rho_n  + t/3}\Bigr). \label{eq:BlueTea}
        \end{align}
        Using that $\rho_n = \omega(\ln(n)/n)$ in \eqref{eq:BlueTea} now yields the desired result.  
    \end{proof}
    
    Let $\cX \sqcup \cX \de \{(x,1): x\in \cX  \} \cup \{(x,2): x\in \cX \}$ denote the discrete union of two copies of the latent space $\cX$, and define a measurable function $\sA:(\cX \sqcup \cX) \times (\cX \sqcup \cX) \to [0,1]$ by  
    \begin{align} 
        \sA\bigl((x_1,\sigma_1), (x_2,\sigma_2)) \de  K(x_1,x_2) \bP(\sigma_1,\sigma_2).  
    \end{align}
    Then, the operator $\bbA$ defined in \eqref{eq: Def_bbA} is the integral operator associated with the kernel $\sA$ and the probability measure $\mu \otimes \operatorname{Unif}\{1,2\}$. 
    Further, by definition of the connection probabilities \eqref{eq: Def_bbA},  
    \begin{align} 
        \bbE\bigl[\bA_{v,w} \mid  X,\sigma \bigr]= 
        \begin{cases}
            \rho_n \sA\bigl((X_v, \sigma_v), (X_w, \sigma_w) \bigr) & \textnormal{ if }v\neq w, \\ 
            0 & \textnormal{ if }v=w.
        \end{cases}\label{eq:AvidMan}
     \end{align}
     Propositions \ref{prop: EigenvaluesMatch} and \ref{prop: EigenvectorsMatch} now follow from results by Koltchinskii and Gin\'e \cite{koltchinskii2000random} and Koltchinskii \cite{koltchinskii1998asymptotics} regarding eigenvalues and eigenvectors of matrices sampled from an integral operator: 
     \begin{proof}[Proof of \texorpdfstring{Proposition \ref{prop: EigenvaluesMatch}}{Proposition}]
        Consider the \emph{rearrangement distance} on $\ell^2$ sequences, given by  $
            \delta_2(x,y) \allowbreak \de \inf_{\pi} \sqrt{\sum_{i} (x_i -y_{\pi(i)})^2} 
        $
        where $\pi$ runs over permutations. 
        It is shown in \cite[Theorem 3.1]{koltchinskii2000random} that if $\bH = (\bb1\{v\neq w \} h(X_v,X_w))_{v,w=1}^n$ is a matrix sampled from an arbitrary Hilbert--Schmidt integral operator with symmetric kernel $h$, then the sequence of eigenvalues of $\bH/n$ converges to that of the integral operator with respect to the rearrangement distance.
        Applying this with $h = \sA$ shows that the eigenvalues of $\bbE\bigl[\bA \mid  X,\sigma \bigr]/(n\rho_n)$ converge to those of $\bbA$.      
        In particular, this implies convergence of the empirical counts for sets $B$ with $\nu(\partial B) = 0$ and $0\not\in \overline{B}$ as are considered in Proposition \ref{prop: EigenvaluesMatch}.

        It remains to transfer this result from $\bbE[\bA\mid X,\sigma]$ to the eigenvalues of $\bA/(n\rho_n)$. 
        The assumption that $\nu(\partial B) = 0$ implies that we can find some $\varepsilon$ such that every eigenvalue of $\mathbb{A}$ is at distance $> \varepsilon$ from $\partial B$. 
        The aforementioned convergence in rearrangement distance hence also implies that the eigenvalues $\tilde{\lambda}_1\leq  \cdots\leq \tilde{\lambda}_n$ of $\bbE[\bA \mid X,\sigma]/(n\rho_n)$ have distance $>\varepsilon$ from $\partial B$ for large $n$. 
        It hence suffices to show that the eigenvalues  $\hat{\lambda}_1 \leq\ldots\leq \hat{\lambda}_n$ of $\bA/(n\rho_n)$ satisfy $\vert  \hat{\lambda}_i - \tilde{\lambda}_i \vert \leq \varepsilon$ for all large $n$. 
        Indeed, this follows from 
        Lemma \ref{lem: MatrixBernstein} since Weyl's  perturbation inequality \cite[Theorem 4.3.15]{horn2012matrix} implies that $\vert  \hat{\lambda}_i - \tilde{\lambda}_i \vert \leq \Vert (\bA -\bbE[\bA \mid X,\sigma])/(n\rho_n) \Vert$.
     \end{proof}

    \begin{proof}[Proof of \texorpdfstring{Proposition \ref{prop: EigenvectorsMatch}}{Proposition}]
        This proceeds similarly to the proof of Proposition \ref{prop: EigenvaluesMatch} except that we now use \cite[Theorem 2.1]{koltchinskii1998asymptotics} to control the eigenvectors of $\bbE\bigl[\bA \mid  X,\sigma \bigr]/(n\rho_n)$ and then combine  Lemma \ref{lem: MatrixBernstein} with the Davis--Kahan theorem \cite[Theorem 2]{yu2015useful}. 

        The result of \cite{koltchinskii1998asymptotics} is stated in the terminology of empirical process theory and involves suprema over a function class.
        It may not be immediately obvious that it gives the claimed estimate. 
        To this end, consider the function class $\cF$ given by the finite set consisting of the eigenfunctions $(\varphi_i, \pm \varphi_i)$ of $\bbA$ with eigenvalue $(a\pm b)\kappa_i/2$ in $B$.\footnote{This function class satisfies the conditions of the theorem because we have considered finitely many continuous functions $\varphi_i$ on a compact set. (For instance, this immediately implies that the functions are uniformly bounded.)}
        Then, it follows from \cite[Theorem 2.1, Equation (2.2)]{koltchinskii1998asymptotics} and the orthogonality of eigenfunctions that with $\psi_i^\pm$ as in \eqref{eq: Def_psi}, 
         \begin{align} 
            \max_{i,j}\max_{s,t = \pm} \lvert \langle \tilde{\bP}_B \psi_i^{s}, \psi_j^{t}\rangle - \bb1\{i=j,\,  s =t \}  \rvert \to 0, \label{eq:IvoryPop}
        \end{align}
        almost surely.
        Here, $\tilde{\bP}_B$ is the orthogonal projection on the eigenspace of $\bbE[\bA \mid X,\sigma]/(n\rho_n)$ for the eigenvectors with eigenvalues in $B$. 
        Equivalently, $\Vert \Psi_B^{T}\tilde{\bP}_{B} \Psi_B -\bI \Vert_{F} \to 0$.

        Let $\tilde{\Psi}_B$ have columns given by the eigenvectors of $\bbE[\bA \mid X,\sigma]/(n\rho_n)$ with eigenvalues in $B$. 
        Then, $\tilde{\bP}_B  = \tilde{\Psi}_B\tilde{\Psi}_B^{\T}$ so \eqref{eq:IvoryPop} is equivalent to  $\bQ \de \tilde{\Psi}_B^{\T}\Psi_B$ satisfying $\bQ^{\T}\bQ \to \bI$. 
        Hence, $\min_{\tilde{\bO}} \Vert \bQ - \tilde{\bO} \Vert_{\F}\to 0$ where the minimum runs over orthogonal matrices.  
        Further, using that $\Vert \bM \Vert_{\F}^2 = \Tr(\bM^{\T} \bM)$ as well as the definition of $\bQ$ and the orthogonality of the eigenvectors $\tilde{\psi}_i$, 
        \begin{align} 
            \Vert \Psi_B - \tilde{\Psi}_{\bB}\bQ  \Vert_{\F}^2 &= \Tr\bigl( (\Psi_{\bB} - \tilde{\Psi}_{\bB}\bQ)^{\T} (\Psi_{\bB} - \tilde{\Psi}_{\bB}\bQ) \bigr) \nonumber\\ 
            &= \Tr(   \Psi_B^{\T}\Psi_B - \Psi_{\bB}^{\T} \tilde{\Psi}_{\bB} \bQ -  \bQ^\T \tilde{\Psi}_{\bB}^{\T}\Psi_{\bB} +  \bQ^\T \tilde{\Psi}_{\bB}^{\T} \tilde{\Psi}_{\bB} \bQ  )  =\Tr( \Psi_B^{\T}\Psi_B- \bQ^{\T}\bQ ).  
        \end{align}
        Note that $\Psi_B^{\T}\Psi_B\to \bI$ in probability by using the law of large numbers with the definition \eqref{eq: Def_psi} and the orthogonality of the limiting eigenfunctions. 
        Hence, using that $\bQ^{\T}\bQ \to \bI$ and $\min_{\tilde{\bO}} \Vert \bQ - \tilde{\bO} \Vert_{\F}\to 0$  we have $\min_{\tilde{\bO}}\Vert \Psi_B- \tilde{\Psi}_B\tilde{\bO} \Vert_{\F} \to 0$.

        Further, Proposition \ref{prop: EigenvaluesMatch} and the assumption that $\nu(\partial B)= 0$ implies an asymptotically nonzero spectral gap near $B$.  
        Hence, by Lemma \ref{lem: MatrixBernstein} with the Davis--Kahan theorem \cite[Theorem 2]{yu2015useful}, we have $\min_{{\bO}}\Vert \tilde{\Psi}_B-\hat{\Psi}_B\bO \Vert_{\F} \to 0$ in probability. 
        Thus, $\min_{\hat{\bO}}\Vert \Psi_B- \hat{\Psi}_B\hat{\bO} \Vert_{\F} \to 0$. 
    \end{proof} 
    \section{Proof of \texorpdfstring{Theorem \ref{thm: AlgGuarantee}}{Theorem}}\label{sec: ProofAlgGuarantee}
    We shall show that the situation observed in Figure \ref{fig:03} always occurs: the spectral embeddings concentrate near a connected set for nodes in a fixed cluster, and there is a minimal spacing between the different clusters.    
    It is intuitive that this implies that \texttt{DBSPEC} with well-chosen parameters is consistent, which we will make rigorous.

    \subsection{Preliminaries}
    For easy reference in the subsequent proofs, we here collect some standard facts regarding the \texttt{DBSCAN} algorithm and regarding the eigenfunctions of $\bbK$.
    \subsubsection{Properties of \texttt{DBSCAN}}
    Given a finite set of points $\sP$ in a Euclidean space as well as parameters $\epsilon, \texttt{MinPts}>0$, we recall the following definitions due to the classical paper that introduced \texttt{DBSCAN} \cite[Section 3]{ester1996density}. 

    \begin{definition}\label{def: Neighborhood_Corepoint}
        The \emph{neighborhood} of a point $p\in \sP$ is $N_{\epsilon}(p) \de \{q \in \sP: \Vert p - q \Vert \leq \epsilon  \}$.
    \end{definition}
    \begin{definition}\label{def: Corepoint}
        We say that $p\in \sP$ is a \emph{core point} if $\# N_{\epsilon}(p) \geq \texttt{MinPts}$.  
    \end{definition}
    \begin{definition}\label{def: DensityReach}
        A point $p \in \sP$ is \emph{directly density reachable from $c\in \sP$} if $c$ is a core point and $p\in N_{\epsilon}(c)$.
        Moreover, $p\in \sP$ is \emph{density reachable} from $c$ if there exist $c_1,\ldots,c_n \in \sP$ with $c_1 = c$ and $c_n = p$ such that $c_{i+1}$ is directly density reachable from $c_i$ for every $i$.  
    \end{definition}
    \begin{definition}\label{def: DensityConnected}
        Points $p,q\in \sP$ are \emph{density connected} if there is a point $c\in \sP$ such that both $p$ and $q$ are density reachable from $c$. 
    \end{definition}    
    
    The output of \texttt{DBSCAN} is not entirely deterministic, since the assignment of points $p$ that are density reachable from core points $c_1,c_2\in \cP$ which are not themselves density connected can depend on the order that data was processed; see \cite[p.229--230]{ester1996density}.
    This ambiguity does not affect core points, nor does it affect points that are not density reachable from any core point. 
    Hence, the following guarantee follows from \cite[Definition 6 \& Lemma 2]{ester1996density}:
    \begin{lemma}\label{lem: DBSCAN}
        A partition $\sC$ of $\sP$ output by \texttt{DBSCAN} consists of sets $\cC_1,...,\cC_k, \cC_{\textrm{noise}} \subseteq \sP$ satisfying the following properties: 
        \begin{enumerate}
            \item It holds that $p \in \cC_{\textrm{noise}}$ if and only if $p$ is not density reachable from any core point.\label{item: Noise} 
            \item Core points $c_1,c_2 \in \sP$ belong to the same part if and only if they are density connected. \label{item: CorePoint} 
        \end{enumerate}
    \end{lemma}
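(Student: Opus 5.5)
The plan is to read the two properties directly off the mechanics of \texttt{DBSCAN} as described in \cite[Section 4]{ester1996density}, using the cluster notion of \cite[Definition 5 \& 6]{ester1996density} together with \cite[Lemma 2]{ester1996density}.

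\textbf{The procedure.} Recall that the algorithm scans the points of $\sP$ in some order. When it meets a not-yet-assigned core point $c$, it opens a new part. It then grows that part by repeatedly adding the $\epsilon$-neighborhoods of the core points already collected. Points never collected in any such expansion are declared noise.

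\textbf{Each part is a density-reachable set.} First I will show by induction on the expansion steps that the part opened at $c$ is contained in the set $D(c)$ of points density reachable from $c$. Each newly added point lies in $N_\epsilon(c')$ for a core point $c'$ already known to be density reachable from $c$. Concatenating chains as in Definition~\ref{def: DensityReach} gives the claim. Conversely, every $p\in D(c)$ is eventually added, by induction on the length of a chain $c=c_1,\ldots,c_m=p$. Each intermediate $c_i$ with $i<m$ is a core point, so its neighborhood is expanded. The only exception is that $p$ may already have been claimed by an earlier part. This is exactly \cite[Lemma 2]{ester1996density}: a part is $D(c)$ up to the reassignment of such non-core points.

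\textbf{Property (1).} A point is left unassigned if and only if it is never added during any expansion. By the previous paragraph, this happens if and only if it lies in no $D(c)$ for a core point $c$, which is \eqref{item: Noise}.

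\textbf{Property (2): symmetry between core points.} For core points, density reachability is symmetric. If $c_2$ is reached from $c_1$ via core points $c_1,\ldots,c_m=c_2$, then each consecutive pair lies within distance $\epsilon$ and both are core. Hence the reversed chain is also valid. It follows that for core points, density connectedness coincides with mutual density reachability. Indeed, if $c_1,c_2\in D(c)$ with $c_1,c_2$ core, then $c\in D(c_1)$ by symmetry, hence $c_2\in D(c_1)$.

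\textbf{Property (2): conclusion.} If $c_1,c_2$ are density connected core points, they are reachable from one another. So whichever of them is processed first, the expansion of its part necessarily absorbs the other. A core point, once absorbed, is never reassigned, because it is never in the ambiguous situation: its part is determined by reachability alone. Conversely, if two core points lie in the same part, both lie in $D(c)$ for the opening core point $c$, so they are density connected.

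\textbf{Main obstacle.} The only delicate point is the order-dependence. A non-core border point may lie in $N_\epsilon$ of core points from two different parts and is assigned to whichever part is expanded first. I will handle this by noting that the induction above only requires $p\in D(c)\Rightarrow p$ belongs to \emph{some} part. For core points, the reversibility argument shows that the part is uniquely determined, so neither statement is affected.
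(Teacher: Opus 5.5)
Your proposal is correct and takes the same route as the paper. The paper simply invokes \cite[Definition 6 \& Lemma 2]{ester1996density} and notes that the order-dependence of \texttt{DBSCAN} only affects non-core points reachable from core points that are not density connected; your argument via the symmetry of density reachability among core points spells this out in more detail.

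One step could be tightened: ``the expansion of its part necessarily absorbs the other.'' The cleanest justification is that the first core point of a mutual-reachability class to be assigned must itself open its part. Hence no member of that class is yet assigned when that part is expanded, and since the expansion runs to completion, the part absorbs the whole class.
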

    Note that item \ref{item: Noise} together with the fact that $\sC$ is a partition implies that every point that is density reachable from a core point belongs to some $\cC_i$ with $i\leq k$. 
    In particular, this is the case for the core points themselves.  
    
    \subsubsection{Basic properties of the eigenfunctions}
    \begin{lemma}\label{lem: Continuity}
        Every eigenfunction associated with a non-zero eigenvalue of $\mathbb{K}$ has a continuous representative. 
    \end{lemma}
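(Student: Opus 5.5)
The plan is to use the eigenfunction equation to write $\varphi$ as an integral against the kernel and inherit continuity from Assumption~\ref{ass: ContinuityKernel}. Let $\varphi \in L^2(\cX,\mu)$ satisfy $\mathbb{K}\varphi = \kappa\varphi$ with $\kappa \neq 0$. Then $\varphi = \kappa^{-1}\mathbb{K}\varphi$ in $L^2(\cX,\mu)$. Define the candidate representative pointwise by
\begin{align*}
    \tilde{\varphi}(x) \de \frac{1}{\kappa}\int_{\cX} K(x,y)\varphi(y)\,\intd\mu(y), \qquad x\in \cX.
\end{align*}
This is well defined for every $x$, not just $\mu$-almost every $x$. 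Indeed, $K$ takes values in $[0,1]$ and $\mu$ is a probability measure, so $K(x,\cdot)\in L^2(\cX,\mu)$, and Cauchy--Schwarz makes the integral finite. Moreover $\tilde{\varphi} = \kappa^{-1}\mathbb{K}\varphi = \varphi$ as elements of $L^2(\cX,\mu)$, so $\tilde{\varphi}$ is a representative of $\varphi$.

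It remains to check that $\tilde{\varphi}$ is continuous. Fix $x\in\cX$ and let $x'\in\cX$ tend to $x$. By Cauchy--Schwarz,
\begin{align*}
    \lvert \tilde{\varphi}(x) - \tilde{\varphi}(x')\rvert \leq \frac{1}{\lvert\kappa\rvert}\Bigl(\int_{\cX}\bigl(K(x,y)-K(x',y)\bigr)^2\,\intd\mu(y)\Bigr)^{1/2}\Vert\varphi\Vert_{L^2(\mu)}.
\end{align*}
Assumption~\ref{ass: ContinuityKernel} states exactly that $x\mapsto K(x,\cdot)$ is continuous into $L^2(\cX,\mu)$. Hence the right-hand side tends to $0$, and continuity follows.

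The only delicate point is the interpretation of Assumption~\ref{ass: ContinuityKernel}: the phrase ``$K(x+\varepsilon,y)$'' should be read as continuity of $x\mapsto K(x,\cdot)$ as a map from $\cX$ (with the Euclidean topology) into $L^2(\cX,\mu)$. I will state that reading explicitly and note that only sequences $x'\to x$ within $\cX$ are needed. No compactness or connectedness is used here. As a by-product, $\lvert\tilde{\varphi}(x)\rvert \leq \Vert\varphi\Vert_{L^2}/\lvert\kappa\rvert$, so these representatives are also uniformly bounded, which the later arguments and the footnote in the proof of Proposition~\ref{prop: EigenvectorsMatch} use.
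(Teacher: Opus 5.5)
Your proof is correct and follows essentially the same route as the paper: both use the eigenfunction equation to write $\varphi$ as $\kappa^{-1}\int_{\cX} K(\cdot,y)\varphi(y)\,\intd\mu(y)$, bound increments by Cauchy--Schwarz with $\lvert\kappa\rvert^{-1}\Vert K(x,\cdot)-K(x',\cdot)\Vert_{L^2(\cX,\mu)}\Vert\varphi\Vert_{L^2}$, and conclude from Assumption~\ref{ass: ContinuityKernel}. Your explicit pointwise definition of the representative $\tilde{\varphi}$ makes precise what the paper uses implicitly when it evaluates the eigenfunction equation at every point.
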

    \begin{proof}
        Let $\varphi$ be eigenfunction with eigenvalue $\kappa$, and suppose that $\varphi$ is normalized to have unit norm in $L^2(\cX, \mu)$.
        Then, by the eigenvector equation, $\varphi(x) - \varphi(y) = \frac{1}{\kappa}\int_{\cX} \varphi(z) (K(x,z) - K(y,z) \intd \mu(z)$. 
        Hence, by Cauchy--Schwarz, $\lvert\varphi(x) - \varphi(y)  \rvert \leq \frac{1}{\lvert \kappa \rvert} \Vert K(x,\cdot) - K(y,\cdot) \Vert_{L^2(\cX,\mu)}$.
        By Assumption~\ref{ass: ContinuityKernel}, the desired result follows.
     \end{proof}
    \begin{lemma}\label{lem: StrictPos}
        The eigenfunction $\varphi_*$ for the greatest eigenvalue of $\mathbb{K}$ is strictly positive.  
    \end{lemma}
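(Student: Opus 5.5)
We prove Lemma~\ref{lem: StrictPos} with a Perron--Frobenius argument for the nonnegative kernel $K$, using the variational characterization of the top eigenvalue together with the connectedness of $\cX$ and the full support of $\mu$ (Assumption~\ref{ass: ConnectedLatent}). Concretely, we want to apply the Jentzsch-type result \cite[Theorem I]{horiguchi1996variation}. Rather than check its hypotheses abstractly, we verify positivity directly, since the structure here is simple.

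\emph{Step 1: the top eigenvalue is positive and has a nonnegative eigenfunction.} Since $\mathbb{K}$ is compact and self-adjoint, $\kappa_* = \sup_{\Vert f\Vert =1}\langle \mathbb{K} f,f\rangle$. By Assumption~\ref{ass: Nontrivial}, $\langle \mathbb{K}1,1\rangle>0$, so $\kappa_*>0$ and it is attained by some eigenfunction $\varphi$. Because $K\ge 0$, we have $\langle \mathbb{K}|\varphi|,|\varphi|\rangle \ge \langle \mathbb{K}\varphi,\varphi\rangle$. Hence $|\varphi|$ is also a maximizer, and therefore an eigenfunction with eigenvalue $\kappa_*$. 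Set $\varphi_*=|\varphi|\ge 0$, taken in its continuous representative from Lemma~\ref{lem: Continuity}. Then $\varphi_* = \kappa_*^{-1}\mathbb{K}\varphi_*$ holds pointwise.

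\emph{Step 2: the zero set is open and closed.} Let $Z=\{x:\varphi_*(x)=0\}$. It is closed by continuity. Take $x_0\in Z$. Then $0=\int K(x_0,y)\varphi_*(y)\,\intd\mu(y)$, so $K(x_0,\cdot)\varphi_*=0$ $\mu$-a.e. For $x$ near $x_0$, Cauchy--Schwarz gives
\[
\kappa_*\varphi_*(x)=\int (K(x,y)-K(x_0,y))\varphi_*(y)\,\intd\mu(y)\le \Vert K(x,\cdot)-K(x_0,\cdot)\Vert_{L^2(\mu)}\Vert\varphi_*\Vert .
\]
This bound only shows that $\varphi_*$ is small near $x_0$, not that it vanishes, so it does not by itself make $Z$ open. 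This is the main obstacle. A purely local argument cannot work for a kernel like $\bb1\{\Vert x-y\Vert\le R\}$: openness has to come from positivity of $\mathbb{K}$ propagating along the connected set.

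\emph{Step 3: propagate positivity instead.} Consider the open set $U=\{\varphi_*>0\}$, which is nonempty with $\mu(U)>0$ since $\varphi_*\neq 0$ and $\operatorname{supp}\mu=\cX$. Let $Z$ be as above; we aim to show $Z=\emptyset$. Define $g(x)=\int_{Z}K(x,y)\,\intd\mu(y)$ and $h(x)=\int_U K(x,y)\,\intd\mu(y)$. By Assumption~\ref{ass: ContinuityKernel} both are continuous, and by Assumption~\ref{ass: Nontrivial} we have $g+h>0$ everywhere.

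First, $h=0$ on $Z$. Indeed, for $x_0\in Z$, Step 2 gives $K(x_0,\cdot)\varphi_*=0$ $\mu$-a.e., and $\varphi_*>0$ on $U$, so $K(x_0,\cdot)=0$ $\mu$-a.e.\ on $U$.

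Second, suppose $Z\neq\emptyset$. We show $g=0$ on $U$ via the eigen-relation for the top eigenvalue. The function $\varphi_*\bb1_U$ equals $\varphi_*$, so both $\varphi_*$ and $\bb1_U$-supported perturbations remain in the maximizing class. The standard irreducibility argument compares $\langle\mathbb{K}(\varphi_*+t\bb1_{Z'}),\varphi_*+t\bb1_{Z'}\rangle$ for small $t$ and $Z'\subseteq Z$ of positive measure. The first-order term is $2t\int_{Z'} \kappa_*\varphi_*\,\intd\mu=0$, because $\mathbb{K}\varphi_*=\kappa_*\varphi_*$ vanishes on $Z$. Hence the decomposition into $U$ and $Z$ makes $\mathbb{K}$ block-diagonal: the cross term $\int_U\int_Z K\,\intd\mu\,\intd\mu$ vanishes, because $h=0$ on $Z$ and $K$ is symmetric. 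Thus $g=0$ $\mu$-a.e.\ on $U$, and by continuity $g=0$ on $\overline U$.

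Then on $\overline U$ we have $g=0$ and hence $h>0$, while on $Z$ we have $h=0$. Since $\overline U$ and $Z$ together cover $\cX$, any point of $\overline U\cap Z$ would satisfy both $h>0$ and $h=0$. So $\overline U\cap Z=\emptyset$, and $\cX=U\sqcup Z$ is a union of two disjoint nonempty closed sets. This contradicts the connectedness of $\cX$. Hence $Z=\emptyset$, i.e., $\varphi_*>0$ on $\cX$.
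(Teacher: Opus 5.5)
Your argument is correct, but it takes a different route from the paper. The paper does not argue directly. It checks the hypothesis of the Jentzsch-type theorem \cite[Theorem I]{horiguchi1996variation}, namely that some power of $\mathbb{K}$ has a strictly positive kernel. To do this it shows that the kernel $h(x,y)=\int K(x,u)K(u,y)\,d\mu(u)$ of $\mathbb{K}^2$ is jointly continuous with $h(x,x)>0$. It then covers $\mathcal{X}$ by finitely many open sets on which $h$ is bounded below, and chains these sets using connectedness to get positivity of the kernel of $\mathbb{K}^{2k}$. It also has to realize $(\mathcal{X},\mu)$ on the unit interval to match the cited setting.

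You instead get a nonnegative top eigenfunction from the Rayleigh quotient and $K\geq 0$. You then run a clopen argument with the one-variable continuous functions $g,h$: $h=0$ on $Z$, hence $g=0$ on $U$ by symmetry. Assumption~\ref{ass: Nontrivial} then forces $Z\cap\overline{U}=\emptyset$, which contradicts connectedness unless $Z=\emptyset$. This is self-contained and avoids the joint continuity, the chaining, and the measure-space transfer.

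What the external theorem buys is the extra Perron--Frobenius information such results usually come with, notably simplicity of $\kappa_*$, which makes ``the'' eigenfunction in the statement unambiguous. You can recover this cheaply. Your Step~3 applies to any nonzero nonnegative eigenfunction for $\kappa_*$. Applying it to $(|\varphi|\pm\varphi)/2$ for an arbitrary eigenfunction $\varphi$ of $\kappa_*$ shows that $\varphi$ has strict constant sign, so no two such eigenfunctions are orthogonal.

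Two points of presentation. First, the perturbation discussion in Step~3 (the ``maximizing class'' sentence and the first-order term) does no work and should be deleted. The vanishing cross term is just the Tonelli identity
\[
\int_U g\,d\mu=\int_U\int_Z K(y,x)\,d\mu(x)\,d\mu(y)=\int_Z\int_U K(x,y)\,d\mu(y)\,d\mu(x)=\int_Z h\,d\mu=0 .
\]
Second, passing from $g=0$ $\mu$-a.e.\ on $U$ to $g=0$ on all of $U$ uses that $U$ is open and $\operatorname{supp}\mu=\mathcal{X}$, not continuity alone.
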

    \begin{proof}
        As was announced just after \eqref{eq: phi_star_strictpositive}, the statement follows from \cite[Theorem I]{horiguchi1996variation}.
        It may, however, not be obvious that that theorem is applicable.     
        First, let us note that Theorem~I in \cite{horiguchi1996variation} is stated for the Lebesgue measure on a subset of $\bbR^n$. 
        This is however merely a matter of formulation: $(\cX,\mu)$ can be realized using a measurable map from the unit interval with the Lebesgue measure, as it is a standard probability space. 
        More significantly, one of the conditions in \cite{horiguchi1996variation} asks that there should be some $\ell \geq 1$ such that the kernel of the integral operator $\bbK^\ell$ is strictly positive almost everywhere.
        We shall show that this is the case. 

        A direct computation shows that $\bbH \de \bbK^2$ has kernel $h(x,y) = \int_\cX K(x,u) K(u,y)\intd \mu(u)$.
        Assumption \ref{ass: Nontrivial} then implies that $h(x,x)>0$ for every $x\in \cX$. 
        Further, the continuity of $x\mapsto K(x,\cdot)$ from Assumption \ref{ass: ContinuityKernel} implies that $h$ is continuous as a bivariate function. 
        In particular, every $x\in \cX$ admits some $\varepsilon_x>0$ and open neighborhood $\cU_x$ such that $h(y,z)>\varepsilon_x$ for every $y,z \in \cU_x$. 
        Using that $\cX$ is compact, one can reduce the open cover $\{\cU_x: x\in \cX\}$ to a finite cover $\{\cU_{x_1},...,\cU_{x_k}\}$ of, say, $k\geq 1$ open sets. 
        The connectivity of $\cX$ from Assumption \ref{ass: ConnectedLatent} implies that every $y,z \in \cX$ admit a sequence $i(1),\ldots,i(k)$ satisfying $\cU_{x_{i(j)}}\cap \cU_{x_{i(j+1)}} \neq \emptyset$ such that $y\in \cU_{i(1)}$ and $z \in \cU_{i(k)}$.  
        Using this and that $\mu(\cU_{x_{i(j)}}\cap \cU_{x_{i(j+1)}}) >0$ by Assumption \ref{ass: ConnectedLatent}, one can verify that $\bbK^{2k} = \bbH^k$ indeed has a strictly positive kernel. 
        This concludes the proof.     
    \end{proof}
    \subsection{Proof of the consistency guarantee}
    This section adopts the notation and assumptions from Theorem \ref{thm: AlgGuarantee}. 
    In particular, we assume that $B\subseteq \bbR$ is such that $(a-b)\kappa_* /2\in B$ with the usual conditions $\nu(\partial B) = 0$ and $0\not\in \overline{B}$, and we let $\Lambda_n = \{n\rho_n y: y \in B \}$ as in \eqref{eq: Def_Kn}.  
    
    Recall from Algorithm \ref{alg:high-density} that $\hat{V}_w = [(\hat{\psi}_i)_w :  \hat{\lambda}_i \in \Lambda_n]$ is found by taking the $w$th entry of each eigenvector of $\bA$ with an eigenvalue in $\Lambda_n$. 
    Let us similarly define $V_w \de [(\psi_i^{\pm})_w : (a\pm b)  \kappa_i/2 \in  B]$.
    We use Lemmas \ref{lem: Continuity} and \ref{lem: StrictPos} to establish properties of the vectors $V_w$: 
    
    \begin{lemma}\label{lem: continuity}
        For every $\varepsilon>0$ there exists some $\delta>0$ such that asymptotically almost surely $\Vert V_v - V_w \Vert < \varepsilon/\sqrt{n}$ for all values of $v,w$ with $\Vert X_v -X_w \Vert < \delta$ and $\sigma_v = \sigma_w$. 
    \end{lemma}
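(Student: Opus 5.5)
The plan is to reduce the claim to uniform continuity of finitely many eigenfunctions together with a law-of-large-numbers control of the normalizing constants in \eqref{eq: Def_psi}.

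The vector $V_w$ has a fixed number $m \de \nu(B)<\infty$ of coordinates. Each coordinate is $(\psi_i^{\pm})_w = s_w\varphi_i(X_w)/Z_{i,n}^{\pm}$, where $s_w\in\{1,-1\}$ is determined by $\sigma_w$ (always $+1$ for $\psi_i^+$) and $Z_{i,n}^{\pm} = \bigl(\sum_{u=1}^n \varphi_i(X_u)^2\bigr)^{1/2}$. Note that the sign flip does not change the normalization. The first step is to show that $Z_{i,n}^{\pm}/\sqrt{n}\to \Vert\varphi_i\Vert_{L^2(\cX,\mu)} = 1$ almost surely. This follows from the strong law of large numbers applied to the i.i.d.\ bounded variables $\varphi_i(X_u)^2$. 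They are bounded because $\varphi_i$ is continuous on the compact set $\cX$ by Lemma \ref{lem: Continuity}, and $\kappa_i\neq 0$ since $0\notin\overline{B}$. Since only finitely many $i$ are involved, asymptotically almost surely we have $Z_{i,n}^{\pm}\geq \sqrt{n}/2$ for all relevant $i$ and both signs.

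The second step uses uniform continuity. Each relevant $\varphi_i$ is continuous on the compact set $\cX$, hence uniformly continuous, by Lemma \ref{lem: Continuity}. Given $\varepsilon>0$, choose $\delta>0$ such that $\Vert x-y\Vert<\delta$ implies $\lvert\varphi_i(x)-\varphi_i(y)\rvert < \varepsilon/(2\sqrt{m})$ simultaneously for all $m$ relevant eigenfunctions. This choice is deterministic and independent of $n$.

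For $v,w$ with $\sigma_v=\sigma_w$, the signs agree, $s_v=s_w$. Each coordinate difference is therefore $\lvert\varphi_i(X_v)-\varphi_i(X_w)\rvert/Z_{i,n}^{\pm}$. On the event from the first step, this is at most $\frac{\varepsilon}{2\sqrt m}\cdot\frac{2}{\sqrt n}$. Summing the squares over the $m$ coordinates gives
\begin{align}
\Vert V_v-V_w\Vert \leq \varepsilon/\sqrt{n},
\end{align}
and using a strict inequality in the choice of $\delta$ gives the strict bound. Because this event does not depend on $v,w$, the bound holds for all pairs simultaneously.

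The only potentially delicate point is the uniformity over all pairs $(v,w)$. This is handled by the fact that the only randomness enters through the $m$ normalizing constants, which are controlled globally, while the continuity estimate is deterministic. I do not expect a real obstacle. One should remember that eigenvalues with multiplicity are no issue here: $V_w$ is built from a fixed orthonormal choice of eigenfunctions, which are all continuous.
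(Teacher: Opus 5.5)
Your proposal is correct and follows essentially the same route as the paper. Both arguments use the finite dimensionality $\nu(B)<\infty$ of $V_w$, the law of large numbers giving normalizing constants of order $\sqrt{n}$, and the continuity (hence uniform continuity on the compact $\cX$) of the finitely many relevant eigenfunctions from Lemma~\ref{lem: Continuity}. The only difference is that you make the choice of $\delta$, the constants, and the uniformity over all pairs $(v,w)$ explicit, where the paper leaves them implicit.
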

    \begin{proof}
        Note that the $V_w$ are $\nu(B)$-dimensional vectors, and that $\nu(B)<\infty$ since there are only finitely many $i$ with $(a+b)\kappa_i/2\in B$ or $(a-b)\kappa_i/2 \in B$ because the only accumulation point of the $\kappa_i$ is at zero and we assume that $0 \not\in \overline{B}$. 
        In other words, the $V_w$ have a fixed finite dimensionality.
        Recall further that $\psi_i^{\pm}$ in the definition of $V_w$ are defined using the eigenfunctions $\varphi_i$ in \eqref{eq: Def_psi} with some proportionality constant (say $Z_n$) chosen to get unit vectors. 
        By the law of large numbers, the proportionality constant satisfies $\sqrt{n}Z_n\to 1$ almost surely.  
        The result hence follows from the continuity of the $\varphi_i$ from Lemma \ref{lem: Continuity}. 
    \end{proof}
    \begin{lemma}\label{lem: discontinuity}
        There exists some $\varepsilon_0>0$ such that such that asymptotically almost surely  $\Vert V_v -V_w \Vert > \varepsilon_0/\sqrt{n}$ whenever $\sigma_v \neq \sigma_w$. 
    \end{lemma}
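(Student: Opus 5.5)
Since $(a-b)\kappa_*/2 \in B$, one coordinate of $V_w$ is $(\psi_*^-)_w$. (If $\kappa_*$ has multiplicity, $\varphi_*$ is the Perron eigenfunction and the remaining eigenfunctions for that eigenvalue give further coordinates.) It therefore suffices to bound below a single coordinate difference:
\[
\Vert V_v - V_w\Vert \ \ge\ \lvert (\psi_*^-)_v - (\psi_*^-)_w\rvert .
\]
The plan is to show that this coordinate alone separates the clusters at scale $1/\sqrt{n}$. The key input is Lemma \ref{lem: StrictPos} together with Lemma \ref{lem: Continuity}. Since $\varphi_*$ has a continuous representative that is strictly positive on the compact set $\cX$, it attains a positive minimum,
\[
m \de \min_{x\in\cX}\varphi_*(x) > 0 .
\]

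By \eqref{eq: Def_psi}, $(\psi_*^-)_v = Z_n \varphi_*(X_v)$ if $\sigma_v = 1$ and $(\psi_*^-)_v = -Z_n\varphi_*(X_v)$ if $\sigma_v = 2$. Here $Z_n$ is the normalizing constant, and the paper also chooses its sign, which we may take positive. So when $\sigma_v \neq \sigma_w$ the two entries have opposite signs, and
\[
\lvert (\psi_*^-)_v - (\psi_*^-)_w\rvert = Z_n\bigl(\varphi_*(X_v) + \varphi_*(X_w)\bigr) \ \ge\ 2mZ_n .
\]
As in the proof of Lemma \ref{lem: continuity}, the law of large numbers gives $\sqrt{n}Z_n \to 1$ almost surely, since
\[
Z_n^{-2} = \sum_v \varphi_*(X_v)^2 \quad\text{and}\quad \tfrac1n\sum_v \varphi_*(X_v)^2 \to \Vert\varphi_*\Vert_{L^2(\mu)}^2 = 1 .
\]
Hence asymptotically almost surely $Z_n \ge 1/(2\sqrt n)$, and we may take $\varepsilon_0 = m/2$ (anything strictly below $2m$ works). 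This bound holds simultaneously for all pairs $v,w$, because it uses only the uniform lower bound $m$ and the single random quantity $Z_n$.

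The main point needing care is bookkeeping rather than analysis. We must make sure $V_w$ really contains $\psi_*^-$ as a coordinate when $(a-b)\kappa_*/2$ coincides with some $(a\pm b)\kappa_i/2$ or when $\kappa_*$ is not simple. In that case we fix the orthonormal system so that $\varphi_*$ is the chosen Perron eigenfunction, which is possible by Lemma \ref{lem: StrictPos}. Any further coordinates only increase the Euclidean norm. We should also check that $\kappa_* > 0$: this holds because the kernel $h$ of $\mathbb{K}^2$ is positive, which forces $\mathbb{K}\neq 0$, and the Perron eigenvalue is the spectral radius. This ensures the eigenvalue is nonzero, so that Lemma \ref{lem: Continuity} applies.
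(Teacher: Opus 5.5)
Your proof is correct and is essentially the paper's argument. You bound $\Vert V_v - V_w\Vert$ below by the single coordinate $\lvert(\psi_*^-)_v-(\psi_*^-)_w\rvert = Z_n(\varphi_*(X_v)+\varphi_*(X_w))$, use the positive minimum of the continuous, strictly positive $\varphi_*$ on the compact set $\cX$, and conclude with $\sqrt{n}Z_n\to 1$. Your explicit slack ($\varepsilon_0=m/2$, strictly below $2m$) handles the limit $\sqrt{n}Z_n\to 1$ cleanly, a point the paper's choice $\varepsilon_0=2c$ leaves implicit.
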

    \begin{proof}
        Recall from Lemma \ref{lem: StrictPos} that the eigenfunction $\varphi_*$ associated with the greatest eigenvalue $\kappa_*$ of $\bbK$ is strictly positive. 
        In particular, there exists some $c>0$ such that $\varphi_*(x)>c$ for all $x\in \cX$. 
        Recalling that we assumed that $(a-b)\kappa_*/2 \in B$, it follows from the definition \eqref{eq: Def_psi} of $\psi_*^-$ that with $Z_n$ the proportionality constant, 
        \begin{align}
            \Vert V_v - V_w \Vert \geq \lvert (\psi_*^-)_v - (\psi_*^-)_w\rvert  = Z_n\lvert \varphi_*(X_v) + \varphi_*(X_w) \rvert  > 2cZ_n.    
        \end{align}
        Here, we used that $\sigma_v \neq \sigma_w$ in the equality. 
        Use that $\sqrt{n}Z_n\to 1$ almost surely by the law of large numbers and take $\varepsilon_0 = 2c$ to conclude. 
    \end{proof}
    We now determine an admissible range for the parameters in Algorithm \ref{alg:high-density}. 
    Let us fix $c_1,C_1 >0$ such that $c_1<C_1< \varepsilon_0/3$ with $\varepsilon_0$ as in Lemma \ref{lem: discontinuity}. 
    We assume that 
    \begin{align}
        \epsilon \in [c_1/\sqrt{n} , C_1/\sqrt{n}]. \label{eq: Def_eps}
    \end{align}
    Further, let $\delta$ be the value resulting from Lemma \ref{lem: continuity} with $\varepsilon = c_1/2$. 
    Using the compactness of $\cX$ we can find a finite open cover $\{U_j:j=1,\ldots , r\}$ consisting of sets of diameter $<\delta$.
    Recalling from Assumption~\ref{ass: ConnectedLatent} that the support of $\mu$ covers $\cX$, each of these open sets has nontrivial measure. 
    Let us fix $c_2,C_2>0$ with $c_2 < C_2 < \min_{j\leq r} \mu(U_j)/2$, and assume that 
    \begin{align}
        \texttt{MinPts} \in [c_2n, C_2n ]. \label{eq: Def_MinPts}
    \end{align}
    To prove Theorem \ref{thm: AlgGuarantee} it then suffices to show that Algorithm \ref{alg:high-density} achieves almost exact recovery whenever the parameters are in this range.
    Note that this also yields the claim in Remark \ref{rem: AdmissibleRange}. 

    To this end, we shall next transfer the properties of the vectors $V_w$ to the $\hat{V}_w$. 
    Proposition \ref{prop: EigenvectorsMatch} ensures that the vectors $\hat{V}_w$ have dimensionality $\nu(B)$ asymptotically almost surely.
    Hence, we can assume by discarding an event of negligible probability that the dimensions of $V_w$ and $\hat{V}_w$ match.
    It is then also well-defined to consider some $\nu(B)\times \nu(B)$ orthogonal matrix $\hat{\bO}$ with 
    \begin{align}
        \sum_{w=1}^n\Vert \hat{V}_w^\T \hat{\bO} - V_w^\T \Vert^2  = \min_{\bQ \in \bbO(\nu(B))} \sum_{w=1}^n\Vert \hat{V}_w^\T \bQ - V_w^\T \Vert^2.  \label{eq: Def_Ohat}
    \end{align}
    Given such a matrix $\hat{\bO}$, let us define a set of \emph{good indices} by 
    \begin{align}
        \cI_{\textrm{good}} \de \bigl\{w\leq n : \Vert \hat{V}_w \hat{\bO} - V_w \Vert < \gamma/\sqrt{n}  \bigr\} \, \textrm{ where }\, \gamma  \de \min\{c_1/4, \varepsilon_0/3,\varepsilon_0/3 - C_1 \}.\label{eq: Def_Igood} 
    \end{align}
    Almost all indices are good: 
    \begin{lemma}\label{lem: allgood}
        It holds that $\#\cI_{\textrm{good}}/n\to 1$ in probability as $n\to \infty$.  
    \end{lemma}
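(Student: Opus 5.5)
The plan is a Markov/Chebyshev-type counting argument built on the Frobenius norm bound from Proposition \ref{prop: EigenvectorsMatch}. Arrange the vectors $\hat V_w$ as the rows of $\hat\Psi_B$ and the $V_w$ as the rows of $\Psi_B$; after discarding an event of vanishing probability, both are $n\times\nu(B)$ matrices. With this arrangement the objective in \eqref{eq: Def_Ohat} is exactly
\begin{align*}
\sum_{w=1}^n \Vert \hat V_w^\T \hat\bO - V_w^\T\Vert^2 = \min_{\bQ\in\bbO(\nu(B))}\Vert \hat\Psi_B\bQ - \Psi_B\Vert_F^2 .
\end{align*}
By \eqref{eq:Procruste}, this quantity tends to $0$ in probability. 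Call it $D_n$.

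Every index $w\notin \cI_{\textrm{good}}$ has $\Vert \hat V_w\hat\bO - V_w\Vert \ge \gamma/\sqrt n$, and hence contributes at least $\gamma^2/n$ to the sum $D_n$. It follows that
\begin{align*}
\#\bigl(\{1,\ldots,n\}\setminus \cI_{\textrm{good}}\bigr)\cdot \frac{\gamma^2}{n} \le D_n ,
\qquad\text{so}\qquad
1-\frac{\#\cI_{\textrm{good}}}{n} \le \frac{D_n}{\gamma^2}.
\end{align*}
The constant $\gamma>0$ is fixed independently of $n$. This needs $C_1<\varepsilon_0/3$, which was imposed when choosing $C_1$, together with $c_1>0$ and $\varepsilon_0>0$. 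Consequently $D_n/\gamma^2\to0$ in probability, and $\#\cI_{\textrm{good}}/n\to 1$ in probability.

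The only point requiring care is the bookkeeping of transposes and row/column conventions. The definition \eqref{eq: Def_Igood} writes $\hat V_w\hat\bO$, whereas \eqref{eq: Def_Ohat} writes $\hat V_w^\T\hat\bO$. Both refer to the $w$th row of $\hat\Psi_B\hat\bO$, so the two norms coincide.

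A second point is that $\hat\bO$ is a minimizer rather than an arbitrary orthogonal matrix. This causes no difficulty: a minimizer exists because $\bbO(\nu(B))$ is compact, and the minimal value is precisely what \eqref{eq:Procruste} controls. No further obstacle is expected.
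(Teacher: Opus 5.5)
Your proposal is correct and matches the paper's proof. The paper likewise identifies the objective in \eqref{eq: Def_Ohat} with the squared Frobenius norm controlled by Proposition~\ref{prop: EigenvectorsMatch}, lower-bounds that sum by $\gamma^2(1-\#\cI_{\textrm{good}}/n)$, and combines the two. Your remarks on the transpose convention and on the existence of a minimizer $\hat{\bO}$ are correct bookkeeping that the paper leaves implicit.
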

    \begin{proof}
        Since the square of the Frobenius norm of a matrix is the sum of the squares of the Euclidean norms of its rows, it follows from Proposition \ref{prop: EigenvectorsMatch} and \eqref{eq: Def_Ohat} that 
        \begin{align}
            \sum_{w=1}^n\Vert \hat{V}_w^\T \hat{\bO} - V_w^\T \Vert^2 \to 0\label{eq:MadCat}
        \end{align}
        in probability as $n\to \infty$.
        On the other hand, 
        \begin{align}
           \sum_{w=1}^n\Vert \hat{V}_w^\T \hat{\bO} - V_w^\T \Vert^2 \geq  \gamma^2 \bigl( 1 -\#\cI_{\textrm{good}}/n \bigr).  \label{eq:MadDog}
        \end{align}
        Combine \eqref{eq:MadCat} and \eqref{eq:MadDog} to conclude.
    \end{proof}
    We shall now work towards application of Lemma \ref{lem: DBSCAN} to show that Algorithm \ref{alg:high-density} correctly clusters all good indices. 
    Let us take $\sP \de \{\hat{V}_w : w\leq n \}$. Recall Definitions \ref{def: Neighborhood_Corepoint} to \ref{def: DensityConnected}.
    \begin{lemma}\label{lem: GoodCore}
        Asymptotically almost surely, every $\hat{V}_w$ with $w\in \cI_{\textrm{good}}$ is a core point. 
    \end{lemma}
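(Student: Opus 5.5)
Fix $w\in\cI_{\textrm{good}}$. The plan is to exhibit at least $\texttt{MinPts}$ points of $\sP$ within distance $\epsilon$ of $\hat V_w$. Since the comparison vectors $V_u$ are compared with $\hat V_u$ only after a rotation, we use that an orthogonal matrix preserves Euclidean distances. Thus $\Vert \hat V_u - \hat V_w\Vert = \Vert \hat V_u\hat{\bO} - \hat V_w\hat{\bO}\Vert$, and it suffices to bound the rotated distances.

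For good $u$ and $w$ the triangle inequality gives
\begin{align*}
\Vert \hat V_u\hat{\bO} - \hat V_w\hat{\bO}\Vert \le \Vert \hat V_u\hat{\bO}-V_u\Vert + \Vert V_u - V_w\Vert + \Vert V_w - \hat V_w\hat{\bO}\Vert < 2\gamma/\sqrt n + \Vert V_u-V_w\Vert .
\end{align*}
Recall $\gamma\le c_1/4$ and that $\delta$ was chosen via Lemma \ref{lem: continuity} with $\varepsilon=c_1/2$. Hence, on the asymptotically almost sure event of that lemma, every good $u$ with $\sigma_u=\sigma_w$ and $\Vert X_u-X_w\Vert<\delta$ satisfies $\Vert \hat V_u-\hat V_w\Vert < c_1/(2\sqrt n)+c_1/(2\sqrt n)=c_1/\sqrt n\le\epsilon$. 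So $\# N_\epsilon(\hat V_w)$ is at least the number of good $u$ in the same cluster as $w$ whose latent position lies in the set $U_j$ of the finite cover that contains $X_w$. Since $U_j$ has diameter $<\delta$, every such $u$ indeed has $\Vert X_u - X_w\Vert<\delta$.

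It remains to count these vertices uniformly over $w$. There are only finitely many pairs $(j,s)$ with $j\le r$ and $s\in\{1,2\}$. By the law of large numbers, for each pair,
\begin{align*}
\#\{u: X_u\in U_j,\ \sigma_u=s\}/n \to \mu(U_j)/2
\end{align*}
almost surely, and a union bound over the finitely many pairs makes this hold simultaneously. By Lemma \ref{lem: allgood} the number of non-good indices is $o(n)$ in probability. Therefore, with probability tending to one, every $w\in\cI_{\textrm{good}}$ has at least $n\mu(U_j)/2-o(n)$ neighbours, which exceeds $C_2 n\ge \texttt{MinPts}$ because $C_2<\min_j\mu(U_j)/2$. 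This shows that every good $\hat V_w$ is a core point.

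The main subtlety is uniformity over $w$: the bound must not be applied vertex by vertex. Discretizing through the fixed finite cover $\{U_j\}$ reduces the count to finitely many law-of-large-numbers statements. The event of Lemma \ref{lem: continuity} is already uniform over all pairs $(v,w)$, so it needs no further union bound. Points in the neighbourhood count are included regardless of cluster, so only a lower bound is needed here, and the counting above supplies it.
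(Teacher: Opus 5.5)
Your proposal is correct and follows essentially the same route as the paper's proof. Both arguments use the triangle inequality through the good-index approximation (using $2\gamma\le c_1/2$), then Lemma \ref{lem: continuity} with $\varepsilon=c_1/2$ on the finite cover of diameter $<\delta$, then a law-of-large-numbers count over the finitely many pairs $(j,s)$ combined with Lemma \ref{lem: allgood} to exceed $\texttt{MinPts}\le C_2 n$. Your explicit remark that the orthogonal matrix preserves distances fills in a step the paper leaves implicit.
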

    \begin{proof}
        For every $w\in \cI_{\text{good}}$, using \eqref{eq: Def_eps} and \eqref{eq: Def_Igood} with the triangle inequality,
        \begin{align}
             \# N_{\epsilon}(\hat{V}_w) = \#\{v\leq n: \Vert \hat{V}_w - \hat{V}_v \Vert \leq \epsilon\} \geq \#\{k\in \cI_{\textrm{good}}: \Vert V_w - V_v \Vert \leq  (c_1 -  2\gamma) /\sqrt{n}  \}. \label{eq:CrazyCat} 
        \end{align}
        Recall from before \eqref{eq: Def_MinPts} that $\{U_j:j\leq r\}$ is an open cover of $\cX$ consisting of sets of diameter $<\delta$, where $\delta$ was chosen to allow an application of Lemma \ref{lem: continuity} with $\varepsilon = c_1/2$. 
        Let $J(w)$ be the random index with $X_w \in U_{J(w)}$.
        Then, 
        $\Vert V_w - V_v \Vert \leq c_1/2\sqrt{n}$ 
        for every $v$ with $X_v \in U_{J(w)}$ and $\sigma_v = \sigma_w$.    
        Hence, using that $\gamma \leq c_1/4$ by \eqref{eq: Def_Igood}, it holds for every $w\in \cI_{\textrm{good}}$ that 
        \begin{align}
             \# N_{\epsilon}(\hat{V}_w) \geq \#\{v\in \cI_{\textrm{good}}: X_v \in U_{J(w)}  \textrm{ and } \sigma_v = \sigma_w\}.\label{eq:CrazyDog} 
        \end{align}
        Combining the weak law of large numbers with Lemma \ref{lem: allgood} yields that 
        \begin{align}
            \max_{j\leq r, s\in \{1,2\}} \bigl\lvert \#\{v\in \cI_{\textrm{good}}: X_v \in U_j \textrm{ and } \sigma_v =s \}/n - \mu(U_j)/2\bigr\rvert \to 0
        \end{align}
        in probability as $n\to \infty$. 
        In particular, recalling from \eqref{eq: Def_MinPts} that $\texttt{MinPts} \leq C_2n$ with $C_2 < \min_{r\leq 2}\mu(U_j)/2$, it indeed holds that $\#N_\epsilon(\hat{V}_w) \geq \texttt{MinPts}$ for every $w\in \cI_{\text{good}}$ asymptotically almost surely. 
        This means that every $V_w$ with $w \in \cI_{\textrm{good}}$ is a core point, as desired.  
     \end{proof}
     \begin{lemma}\label{lem: DenseConnected}
        Fix some $s\in \{1,2\}$.
        Then, asymptotically almost surely, all $\hat{V}_w$ with $w\in \cI_{\textrm{good}}$ and $\sigma_w = s$ are density connected. 
     \end{lemma}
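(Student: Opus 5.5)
The plan is to chain good indices together along overlapping sets of the finite open cover $\{U_j : j\le r\}$ fixed before \eqref{eq: Def_MinPts}, using the connectivity of $\cX$ from Assumption~\ref{ass: ConnectedLatent}. Since every good $\hat{V}_w$ is a core point by Lemma~\ref{lem: GoodCore}, density reachability between good points reduces to finding chains of good points with consecutive members at distance at most $\epsilon$. Density connectedness of two good points $p,q$ then follows by taking $c=p$: if $q$ is density reachable from the core point $p$, we are done, and $p$ is trivially density reachable from itself.

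The first step handles a single cover set. Fix $s$ and $j$, and let $v,w\in\cI_{\textrm{good}}$ with $\sigma_v=\sigma_w=s$ and $X_v,X_w\in U_j$. Then $\Vert X_v-X_w\Vert<\delta$, so Lemma~\ref{lem: continuity} with $\varepsilon=c_1/2$ gives $\Vert V_v-V_w\Vert< c_1/(2\sqrt n)$. By the definition of $\cI_{\textrm{good}}$ in \eqref{eq: Def_Igood} and the triangle inequality, using that $\hat{\bO}$ is orthogonal and hence an isometry,
\begin{align}
\Vert \hat V_v-\hat V_w\Vert=\Vert \hat V_v\hat{\bO}-\hat V_w\hat{\bO}\Vert < (c_1/2+2\gamma)/\sqrt n\le c_1/\sqrt n\le \epsilon,
\end{align}
where $\gamma\le c_1/4$. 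So $\hat V_v$ is directly density reachable from $\hat V_w$. This holds simultaneously for all such pairs on the asymptotically almost sure event of Lemmas~\ref{lem: continuity} and~\ref{lem: GoodCore}.

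The second step passes between overlapping cover sets. Build the graph on $\{1,\dots,r\}$ with $j\sim k$ whenever $U_j\cap U_k\neq\emptyset$. This graph is connected: otherwise the unions of the cover sets over two complementary vertex classes would be disjoint nonempty open sets covering $\cX$, contradicting connectedness. For each adjacent pair, $U_j\cap U_k$ is a nonempty open subset of $\cX=\operatorname{supp}(\mu)$ and so has positive $\mu$-measure. The weak law of large numbers together with Lemma~\ref{lem: allgood} then shows that, asymptotically almost surely, for every adjacent pair and each $s$ there is a good index $u$ with $\sigma_u=s$ and $X_u\in U_j\cap U_k$; this uses only finitely many events, the same argument as in the proof of Lemma~\ref{lem: GoodCore}.

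To conclude, take good $v,w$ with label $s$, $X_v\in U_j$ and $X_w\in U_k$, and a path $j=j_0\sim j_1\sim\dots\sim j_m=k$. Pick good label-$s$ bridge points $u_i$ with $X_{u_i}\in U_{j_{i-1}}\cap U_{j_i}$. Consecutive points in the chain $v,u_1,\dots,u_m,w$ then lie in a common $U_{j_i}$, so the first step makes each directly density reachable from the previous one, and all of them are core points. The main obstacle is the bridge step: we need the good indices to populate every pairwise intersection, not merely every $U_j$, and this is exactly where positivity of $\mu$ on intersections and the uniform-over-finitely-many-sets law of large numbers enter.
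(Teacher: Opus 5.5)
Your proposal is correct and follows essentially the same chaining argument as the paper. Within one cover set, Lemma~\ref{lem: continuity} with $\varepsilon = c_1/2$ and the triangle inequality with $\gamma \le c_1/4$ make good same-label points $\epsilon$-close. Between overlapping cover sets, connectivity of $\cX$, positivity of $\mu$ on intersections, the weak law of large numbers and Lemma~\ref{lem: allgood} supply good bridge points, and Lemma~\ref{lem: GoodCore} then makes every link a direct density reach. Your version is slightly more careful than the paper's, since you explicitly justify why the overlap graph of the cover is connected.
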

     \begin{proof}
         The  argumentation in \eqref{eq:CrazyCat}--\eqref{eq:CrazyDog} shows that $\Vert \hat{V}_{w} - \hat{V}_{w'} \Vert \leq \epsilon$ for every $w,w'\in \cI_{\textrm{good}}$ with $X_{w}$ and $X_{w'}$ in the same member of the cover $\{U_j:j\leq r\}$ and $\sigma_{w} = \sigma_{w'}$. 
         Considering Lemma \ref{lem: GoodCore} and Definition \ref{def: DensityConnected}, it now suffices to show that for every $w_{A},w_{B} \in \cI_{\textrm{good}}$ with $\sigma_{w_A} =s = \sigma_{w_B}$ there exist $w(0),...,w(k) \in \cI_{\textrm{good}}$ with $w(0) = w_A$ and $w(k) = w_B$, such that $\sigma_{w(m)}=s$ and such that $X_{w(m)}$ and $X_{w(m+1)}$ lie in a shared member of the cover for every $m$.

        Let $U_A$ and $U_B$ be the random members of the open cover $\{U_j:j\leq r\}$ containing $X_{w_A}$ and $X_{w_B}$, respectively. 
        The connectivity of $\cX$ implies that there exist  $j(1),...,j(\ell)$ with $U_{j(1)} = U_A$ and $  U_{j(\ell)} = U_{B}$ such that $U_{j(m)} \cap U_{j(m+1)}\neq \emptyset$ for every $1\leq m < \ell$. 
        Here, the weak law of large numbers together with Lemma \ref{lem: allgood} implies that 
        \begin{align}
            \min_{U_{j}\cap U_{j'} \neq \emptyset} \#\{w\in \cI_{\textrm{good}}: X_{w}\in U_{j}\cap U_{j'} \text{ and } \sigma_w = s \}/n \to \min_{U_{j} \cap U_{j'} \neq \emptyset} \mu(U_j \cap U_{j'})/2 \label{eq:BatSCat}   
        \end{align}
        in probability. 
        This limit is strictly positive since the support of $\mu$ covers $\cX$.
        In particular, all the sets on the left-hand side of \eqref{eq:BatSCat} are nonempty asymptotically almost surely. 
        Thus, there exist $w(1),\ldots, w(\ell)\in \cI_{\textrm{good}}$ with $X_{w(m)} \in U_{j(m)}\cap U_{j(m+1)}$ and $\sigma(w(m))=s$ for every $m$.  
        Let $k=\ell +1$ and set $w(0) = w_A$ and $w(k) = w_B$ to conclude. 
        \end{proof}

        \begin{lemma}\label{lem: NotDense}
            Asymptotically almost surely, it holds for every $w_1,w_2 \in \cI_{\textrm{good}}$ with $\sigma_{w_1} = 1$ and $\sigma_{w_2} = 2$ that $\hat{V}_{w_1}$ and $\hat{V}_{w_2}$ are not density connected.    
        \end{lemma}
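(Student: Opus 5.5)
The plan is to show that a chain witnessing density connectivity can never cross between the two clusters. Such a chain may pass through bad indices, so we cannot argue with good points alone. Instead we attach to every core point a ``label'' inherited from a nearby good point, and show that this label is preserved along every step of direct density reachability. Throughout, we work on the event of Lemma~\ref{lem: discontinuity} and on the event $n - \#\cI_{\textrm{good}} < c_2 n$, which holds asymptotically almost surely by Lemma~\ref{lem: allgood}. We also use that right multiplication by the orthogonal matrix $\hat{\bO}$ preserves distances, so $\Vert \hat{V}_v\hat{\bO} - \hat{V}_w\hat{\bO}\Vert = \Vert \hat{V}_v - \hat{V}_w\Vert$. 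To avoid issues with coinciding points we index $\sP$ by $w\le n$ and treat it as a multiset.

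\emph{Step 1 (every core point has a good neighbour).} If $\hat{V}_c$ is a core point, then $\#N_\epsilon(\hat{V}_c) \ge \texttt{MinPts} \ge c_2 n > n - \#\cI_{\textrm{good}}$. Hence there is some $g(c)\in\cI_{\textrm{good}}$ with $\Vert \hat{V}_c - \hat{V}_{g(c)}\Vert \le \epsilon$.

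\emph{Step 2 (a key separation estimate).} Let $w,w'\in\cI_{\textrm{good}}$ satisfy $\Vert \hat{V}_w - \hat{V}_{w'}\Vert \le 3\epsilon$. By the triangle inequality and \eqref{eq: Def_Igood},
\[
\Vert V_w - V_{w'}\Vert \le \frac{3C_1 + 2\gamma}{\sqrt n} \le \frac{3C_1 + 2(\varepsilon_0/3 - C_1)}{\sqrt n} = \frac{C_1 + 2\varepsilon_0/3}{\sqrt n} < \frac{\varepsilon_0}{\sqrt n},
\]
where the last inequality uses $C_1 < \varepsilon_0/3$. By Lemma~\ref{lem: discontinuity}, this forces $\sigma_w = \sigma_{w'}$. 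We draw two consequences.

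First, the label $\ell(c) \de \sigma_{g(c)}$ does not depend on which good neighbour $g(c)$ is chosen, since two such neighbours are within $2\epsilon$ of each other.

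Second, suppose $c'$ is directly density reachable from $c$ and both are core points. Then $\Vert \hat{V}_{g(c)} - \hat{V}_{g(c')}\Vert \le 3\epsilon$, so $\ell(c) = \ell(c')$. Similarly, if a good point $p$ is directly density reachable from a core point $c$, then $\Vert \hat{V}_p - \hat{V}_{g(c)}\Vert \le 2\epsilon$, so $\sigma_p = \ell(c)$.

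\emph{Step 3 (conclusion).} Suppose $\hat{V}_{w_1}$ and $\hat{V}_{w_2}$ are density connected through some point $c$. Each of the two chains from $c$ to $\hat{V}_{w_i}$ consists of core points $c=c_1,\dots,c_{m}$, followed by the endpoint $\hat{V}_{w_i}$, which is directly density reachable from $c_{m}$.
\begin{itemize}
\item If $w_i = c$ itself (a chain of length one), then $w_i$ is a good core point and we may take $g(c) = w_i$, so $\ell(c) = \sigma_{w_i}$ directly.
\item Otherwise, by Step~2 the label $\ell$ is constant along the core points of the chain, and the final step gives $\sigma_{w_i} = \ell(c_{m}) = \ell(c)$.
\end{itemize}
In either case $\sigma_{w_1} = \ell(c) = \sigma_{w_2}$, which contradicts $\sigma_{w_1}=1$ and $\sigma_{w_2}=2$.

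The main obstacle is the handling of bad intermediate points in the chains. Step~1 resolves this, because $\texttt{MinPts}$ is linear in $n$ while the number of bad indices is $o(n)$. The remaining difficulty is purely arithmetic: checking that the choice $\gamma \le \varepsilon_0/3 - C_1$ leaves enough room for the accumulated error $3\epsilon + 2\gamma/\sqrt n$, which is exactly the computation in Step~2.
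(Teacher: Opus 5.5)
Your proof is correct and is essentially the paper's argument. The paper encodes your label $\ell(c)$ as membership in regions $\cR_s$ of radius $\varepsilon_0/(3\sqrt{n})$ around the rotated true embeddings of cluster $s$; its claim that points outside $\cR_1\cup\cR_2$ are not core is the same pigeonhole count ($\texttt{MinPts}\geq c_2 n > n-\#\cI_{\textrm{good}}$) as your Step 1, and it then uses the same margin $\gamma\leq \varepsilon_0/3 - C_1$ together with Lemma~\ref{lem: discontinuity}, so your explicit treatment of bad intermediate points and length-one chains is just slightly more careful bookkeeping of the same route.
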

        \begin{proof}
            For every $s \in \{1,2\}$ define a subset $\cR_s \subseteq \bbR^{\nu(B)}$ by 
            \begin{align}
                \cR_s \de \{x \in \bbR^{\nu(B)}:   \Vert x^\T\hat{\bO}  -  V_w^\T\Vert < \varepsilon_0/3\sqrt{n} \textrm{ for some }w\leq n \textrm{ with }\sigma_w =s  \}.
            \end{align}
            Lemma \ref{lem: discontinuity} ensures that $\Vert V_{w_1} - V_{w_2}\Vert > \varepsilon_0/\sqrt{n} $ for every $w_1,w_2$ with $\sigma_{w_1} =1$ and $\sigma_{w_2} =2$. 
            Hence, recalling from \eqref{eq: Def_eps} that $\epsilon \leq C_1/\sqrt{n} < \varepsilon_0/3\sqrt{n}$ every point in $\cR_1$ is at distance $>\epsilon$ from every point in $\cR_2$. 
            Further, Definition \ref{def: DensityConnected} yields that two core points $c_A,c_B$ are density connected if and only if one can find a sequence of core points $c_1,...,c_k$ with step sizes $\Vert c_{j+1} - c_j \Vert \leq \epsilon$ such that $c_1 = c_A$ and $c_k = c_B$.
            It now suffices to show for every $s$ that every $\hat{V}_{w_s}$ with $w_s \in \cI_{\textrm{good}}$ and $\sigma_{w_s} = s$ satisfies $\hat{V}_{w_s} \in \cR_s$, and that every $\hat{V}_w$ with $\hat{V}_w \not\in \cR_1 \cup \cR_2$ is not a core point. 

            Regarding the first statement, the definition $\cI_{\textrm{good}}$ yields that $\Vert \hat{V}_{w_s}^{\T}\hat{\bO} - V_{w_s}^\T\Vert < \gamma/\sqrt{n}$ for every $w_s\in \cI_{\textrm{good}}$. 
            So, indeed $\hat{V}_{w_s}\in  \cR_s$ since $\gamma \leq \varepsilon_0/3$; recall \eqref{eq: Def_Igood}. 
            Regarding the second statement, it is immediate from the definition of $\cR_s$ that for every $\hat{V}_w\not\in \cR_1\cup \cR_2$,  
            \begin{align}
                \#N_\epsilon(\hat{V}_w) \leq \bigl\{ v\leq n:  \Vert \hat{V}_{v}^{\T}\hat{\bO} - V_{v}^{\T}\Vert \geq   \varepsilon_0/3\sqrt{n}-\epsilon \bigr\}.
            \end{align} 
            Recall from \eqref{eq: Def_eps} that $\epsilon\leq C_1/\sqrt{n}$ and recall from \eqref{eq: Def_Igood} that $\gamma \leq  \varepsilon_0/3 - C_1$. 
            It hence follows that  
            \begin{align}
                \bigl\{ v\leq n:  \Vert \hat{V}_{v}^{\T}\hat{\bO} - V_{v}^{\T}\Vert \geq   \varepsilon_0/3\sqrt{n}-\epsilon \bigr\} \leq n -  \#\cI_{\textrm{good}}.
            \end{align}
            Lemma \ref{lem: allgood} and the assumption that $\texttt{MinPts} \geq c_2n$ from \eqref{eq: Def_MinPts} now ensures that asymptotically almost surely $\#N_\epsilon(\hat{V}_w) < \texttt{MinPts}$ for every $\hat{V}_w\not\in \cR_1\cup \cR_2$.
            This means that $\hat{V}_w$ is not a core point, concluding the proof. 
        \end{proof}
        \begin{proof}[Proof of Theorem \ref{thm: AlgGuarantee}]
            Lemmas \ref{lem: GoodCore} and \ref{lem: DenseConnected} imply that for every $s\in \{1,2 \}$ the set $\cV_i \de \{\hat{V}_w: w\in  \cI_{\textnormal{good}}, \sigma_w = s \}$ consists of mutually density connected core points. 
            Moreover, Lemma \ref{lem: NotDense} yields that $\cV_1$ is not density connected to $\cV_2$.      
            Lemma \ref{lem: DBSCAN} hence implies that there exist two distinct parts $\cC_1 \neq \cC_2$ in the partition output by \texttt{DBSCAN} with $\cV_1 \subseteq \cC_1$ and $\cV_2 \subseteq \cC_2$. 
            Using that $\#\cI_{\textnormal{good}} = n + o(n)$ by Lemma \ref{lem: allgood} now concludes the proof. 
        \end{proof}
    \section{Experiments with real-world datasets}\label{sec: RealWorld} 
    Recall that some experiments with synthetic data were given in Example \ref{example: Separation} and more extensively in the appendices. 
    We here consider three real-world network datasets commonly used in the community detection literature:  
\begin{itemize}
    \item \textbf{LiveJournal \cite{livejournal}.}  
    A social network extracted from the LiveJournal online blogging platform, with edges corresponding to declared friendship relations. The dataset was released as part of the SNAP collection at Stanford University and is characterized by a heterogeneous degree distribution and user-defined community memberships.
    \item \textbf{Political Blogs \cite{polblogs}.} A hyperlink network among U.S. political blogs. The ground-truth partition is given by political leaning (liberal vs. conservative), making this dataset an example of a two-community structure.
    \item \textbf{DBLP \cite{SNAPDBLP}.}  
    A collaboration network derived from the DBLP computer science bibliography. 
    Nodes represent authors and edges indicate co-authorship relationships. 
    Ground-truth community labels are based on publication venues. 
    We consider two classification examples: DBLP A (top communities 1 and 2) and DBLP B (top communities 1 and 3). 
    We will also present results in a derived dataset,  called DBLP A$^\ast$, obtained by removing the nodes from the 2 biggest cliques of DBLP A. 
\end{itemize}

In each dataset considered, we removed overlapping nodes (nodes that are in the two communities) and isolated nodes. 
Table~\ref{tab:real_datasets} summarizes the main structural properties of each dataset, including the number of nodes ($n$), average degree ($\bar{d}$) and median degree ($\Tilde{d}$), community sizes  ($n_1,n_2$), and the cluster connectivity parameters $(a_1,a_2)$ and $(b)$. 
Here, $a_i$ is the internal connection density of the $i$th cluster, defined as the ratio of the number of edges connecting two nodes in that cluster to the total number of possible edges.
Similarly, $b$ is the inter cluster density defined in terms of the edges between clusters.

\begin{table}[h]
\centering
\caption{Structural properties of the real-world datasets.}
\label{tab:real_datasets}
\begin{tabular}{lcccccccc}
\hline
Dataset & $n$ & $\bar{d}$  & $\Tilde{d}$ & $n_1$ & $n_2$ & $a_1$ & $a_2$ & $b$ \\
\hline
Political Blogs & 1224 & 27.31 & 13 &588& 636 & 0.042 & 0.039 & 0.0042 \\	
DBLP A & 13036 & 4.94 & 3 & 7321& 5715 & 0.000629 & 0.000876 & 0.000024 \\
DBLP A$^\ast$ & 12991 & 4.87 & 3 & 7297& 5694 & 0.000621 & 0.000872 & 0.000023\\
DBLP B & 12212 & 6.10 & 4 & 7566 & 4646 & 0.000635 & 0.0017 & 0.00002 \\
LiveJournal  & 2766 & 17.45 & 7 & 1426 & 1340 & 0.0061 & 0.0197 & 0.00014 \\
\hline
\end{tabular}
\end{table}

One testable prediction from our model is on the location of the ideal eigenvalue for the purpose of clustering. 
In the setting of our main discussion where the two clusters were assumed to have the same size and internal connection probabilities, our results predicted that this ideal eigenvalue should be located near $\lambda_* = n\rho_n (a-b)\kappa_*/2$ with $\kappa_*$ the eigenvalue of the kernel operator $\bbK$.
This location depends on the latent geometry through $\kappa_*$, which is information that we do not have direct access to, but this dependence can be eliminated through an observable quantity. 
Indeed, note that Proposition \ref{prop: EigenvaluesMatch} predicts that the largest eigenvalue of the adjacency matrix should be $\hat{\lambda}_{\textnormal{max}} \approx n\rho_n (a+b)\kappa_*/2$ and this is an observable quantity. 
Thus, the ideal eigenvalue should be located near $\hat{\lambda}_{\textnormal{max}}(a-b)/(a+b)$.

Inspecting Table \ref{tab:real_datasets}, we see that the assumption of equal community sizes and internal connection densities is violated in some of the considered datasets. 
To account for this, we consider the variants of our results described in Remark~\ref{rem: MutlipleClusters}.
Consider the matrix
\begin{equation} 
        \tilde{\bP} \de \begin{pmatrix}
            n_1a_1/n & n_2b/n\\
            n_1b/n & n_2a_2/n
        \end{pmatrix}.\label{eq: P_real}
\end{equation} 
Let $\tau_{\min}$ and $\tau_{\max}$ denote the minimum and maximum eigenvalues of $\Tilde{\bP}$, respectively. 
Then, by Remark~\ref{rem: MutlipleClusters}, the ideal eigenvalue for clustering should be located near the following value:  
\begin{equation}
    \hat{\lambda}_*\de  \hat{\lambda}_{\max}\frac{\tau_{\min}}{\tau_{\max}}.\label{eq:GrumpySax}
\end{equation}

To test the quality of eigenvalue locations we consider spectral clustering based on the sign of the entries of a single eigenvector.
We do this both for the eigenvector for the second eigenvalue (classical spectral clustering) and for an eigenvector with a higher-order eigenvalue (HO). 
In the higher-order case, we consider both the eigenvalue closest to $\hat{\lambda}_*$ and the eigenvalue $\hat{\lambda}_{\textnormal{opt}}$, which was found to yield the best performance. Table~\ref{tab:real_results} reports the accuracy of the resulting clusterings, measured as the fraction of correctly classified nodes (up to label permutation), together with the indices of the eigenvalues selected by the higher-order methods.

\begin{table}[h]
\centering
\caption{Sign-based clustering accuracy on real-world datasets.}
\label{tab:real_results}
\begin{tabular}{lccccc}
\hline
Dataset & Classical & HO $(\hat{\lambda}_*)$ & HO $(\hat{\lambda}_{\textnormal{opt}})$ &Index $\hat{\lambda}_*$ & Index $\hat{\lambda}_{\textnormal{opt}}$\\
\hline
Political Blogs &93\% & 93\%  & 93\% & 2& 2\\
DBLP A & 56\% & 56\% & 76\% & 1 & 12\\
DBLP A$^\ast$& 56\% & 76\% & 76\% & 10& 10\\
DBLP B & 65\% & 74\% & 75\% &4 & 15\\
LiveJournal & 56\% & 77\%  & 85\% & 3 & 4\\
\hline
\end{tabular}
\end{table}

The classical and higher-order methods naturally have identical performance when the corresponding eigenvalues coincide, as in Political Blogs. However, the second eigenvalue does not always match the one predicted by our theory.
When it does not, selecting the predicted higher-order eigenvalue can substantially improve performance.
For instance, for LiveJournal,  selecting the third eigenvector improves the accuracy from the almost random $56\%$ to a respectable $77\%$. 
For DBLP B, accuracy is improved from $65\%$ to $74\%$ by selecting the fourth eigenvector. 
The eigenvectors indexed by $9$ and $15$ here achieve almost the same accuracy as the $4$-th one, and this is the greatest accuracy obtained in any eigenvector.

We now discuss DBLP A in more detail. Figure~\ref{fig:dblp_comparison} (left) shows that the best accuracy is achieved by the 12th eigenvector, associated with  $\hat{\lambda}_{12}=14.64$. 
The eigenvalues closest to $\hat{\lambda}_*=20.78$ are however $\hat{\lambda}_1=23.12$ (the closest) and $\hat{\lambda}_2=17$.
To explain this discrepancy, let us note that this dataset has one clique of size 18 and another of size 22. 
A clique of size $n$ has a large leading eigenvalue $n-1$.  
By the interlacing theorem, the largest eigenvalue of DBLP A must therefore be greater than $21$. 
Since our estimator uses the largest eigenvalue of the graph, these large cliques may significantly affect the estimate. 

We repeat the experiment after removing these two cliques, thereby obtaining the modified dataset DBLP A$^\ast$.
Now, the two largest eigenvalues of the adjacency matrix are  $\hat{\lambda}_1=16$ and  $\hat{\lambda}_2=15.67$. 
Corresponding to the significantly changed value for $\hat{\lambda}_1 = \hat{\lambda}_{\textnormal{max}}$, our estimator $\hat{\lambda}_*=14.31$ also changes substantially and now accurately reflects the optimal value; see Figure~\ref{fig:dblp_comparison} (right). 
While removing the clique does not substantially change the best possible achievable accuracy, it does have significant effect on the estimator for the optimal eigenvalue location.

\begin{figure}[htbp]
    \centering
    \begin{minipage}[t]{0.495\textwidth}
        \centering
        \includegraphics[width=0.97\textwidth]{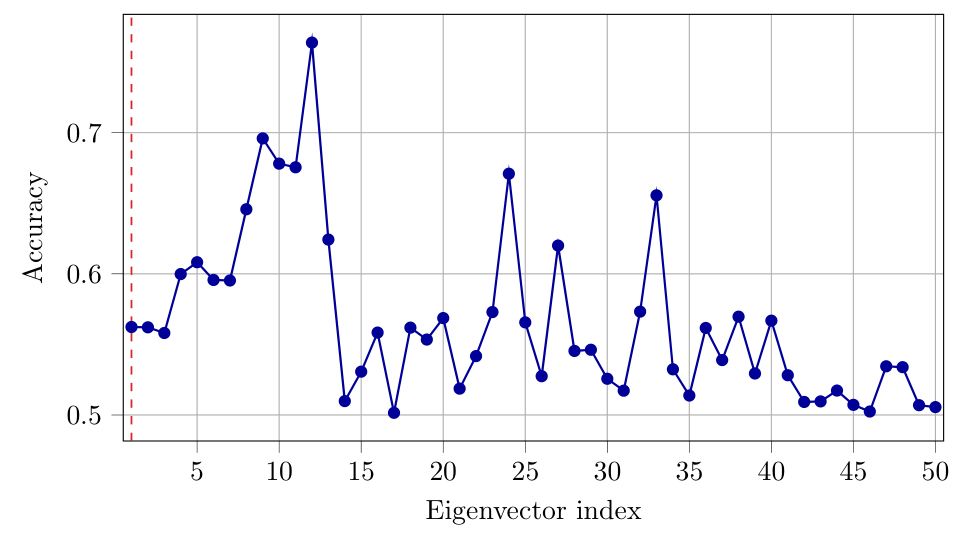}
    \end{minipage}
    \hfill
    \begin{minipage}[t]{0.485\textwidth}
        \centering
        \includegraphics[width=0.95\textwidth]{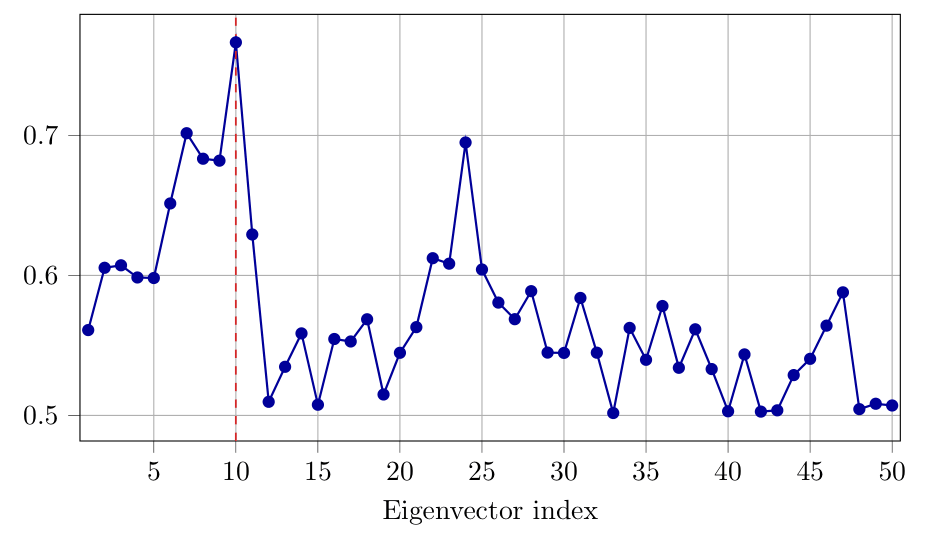}
    \end{minipage}
    \caption{
    Accuracy of the sign assignment for varying eigenvector indices.
    The left and right figures show DBLP A and A$^*$, respectively. 
    The dashed red line indicates the index of the eigenvalue closest to $\hat{\lambda}_*$. 
    }
    \label{fig:dblp_comparison}
\end{figure}

The foregoing concerns sign-based clustering with a single eigenvector. 
We next illustrate the \texttt{DBSCAN}-based algorithm \texttt{DBSPEC} with several eigenvectors on  LiveJournal.\footnote{Discussion of \texttt{DBSPEC} clustering for DBLP A$^\ast$ is deferred to the appendices for brevity. 
We there find that performance is also improved in DBLP A$^\ast$, and surprisingly find signals that there are more than $2$ clusters.  
}

Recall that higher-order sign-based clustering improved performance in many of our datasets. 
In applications, however, one may not know the parameters $a_1,a_2,b$ nor $n_1,n_2$ needed to determine $\hat{\lambda}_*$.
Less information is needed for density-based spectral clustering with a few leading eigenvectors.  
It is hence natural to consider spectral embeddings by the rows of $[\psi_1\ \psi_2\ \psi_3]$, the eigenvectors of the three largest eigenvalues.  

For LiveJournal, we find that applying \texttt{DBSCAN} on the $3$-dimensional spectral embeddings with an ad-hoc optimized parameter choice $\epsilon = 0.0003$ and $\texttt{MinPts} = 30$ achieves $88\%$ accuracy on the assigned points but labels $62\%$ of the vertices as noise. The accuracy of the clustered points outcompetes sign-based clustering, but a large number of points have been assigned to noise. 
One could address this, and likely achieve fairly high total accuracy, by using some second algorithm to assign the noise points based on the assigned ones (for example, in the spirit of  \cite[Algorithm 2]{avrachenkov2021higher}). However, we will next see that a closer inspection of the spectral embeddings reveals that the noise points have an understandable root cause.   

Observe in Figure \ref{fig:livejournal_leading3_pair} (left) that the spectral embeddings exhibit radial streaks.
These radial streaks cause bad separation near the origin while also causing greater spread of points that are distant from the origin. 
This explains why a small value of $\epsilon = 0.0003$ was necessary for \texttt{DBSCAN} to identify two clusters, but then also results in many points being assigned to noise.  

Such radial streaks are a well-known phenomenon associated with large node degree fluctuations that are common in real-world datasets, and multiple ways to address this phenomenon have been proposed \cite{qin2013regularized,pmlr-v23-chaudhuri12,dasgupta2004spectral,gao2018community}. 
One such method is to eliminate the radial component by rescaling embeddings to the unit sphere, an algorithmic step that dates back to \cite{ng2001spectral}. 

Figure \ref{fig:livejournal_leading3_pair} (right) visualizes the renormalized spectral embeddings. 
We observe a much cleaner separation between the clusters, aside from only a relatively small number of points that can reasonably be considered due to noise. 
Applying \texttt{DBSCAN} with ad-hoc parameters $\epsilon = 0.3$ and $\texttt{MinPts} = 50$ returns two clusters with only $0.4\%$ of the vertices labeled as noise and near-perfect $99.3\%$ accuracy on the assigned points.
After normalizing for the well-understood effects of degree fluctuations,  LiveJournal thus behaves as our theory predicts: the embeddings of each community is density-connected and \texttt{DBSCAN} correspondingly attains high accuracy.

\begin{figure}[htbp]
\centering
\includegraphics[width=0.8\textwidth]{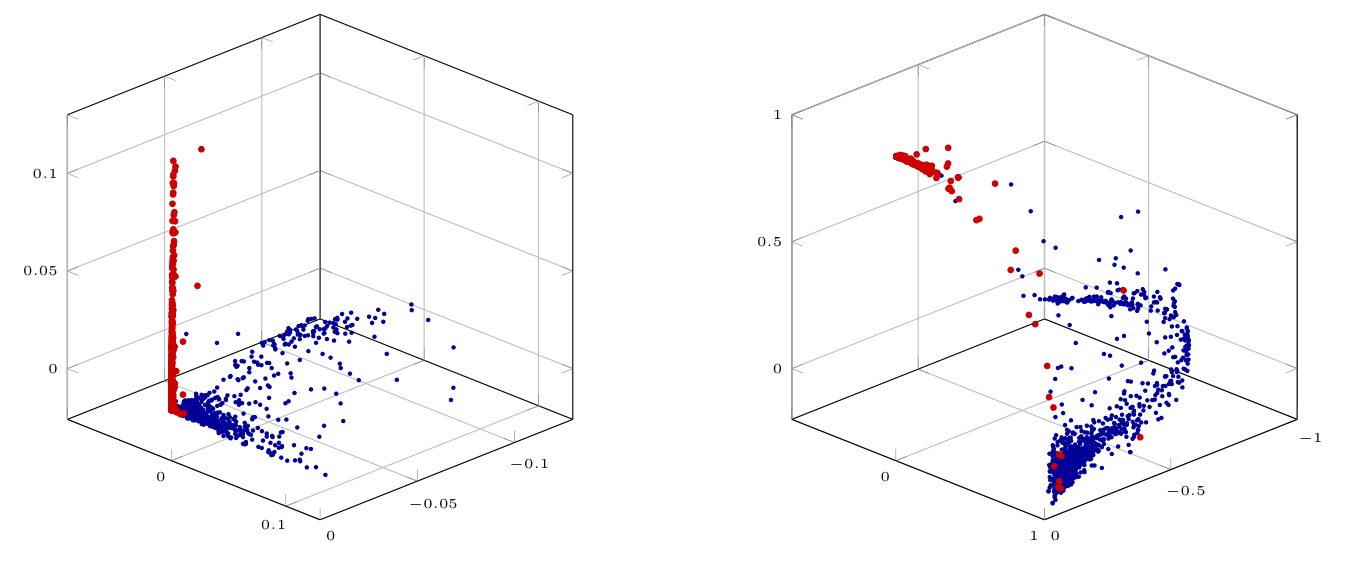}
\caption{ Spectral embedding of LiveJournal given by the rows of
$[\psi_1\ \psi_2\ \psi_3]$, the eigenvectors of the three largest eigenvalues, colored
according to the ground-truth communities.
Left: the unnormalized embedding, with axes
truncated at $\pm 0.13$ ($0.6\%$ of the points fall outside the displayed range). 
We observe radial streaks and poor separation between the communities near the origin. 
Right: the same embedding after
normalizing each row to the unit sphere. 
The two communities become well separated. }
\label{fig:livejournal_leading3_pair}
\end{figure}

\subsection{Conclusion}
This paper studied spectral clustering in the block latent space model, a random graph model with community structure in the presence of a potentially inhomogeneous confounding latent geometry.
Propositions \ref{prop: EigenvaluesMatch} and \ref{prop: EigenvectorsMatch} described the spectrum of the graph's adjacency matrix in terms of a limiting kernel. 
Theorem \ref{thm: AlgGuarantee} leveraged this structure to give a consistency guarantee for \texttt{DBSPEC}, a \texttt{DBSCAN}-based higher-order spectral clustering algorithm. 

For real-world data, our results are consistent with the observed behavior once the understood effects of large cliques and degree fluctuations are taken into account.
We found that a higher-order eigenvalue can give better performance than the classical second eigenvalue, and our theory gives good predictions for the location of the ideal eigenvalue. 
Further, robust multidimensional spectral embedding can significantly improve performance relative to sign-based assignment with a single eigenvector.

\subsection*{Acknowledgements}
KA has been supported by the French government, through the France 2030 investment plan managed by the Agence Nationale de la Recherche, as part of the Université Côte d'Azur's Initiative of Excellence, reference ANR-15-IDEX-0001 and of the ``UCA DS4H'' project, reference ANR-17-EURE-0004. LS has been supported by CAPES (Coordenação de Aperfeiçoamento de Pessoal de Nível Superior) under project MATH-AMSUD 88881.694479/2022-01 and supported by CNPq (Conselho Nacional de Desenvolvimento Científico e Tecnológico) under project 140162/2022-4. AVW is funded by the Deutsche Forschungsgemeinschaft (DFG, German Research Foundation) under Germany's Excellence Strategy EXC 2044/2--390685587, Mathematics Münster: Dynamics–Geometry–Structure, and also acknowledges support from project OCENW.KLEIN.324 of the research programme Open Competition Domain Science – M, which is partly financed by the Dutch Research Council (NWO).

\subsection*{Declarations}
The use of generative artificial-intelligence in the text portions of this manuscript was mostly restricted to occasional language editing, or help locating certain references. 
In the experimental part, Claude (Anthropic, model Fable 5) was  used to assist with writing and refactoring the analysis notebooks and generating figures. The authors reviewed all generated code.   
The notebooks were re-executed end-to-end and the authors verified the reported numbers against the computations.  
The authors assume responsibility for all content.

\newpage 
 
  \appendix
    \section{Proof for Remark \ref{rem: MutlipleClusters}}
    Recall that Remark \ref{rem: MutlipleClusters} concerns generalizations to settings with multiple clusters and non-uniform cluster label assignments. 

    \subsection{Model and performance guarantee}\label{sec: ModelDefinition2}
    As in the main text, we assume that nodes are independently assigned latent positions $X_1,\ldots,X_n$ following some probability measure $\mu$ on $\cX \subseteq \bbR^d$, and we consider some kernel $K:\cX\times \cX \to [0,1]$. 
    However, we now assume that cluster assignments $\sigma_v$ are in $\{1,\ldots,k \}$ for some $k\geq 2$, independently with $\mathbb{P}(\sigma_v = j) = q_j$ for $q_1,\ldots,q_k>0$ with $\sum_i q_i = 1$. 
    Further, fix some $\bP\in [0,1]^{k\times k}$ that is symmetric, $\bP^T = \bP$. 

    A random graph $G = (V,E)$ on $n$ nodes is then said to come from the \emph{general $k$-cluster block latent space model} if its edges are present conditionally independently given the latent positions $X_v$ and cluster assignments $\sigma_v$, with conditional probability specified by 
    \begin{align} 
        \bbP_{X,\sigma}\bigl((v,w)\in E \bigr) = 
            \rho_n K(X_v,X_w) \bP(\sigma_v,\sigma_w).\label{eq: Def_PXsigma2}
    \end{align} 

    As before, we adopt Assumptions \ref{ass: ConnectedLatent} and \ref{ass: ContinuityKernel} concerning the connectedness of $\cX$ and $L^2$-continuity of $K$.
    As a replacement for Assumption \ref{ass: Nontrivial}, we consider the following:
    \begin{assumption}\label{ass: NontrivialApx}
        The rows of $\bP$ are pairwise distinct. 
        Further, assume that for every $x\in \cX$ it holds that  
        $
        \int_{\cX} K(x,y)\, \intd\mu(y) >0 
        $.  
    \end{assumption} 
     
    Consider the matrix $\tilde{\bP}$ defined by 
    \begin{align}
        \tilde{\bP} \de  \bP \operatorname{diag}(q_1,\ldots,q_k),
    \end{align}
    where $\operatorname{diag}(\cdot)$ refers to the diagonal matrix with the specified diagonal.
    Let $\tau_1,\ldots,\tau_k\in\mathbb{R}$ be the eigenvalues of $\tilde{\bP}$. 
    The following generalizes Theorem \ref{thm: AlgGuarantee}.

\begin{theorem}\label{thm: AlgGuarantee2}
    Suppose that we are given a Borel set $B$ with 
    \begin{align}
        \{\tau_1\kappa_*,\ldots,\tau_k \kappa_*: j\leq k, \tau_j \neq 0 \}\subseteq  B.\label{eq:GoodOwl}
    \end{align}
    Assume that $B$ satisfies $\nu(\partial B) = 0$ and $0\not\in \overline{B}$. 
    Further, assume that $\rho_n = \omega(\ln(n)/n)$ and let 
    \begin{align} 
        \Lambda_n \de \{ n\rho_n y : y\in B \}. \label{eq: Def_Kn2}
    \end{align}
    Then, there exist  $c_1,c_2>0$ such that the partition $\sC$ of $\{1,\ldots,n \}$ output by Algorithm \ref{alg:high-density} with parameters $\varepsilon = c_1/\sqrt{n}$ and $\textnormal{\texttt{MinPts}} = c_2n$  satisfies the following asymptotically almost surely: 
    \begin{description}
        \item[(a)] {\bf Linearly sized clusters.} All except $k$ parts in $\sC$ have cardinality $o(n)$. The remaining $k$ parts $\cC_1,\ldots,\cC_k \in \sC$ satisfy that $\#\cC_i = q_in - o(n).$ 
        \item[(b)] {\bf Almost exact recovery.} There exists a permutation $\pi$ such that $\sigma_v =i$ for all except $o(n)$ members $v \in \cC_{\pi(i)}$ for every $i\leq k$.  
    \end{description}
\end{theorem}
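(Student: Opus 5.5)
The plan is to rerun the proof of Theorem \ref{thm: AlgGuarantee} with the two-cluster spectral input replaced by its $k$-cluster analogue. The separation lemma is the only place where a new idea is needed.

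\textbf{Spectral limits.} First I extend Propositions \ref{prop: EigenvaluesMatch} and \ref{prop: EigenvectorsMatch}. Lemma \ref{lem: MatrixBernstein} holds verbatim, since it never used $k=2$. The limiting operator is the integral operator on $\cX\times\{1,\ldots,k\}$ with kernel $K(x,y)\bP(s,t)$ and measure $\mu\otimes\operatorname{diag}(q)$. In $L^2$ of this measure it acts as $\tilde{\bP}\otimes\bbK$. Although $\tilde{\bP}$ is not symmetric, it is similar to the symmetric matrix $D^{1/2}\bP D^{1/2}$ with $D=\operatorname{diag}(q)$. Hence $\tilde{\bP}$ has real eigenvalues $\tau_j$, and it has right eigenvectors $u_j$ that are orthonormal in $\ell^2(q)$. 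The eigenfunctions of the limit operator are therefore $(s,x)\mapsto u_j(s)\varphi_i(x)$, with eigenvalues $\tau_j\kappa_i$.

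The results of \cite{koltchinskii2000random,koltchinskii1998asymptotics} apply unchanged, because $\sA$ is again a bounded symmetric kernel. Together with Weyl's inequality and Davis--Kahan, exactly as in Section \ref{sec: ProofOperator}, this gives the Procrustes approximation \eqref{eq:Procruste}. Here $\Psi_B$ has columns $\psi_{i,j}$ with $(\psi_{i,j})_v\propto u_j(\sigma_v)\varphi_i(X_v)$ and normalisation $Z_n\sqrt n\to1$.

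\textbf{Transfer to DBSCAN.} Lemma \ref{lem: continuity} holds as before: within a fixed cluster, $V_v$ depends continuously on $X_v$ through the $\varphi_i$. The real change is in Lemma \ref{lem: discontinuity}, which I expect to be the main obstacle. By \eqref{eq:GoodOwl}, $V_v$ contains the coordinates $Z_n u_j(\sigma_v)\varphi_*(X_v)$ for every $j$ with $\tau_j\neq0$. If $\sigma_v=s\neq t=\sigma_w$ and every such coordinate nearly agrees, then $u_j(s)\varphi_*(X_v)\approx u_j(t)\varphi_*(X_w)$ for all $j$ with $\tau_j\neq0$. I want to show this is impossible.

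Write $r=\varphi_*(X_w)/\varphi_*(X_v)$, which lies in a compact subset of $(0,\infty)$ because $\varphi_*$ is continuous and bounded below by Lemma \ref{lem: StrictPos}. The relation then says that the vector $(u_j(s)-r\,u_j(t))_{j:\tau_j\neq0}$ is small. Multiplying by $\tau_j$ and reconstructing through $\tilde{\bP}=\sum_j\tau_j u_j u_j^\T D$, I get that row $s$ of $\tilde{\bP}$ is close to $r$ times row $t$. Equivalently, $\bP(s,\cdot)\approx r\,\bP(t,\cdot)$.

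Assumption \ref{ass: NontrivialApx} only rules out equality, that is $r=1$. To exclude proportional rows, I would first try to use the $\kappa_1$-block: the $\varphi_*$ coordinate for the top eigenvalue together with the $\tau_j$ structure pins down $r=1$ up to a term controlled by the degree profile. If that fails, I would strengthen the assumption to pairwise non-proportional rows, and note that this is what the appendix argument must effectively use.

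Once proportionality is excluded, a compactness argument over $r$ gives a uniform $\varepsilon_0>0$ with $\Vert V_v-V_w\Vert>\varepsilon_0/\sqrt n$ whenever $\sigma_v\neq\sigma_w$. The compactness runs over the range of $r$ and the finitely many pairs $s\neq t$.

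\textbf{Conclusion.} With $\varepsilon_0$ in hand, I choose $\epsilon$ and \texttt{MinPts} exactly as in \eqref{eq: Def_eps} and \eqref{eq: Def_MinPts}, now with $C_2<\min_{j,s}q_s\mu(U_j)$. Lemmas \ref{lem: allgood}, \ref{lem: GoodCore} and \ref{lem: DenseConnected} then hold for each $s\leq k$. Lemma \ref{lem: NotDense} holds with the $k$ regions $\cR_s$, which are pairwise more than $\epsilon$ apart. Lemma \ref{lem: DBSCAN} therefore yields $k$ distinct parts, each containing all good indices of one cluster. Since $\#\{v:\sigma_v=s\}=q_sn+o(n)$, this gives (a) and (b).
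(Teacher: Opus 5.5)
\textbf{Verdict.} Your proposal cannot prove the theorem as stated, because the statement is false. Your diagnosis of the separation step is correct, and it exposes an error in the paper's own proof rather than an assumption that proof "effectively uses". Your second option, strengthening the hypothesis, is the right repair.

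\textbf{The error in the paper.} The paper proves Lemma \ref{lem: discontinuity2} through the identity $\lvert(\tilde\psi_*^{(j)})_v-(\tilde\psi_*^{(j)})_w\rvert=\lvert\varphi_*(X_v)\rvert\, Z_j\,\lvert(\varrho_j)_{\sigma_v}-(\varrho_j)_{\sigma_w}\rvert$. This silently replaces $\varphi_*(X_w)$ by $\varphi_*(X_v)$, which is only valid when $\varphi_*$ is constant, e.g.\ on the uniform torus. The true difference is $Z_j\lvert(\varrho_j)_{\sigma_v}\varphi_*(X_v)-(\varrho_j)_{\sigma_w}\varphi_*(X_w)\rvert$, which is exactly what you analyse. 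Your reconstruction through $\tilde\bP=\sum_j\tau_j u_j u_j^{\top}D$ correctly shows that if all these differences are small, then $\bP(s,\cdot)\approx r\,\bP(t,\cdot)$ with $r=\varphi_*(X_w)/\varphi_*(X_v)$. Distinct rows only exclude $r=1$. For $k=2$ with $\bP$ as in \eqref{eq: Def_P}, proportional rows force $a=b$. There Lemma \ref{lem: discontinuity} legitimately uses the sign structure $\varphi_*(X_v)+\varphi_*(X_w)$, so Theorem \ref{thm: AlgGuarantee} is unaffected.

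\textbf{Why your first fallback fails.} Recovering $r=1$ from the top block cannot succeed, because the statement is false under Assumption \ref{ass: NontrivialApx}. Take $\cX=[1,3]$ with $\mu$ uniform, $K(x,y)=xy/9$, $q_1=q_2=1/2$, and $\bP=pp^{\top}$ with $p=(1,1/2)$.
\begin{itemize}
\item All assumptions hold, and the rows $(1,1/2)$ and $(1/2,1/4)$ are distinct.
\item The edge probabilities are $\rho_n\theta_v\theta_w/9$ with $\theta_v=p_{\sigma_v}X_v$. So the graph depends on the labels only through $\theta_v$.
\item The conditional laws of $\theta_v$ are $\operatorname{Unif}[1,3]$ and $\operatorname{Unif}[1/2,3/2]$, which overlap on $[1,3/2]$. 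Hence a linear number of vertices is misclassified by any method, even up to relabelling.
\item Concretely, $\bbK$ and $\tilde\bP$ have rank one, so the only nonzero limiting eigenvalue is $\tau_1\kappa_*$. The spectral embedding is one-dimensional and proportional to $p_{\sigma_v}X_v$.
\item The two cluster images overlap, so the good points of both clusters become density connected and \texttt{DBSCAN} returns one part, violating (a).
\end{itemize}

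\textbf{The repair.} The missing ingredient is a hypothesis, not an argument. Assume the rows of $\bP$ are pairwise non-proportional; it even suffices to exclude ratios $r\in[\min\varphi_*/\max\varphi_*,\ \max\varphi_*/\min\varphi_*]$. Under that hypothesis your compactness argument yields $\varepsilon_0$, and the rest of your plan gives a complete proof of the corrected theorem along the paper's lines. That rest consists of:
\begin{itemize}
\item the $k$ regions $\cR_s$;
\item $C_2<\min_{j,s}q_s\mu(U_j)$;
\item your choice of right eigenvectors of $\tilde\bP$ that are orthonormal in $\ell^2(q)$.
\end{itemize}
The last point is implicitly needed for the Procrustes approximation, and the paper glosses over it.
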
 
We explain the necessary modifications to the statements and proofs of Propositions \ref{prop: EigenvaluesMatch} and \ref{prop: EigenvectorsMatch} in Appendix \ref{apx: ExtendEigenvec}. 
Given those extensions, the proof of Theorem \ref{thm: AlgGuarantee2} proceeds almost identically to the proof of Theorem \ref{thm: AlgGuarantee} given in Section \ref{sec: ProofAlgGuarantee}, with a minor modification in the proof of Lemma \ref{lem: discontinuity}. 
We give the latter modification in Appendix \ref{sec: ProofThmAlgGuarantee2}.

\subsection{Extensions of Propositions \ref{prop: EigenvaluesMatch} and \ref{prop: EigenvectorsMatch}}\label{apx: ExtendEigenvec}
     Let $\sqcup_{i=1}^k \cK$ be the discrete union of $k$ copies of the latent space and define a measurable function $\tilde{\sA}:(\sqcup_{i=1}^k \cK)\times (\sqcup_{i=1}^k \cK) \to [0,1]$ by 
    \begin{align} 
        \tilde{\sA}\bigl((x_1,\sigma_1), (x_2,\sigma_2)) \de  K(x_1,x_2) \bP(\sigma_1,\sigma_2).  
    \end{align}
    Then, by definition of the connection probabilities, 
    \begin{align}
        \bbE[\bA_{v,w}  \mid X,\sigma] = 
        \begin{cases}
            \rho_n \sA\bigl((X_v, \sigma_v), (X_w, \sigma_w) \bigr) & \textnormal{ if }v\neq w, \\ 
            0 & \textnormal{ if }v=w.\label{eq:GloomyUser2}
        \end{cases}
    \end{align}
    Associated to the kernel $\tilde{\sA}$, we define an integral operator.
    For $f_1,\ldots,f_k \in L^2(\cX, \mu)$, we let  
    \begin{align}
        \tilde{\bbA}(f_1,\ldots,f_k) \de (g_1,\ldots,g_k)\ \  \textnormal{ with }\ \ g_i \de   \sum_{j=1}^k\bP(i,j)q_j \int_{\cX}  K(x,y) f_j(y) \, \intd \mu(y). 
    \end{align} 
    Note that $\tilde{\bbA} = \tilde{\bP} \otimes \bbK$. 
    Hence, $\tilde{\bbA}$ has eigenvalues of the form $\tau_j \kappa_i$.

    \begin{proposition}\label{prop: EigenvaluesMatch2}
        Adopt the notation and assumptions of Appendix \ref{sec: ModelDefinition2}. 
        Let $\hat{\lambda}_1,\ldots,\hat{\lambda}_n$ be the eigenvalues of the rescaled adjacency matrix $\bA/(n\rho_n)$ and assume that $\rho_n = \omega(\ln(n)/n)$.
        Then, there exists a measure $\tilde{\nu}$ such that for any fixed Borel set $B$ with $\nu(\partial B) = 0$ and $0\not\in \overline{B}$, we have 
        \begin{align} 
            \#\{i \leq n:  \hat{\lambda}_i \in B \}
 \to   \tilde{\nu}(B)
        \end{align}
        in probability.
        Moreover, this limiting measure $\tilde{\nu}$ is explicitly given by 
        \begin{align} 
            \tilde{\nu} = \sum_{i=1}^\infty\sum_{j=1}^k \delta_{\tau_j\kappa_i}.\label{eq:MadImp2}
        \end{align}
    \end{proposition}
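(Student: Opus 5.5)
The argument mirrors the proof of Proposition \ref{prop: EigenvaluesMatch}. The only new ingredient is identifying the spectrum of the limiting operator $\tilde{\bbA}$ when the cluster weights $q_j$ are unequal.

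\emph{Step 1: replace $\bA$ by its conditional mean.} Lemma \ref{lem: MatrixBernstein} never used the specific form of $\bP$. It only used that every entry of $\bbE[\bA\mid X,\sigma]$ is at most $\rho_n$, which still holds since $\bP\in[0,1]^{k\times k}$ and $K\le 1$. Hence $\Vert \bA-\bbE[\bA\mid X,\sigma]\Vert/(n\rho_n)\to 0$ in probability whenever $\rho_n=\omega(\ln(n)/n)$.

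\emph{Step 2: the conditional mean as a sampled kernel matrix.} By \eqref{eq:GloomyUser2}, $\bbE[\bA\mid X,\sigma]/\rho_n$ is the matrix $(\bb1\{v\neq w\}\tilde{\sA}(Y_v,Y_w))_{v,w}$. Here $Y_v=(X_v,\sigma_v)$ are i.i.d.\ with law $\mu\otimes q$ on the disjoint union of $k$ copies of $\cX$, where $q$ puts mass $q_j$ on copy $j$. The kernel $\tilde{\sA}$ is symmetric and bounded, hence Hilbert--Schmidt on $L^2(\mu\otimes q)$. So \cite[Theorem 3.1]{koltchinskii2000random} applies: the eigenvalues of $\bbE[\bA\mid X,\sigma]/(n\rho_n)$ converge in rearrangement distance $\delta_2$ to those of the integral operator with kernel $\tilde{\sA}$ and measure $\mu\otimes q$. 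That operator is exactly $\tilde{\bbA}$.

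\emph{Step 3: the spectrum of $\tilde{\bbA}$.} This is the step needing care, because $\tilde{\bP}=\bP\operatorname{diag}(q)$ is not symmetric. I would handle it via the unitary map $(f_j)\mapsto(\sqrt{q_j}f_j)$ from $L^2(\mu\otimes q)$ onto $\bbR^k\otimes L^2(\mu)$ with the unweighted inner product. This map conjugates $\tilde{\bbA}$ to $\bS\otimes\bbK$ with $\bS=\operatorname{diag}(q)^{1/2}\bP\operatorname{diag}(q)^{1/2}$. The matrix $\bS$ is symmetric and similar to $\tilde{\bP}$, so its eigenvalues are $\tau_1,\dots,\tau_k$, all real, with an orthonormal eigenbasis $u_1,\dots,u_k$.

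Consequently $\{u_j\otimes\varphi_i\}$ is an orthonormal eigenbasis of the closure of the range of $\bS\otimes\bbK$, with eigenvalues $\tau_j\kappa_i$. The nonzero spectrum of $\tilde{\bbA}$, counted with multiplicity, is therefore $\{\tau_j\kappa_i:\tau_j\kappa_i\neq0\}$. Eigenvalues equal to zero are irrelevant because $0\notin\overline{B}$. This gives $\tilde{\nu}$ as in \eqref{eq:MadImp2} restricted away from $0$.

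\emph{Step 4: counting and transfer.} Since $\tilde{\nu}(\partial B)=0$ and the only accumulation point of $\{\tau_j\kappa_i\}$ is $0\notin\overline{B}$, there is $\varepsilon>0$ such that no limiting eigenvalue lies within $\varepsilon$ of $\partial B$. The set $B$ also contains only finitely many such eigenvalues. Convergence in $\delta_2$ then gives $\#\{i:\tilde{\lambda}_i\in B\}\to\tilde{\nu}(B)$ and keeps every $\tilde{\lambda}_i$ at distance $>\varepsilon/2$ from $\partial B$ for large $n$.

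Weyl's inequality combined with Step 1 gives $\max_i|\hat{\lambda}_i-\tilde{\lambda}_i|<\varepsilon/2$ with probability tending to one, so no eigenvalue crosses $\partial B$. This yields $\#\{i:\hat{\lambda}_i\in B\}\to\tilde{\nu}(B)$ in probability.

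The main obstacle is Step 3: verifying that the non-self-adjoint-looking form $\tilde{\bP}\otimes\bbK$ is really self-adjoint in the weighted space, and that multiplicities are counted correctly. The symmetrization above resolves both.
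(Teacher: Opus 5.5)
Your proof is correct and follows the same route as the paper. It uses matrix Bernstein to replace $\bA$ by $\bbE[\bA\mid X,\sigma]$, then \cite[Theorem 3.1]{koltchinskii2000random} for the kernel $\tilde{\sA}$ under $Q\otimes\mu$ (whose integral operator is $\tilde{\bbA}$), then Weyl's inequality to transfer the eigenvalue counts. Your Step 3 symmetrization by $\operatorname{diag}(q)^{1/2}\bP\operatorname{diag}(q)^{1/2}$ simply makes rigorous the paper's one-line remark that $\tilde{\bbA}=\tilde{\bP}\otimes\bbK$ has eigenvalues $\tau_j\kappa_i$, including that the $\tau_j$ are real and the multiplicities are counted correctly.
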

    \begin{proof}
    Applying the matrix Bernstein inequality \cite[Theorem 1.6]{tropp2012user} word-for-word as in the proof of Lemma \ref{lem: MatrixBernstein}, it follows from the assumption $\rho_n = \omega(\ln(n)/n)$ that 
    \begin{align}
            \Vert  \bA  - \bbE[\bA  \mid X,\sigma]  \Vert/(n\rho_n) \to 0,\label{eq:CheekyBeetle}
    \end{align}
    in probability.

    Let $Q$ be the probability measure on $\{1,\ldots,k \}$ with weights $q_1,\ldots,q_k$. 
    Then, we observe that $\bbE[\bA_{v,w}  \mid X,\sigma]/\rho_n$ is a random matrix sampled from  the kernel $\tilde{\sA}$ on $\sqcup_{i=1}^k \cX$ when the latter space is equipped with the probability measure $ Q\otimes \mu$.
    The associated integral operator for the kernel $\tilde{\sA}$ and the measure $Q\otimes \mu$ is exactly $\tilde{\bbA}$.
        
    The remainder of the argument now follows exactly as in the proof of Proposition \ref{prop: EigenvaluesMatch}. 
    One can use \cite[Theorem 3.1]{koltchinskii2000random} to match the eigenvalues of $\bbE[\bA \mid X,\sigma]/(n\rho_n)$ with those of $\tilde{\bbA}$. 
    Further, one can use \eqref{eq:CheekyBeetle} and Weyl's perturbation inequality  \cite[Theorem 4.3.15]{horn2012matrix} to match the eigenvalues of $\bA$ with those of $\bbE[\bA \mid X,\sigma]$.
    \end{proof}
    Let $\varrho_1,\ldots,\varrho_k \in \bbR^k$ be the right eigenvectors of $\tilde{\bP}$ with eigenvalues $\tau_1,\ldots,\tau_k$, respectively. 
    Then, recalling from \eqref{eq:UneasyCamel} that we denote $\varphi_i$ for the eigenfunction of $\bbK$, the eigenfunction of the operator $\tilde{\bbA}$ are exactly of the form $\varrho_j\otimes \varphi_i$.  
    As the corresponding generalization for \eqref{eq: Def_psi} we define vectors $\tilde{\psi}_i^{(j)},\ldots,\tilde{\psi}_i^{(j)}\in \bbR^n$ by 
    \begin{align}
        (\tilde{\psi}_i^{(j)})_v \propto (\varrho_{j})_{\sigma(v)} \varphi_i(X_v),\label{eq:GiddyWisp}
    \end{align}
    with the proportionality constant chosen such that these are unit vectors.

    \begin{proposition}\label{prop: EigenvectorsMatch2}
        Adopt the notation and assumptions of Appendix \ref{sec: ModelDefinition2}. 
        Fix a Borel set $B$ with $\tilde{\nu}(B)>0$. 
        Further, assume that $\tilde{\nu}(\partial B) = 0$ and $0 \not\in \overline{B}$. 
        Let $\tilde{\psi}_B$ be the $n\times \tilde{\nu}(B)$ matrix with columns given the $\tilde{\psi}_i^{(j)}$ with $\tau_j\kappa_i \in B$, and let $\hat{\Psi}_B$ be a matrix with columns given by those eigenvectors of $\bA/(n\rho_n )$ with eigenvalues in $B$.

        Assume that $\rho_n = \omega(\ln(n)/n)$.
        Then, it holds asymptotically almost surely that $\hat{\Psi}_B$ is also $n\times \tilde{\nu}(B)$. 
        Moreover, it holds in probability that 
        \begin{align} 
            \min_{\hat{\bO} \in \mathbb{O}(\tilde{\nu}(B))}\Vert \hat{\Psi}_B \hat{\bO} - \tilde{\Psi}_B \Vert_{F} \to 0\label{eq:Procruste2} 
        \end{align}
        where the minimum runs over orthogonal matrices and $\Vert \cdot \Vert_F$ is the Frobenius norm. 
    \end{proposition}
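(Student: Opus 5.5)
The plan is to repeat the proof of Proposition \ref{prop: EigenvectorsMatch} essentially verbatim on the space $\sqcup_{i=1}^k\cX$ with the measure $Q\otimes\mu$. The one genuinely new point is that $\tilde{\bP}=\bP\operatorname{diag}(q_1,\ldots,q_k)$ is not symmetric. So I first fix an orthonormal system of eigenfunctions of $\tilde{\bbA}$ in the correct Hilbert space.

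\emph{Step 1: normalizing the eigenvectors of $\tilde{\bP}$.} Write $\bD=\operatorname{diag}(q_1,\ldots,q_k)$. Then $\bD^{1/2}\tilde{\bP}\bD^{-1/2}=\bD^{1/2}\bP\bD^{1/2}$ is symmetric. Hence $\tilde{\bP}$ is diagonalizable with real eigenvalues $\tau_j$, and we may choose the right eigenvectors $\varrho_j=\bD^{-1/2}u_j$, where $\{u_j\}$ is an orthonormal eigenbasis of $\bD^{1/2}\bP\bD^{1/2}$. This choice matters when some $\tau_j$ is repeated. These $\varrho_j$ are then orthonormal in the weighted inner product $\langle \varrho,\varrho'\rangle_q=\sum_l q_l\varrho_l\varrho'_l$.

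Since $\tilde{\sA}$ is symmetric, $\tilde{\bbA}$ is self-adjoint on $L^2(\sqcup_{i=1}^k\cX,Q\otimes\mu)$. Its eigenfunctions $\varrho_j\otimes\varphi_i$, with eigenvalue $\tau_j\kappa_i$, form an orthonormal system there, because
\begin{align}
\langle \varrho_j\otimes\varphi_i,\varrho_{j'}\otimes\varphi_{i'}\rangle_{Q\otimes\mu}=\langle\varrho_j,\varrho_{j'}\rangle_q\langle\varphi_i,\varphi_{i'}\rangle_{\mu}.
\end{align}
By Lemma \ref{lem: Continuity}, each $\varphi_i$ with $\kappa_i\neq0$ is continuous. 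Hence these eigenfunctions are bounded and continuous on the compact space.

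\emph{Step 2: convergence for the conditional expectation.} As observed in the proof of Proposition \ref{prop: EigenvaluesMatch2}, the matrix $\bbE[\bA\mid X,\sigma]/\rho_n$ is sampled from the kernel $\tilde{\sA}$ at the i.i.d.\ points $(X_v,\sigma_v)\sim Q\otimes\mu$. Apply \cite[Theorem 2.1]{koltchinskii1998asymptotics} with the finite function class $\cF=\{\varrho_j\otimes\varphi_i:\tau_j\kappa_i\in B\}$, exactly as in \eqref{eq:IvoryPop}. This gives $\Vert \tilde{\Psi}_B^{\T}\tilde{\bP}_B\tilde{\Psi}_B-\bI\Vert_F\to0$, where the columns of $\tilde{\Psi}_B$ are the vectors \eqref{eq:GiddyWisp}.

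By the law of large numbers, $\tilde{\Psi}_B^{\T}\tilde{\Psi}_B\to\bI$. Indeed, its entries are normalized empirical averages of $(\varrho_j)_{\sigma_v}\varphi_i(X_v)(\varrho_{j'})_{\sigma_v}\varphi_{i'}(X_v)$. These converge to the $L^2(Q\otimes\mu)$ inner products, which equal $\bb1\{(i,j)=(i',j')\}$ by Step 1. This is exactly where the $q$-weighted normalization is needed: with standard-orthonormal $\varrho_j$ the limit would not be the identity. The normalizing constants satisfy $\sqrt{n}Z_n\to1$ for the same reason.

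The trace identity from the proof of Proposition \ref{prop: EigenvectorsMatch} then applies unchanged, with $\bQ=\tilde{\Psi}_B^{\prime\T}\tilde{\Psi}_B$, where $\tilde{\Psi}'_B$ denotes the eigenvectors of $\bbE[\bA\mid X,\sigma]/(n\rho_n)$ with eigenvalues in $B$. It yields $\min_{\bO}\Vert\tilde{\Psi}_B-\tilde{\Psi}'_B\bO\Vert_F\to0$.

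\emph{Step 3: transfer to $\bA$.} Proposition \ref{prop: EigenvaluesMatch2} and $\tilde{\nu}(\partial B)=0$ give two facts. First, asymptotically almost surely the correct count $\tilde{\nu}(B)$ of eigenvalues of $\bA/(n\rho_n)$ lies in $B$. Second, there is a gap of order one separating these eigenvalues from the rest of the spectrum. Combining \eqref{eq:CheekyBeetle} with the Davis--Kahan theorem \cite[Theorem 2]{yu2015useful} then gives $\min_{\bO}\Vert\tilde{\Psi}'_B-\hat{\Psi}_B\bO\Vert_F\to0$ in probability. Composing the two orthogonal alignments from Steps 2 and 3 gives \eqref{eq:Procruste2}.

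I expect the main obstacle to be Step 1, namely the choice of normalization of the $\varrho_j$, particularly when $\tilde{\bP}$ has repeated eigenvalues. The remaining steps are word-for-word adaptations of the two-cluster argument.
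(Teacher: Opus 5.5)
Your proposal is correct and follows the paper's proof. The paper likewise applies \cite[Theorem 2.1]{koltchinskii1998asymptotics} to $\bbE[\bA\mid X,\sigma]/(n\rho_n)$, viewed as sampled from $\tilde{\sA}$ under $Q\otimes\mu$, and then transfers the result to $\bA$ via \eqref{eq:CheekyBeetle} and Davis--Kahan. Your explicit choice of the $\varrho_j$ as orthonormal in the $q$-weighted inner product is a worthwhile addition that the paper leaves implicit: when $\tilde{\bP}$ has a repeated eigenvalue it is genuinely needed, since otherwise $\tilde{\Psi}_B^{\T}\tilde{\Psi}_B\not\to\bI$ and no orthogonal alignment can make the Frobenius distance vanish.
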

    \begin{proof}
        This follows identically to Proposition \ref{prop: EigenvectorsMatch} by using \cite[Theorem 2.1]{koltchinskii1998asymptotics} to control the eigenvectors of $\bbE[\bA \mid  X,\sigma ]/(n\rho_n)$ in terms of those of $\tilde{\bbA}$ and then using the Davis--Kahan theorem \cite[Theorem 2]{yu2015useful} to match the eigenvectors of $\bA$ with those of $\bbE[\bA \mid  X,\sigma ]$.
    \end{proof}

    \subsection{Proof of Theorem \ref{thm: AlgGuarantee2}}\label{sec: ProofThmAlgGuarantee2}
    The main modification to the proof in Section \ref{sec: ProofAlgGuarantee} is an argument that Assumption \ref{ass: NontrivialApx} suffices to guarantee some spacing between the clusters' spectral embeddings.  
    Recall from Algorithm \ref{alg:high-density} that $\hat{V}_w = [(\hat{\psi}_i)_w :  \hat{\lambda}_i \in \Lambda_n]$ takes the $w$th entry of each eigenvector of $\bA$ with an eigenvalue in $\Lambda_n$. 
    Similarly, we define $V_w \de [(\tilde{\psi}_i^{(j)})_w : \tau_j \kappa_i \in  B]$. 
    \begin{lemma}\label{lem: discontinuity2}
        Adopt the notation and assumptions of Theorem \ref{thm: AlgGuarantee2}.
        Then, there exists some $\varepsilon_0>0$ such that such that asymptotically almost surely  $\Vert V_v -V_w \Vert > \varepsilon_0/\sqrt{n}$ whenever $\sigma_v \neq \sigma_w$. 
    \end{lemma}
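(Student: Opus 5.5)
The plan is to isolate the coordinates of $V_w$ coming from the top eigenfunction $\varphi_*$, as in Lemma \ref{lem: discontinuity}, and show that these coordinates already separate the clusters.

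Condition \eqref{eq:GoodOwl} puts every product $\tau_j\kappa_*$ with $\tau_j\neq 0$ in $B$. Hence $V_w$ contains all coordinates $(\tilde{\psi}_*^{(j)})_w$ for those $j$. By \eqref{eq:GiddyWisp} these equal $Z_n^{(j)}(\varrho_j)_{\sigma_w}\varphi_*(X_w)$, where $Z_n^{(j)}$ is the normalising constant. The law of large numbers gives $\sqrt{n}Z_n^{(j)}\to c_j>0$ almost surely, with $c_j^{-2}=\sum_s q_s(\varrho_j)_s^2$ since $\Vert\varphi_*\Vert_{L^2(\mu)}=1$. So, up to these constants, the relevant block of $\sqrt{n}V_w$ is $\varphi_*(X_w)\,u_{\sigma_w}$, where $u_s\in\bbR^{J}$ collects the entries $(\varrho_j)_s$ (scaled by $c_j$) over the index set $J=\{j:\tau_j\neq0\}$.

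The key linear-algebra step is to show that the vectors $u_1,\dots,u_k$ are pairwise non-parallel: $u_s\neq tu_{s'}$ for every $t>0$ when $s\neq s'$. Since $\tilde{\bP}=\bP\operatorname{diag}(q)$ is similar to the symmetric matrix $D^{1/2}\bP D^{1/2}$ with $D=\operatorname{diag}(q)$, it is diagonalisable. Its eigenvectors $\varrho_j$ can be taken $D$-orthonormal, so the matrix $R$ with columns $\varrho_j$ is invertible. Writing $\bP=\sum_j\tau_j\varrho_j\varrho_j^{\T}$ (using this $D$-orthonormality), row $s$ of $\bP$ becomes $\sum_{j\in J}\tau_j(\varrho_j)_s\varrho_j^{\T}$, which depends only on $\{(\varrho_j)_s\}_{j\in J}$. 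If $u_s=tu_{s'}$, then row $s$ of $\bP$ equals $t$ times row $s'$.

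Scaling is the main obstacle, because Assumption \ref{ass: NontrivialApx} only forbids equal rows, not proportional ones. The radial factor $\varphi_*(X_w)$ could then let $\varphi_*(X_v)u_s$ coincide with $\varphi_*(X_w)u_{s'}$. I would handle this by using the other coordinates of $V_w$ too, or by noting that proportional rows with $t\neq1$ do not directly contradict the assumption. The clean fix is to assume, or to extract from the setting, that $\bP$ has no two positively proportional rows, which matches the paper's "no repeated rows" only if a compactness argument over $t$ is available. Given non-parallel $u_s$, I finish by compactness. The map $(x,y)\mapsto\Vert\varphi_*(x)u_s-\varphi_*(y)u_{s'}\Vert$ is continuous on the compact set $\cX\times\cX$, and $\varphi_*\ge c>0$ by Lemma \ref{lem: StrictPos}. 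Non-parallelism makes this map strictly positive, so it is bounded below by some $2\varepsilon_0$ for each of the finitely many pairs $s\neq s'$.

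Taking the minimum over pairs and using $\sqrt{n}Z_n^{(j)}\to c_j$ gives $\Vert V_v-V_w\Vert\ge\Vert(\text{block})_v-(\text{block})_w\Vert>\varepsilon_0/\sqrt{n}$ asymptotically almost surely, uniformly in $v,w$ with $\sigma_v\neq\sigma_w$. I expect the non-parallel step to be the delicate point. My first attempt would be to check whether distinct rows actually imply non-proportionality of the $u_s$ through the identity $\bP_{s,s'}=\langle u_s,\Lambda u_{s'}\rangle$: proportionality $u_s=tu_{s'}$ forces $\bP_{ss}=t^2\bP_{s's'}$ and $\bP_{ss'}=t\bP_{s's'}$. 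If this does not force $t=1$, I would fall back to the strengthened assumption.
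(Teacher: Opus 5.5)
\textbf{The obstacle you flag is real, and the stated hypotheses cannot remove it.} Under Assumption \ref{ass: NontrivialApx} alone the lemma is false. So neither of your fallbacks can work: you cannot derive $t=1$ from $\bP_{ss}=t^2\bP_{s's'}$ and $\bP_{ss'}=t\bP_{s's'}$, and you cannot use further coordinates of $V_w$.

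Here is a counterexample. Take $k=2$ and $\bP=\theta\theta^{\top}$ with $0<\theta_2<\theta_1\le 1$. The rows are distinct. $\tilde{\bP}$ has the single nonzero eigenvalue $\tau=q_1\theta_1^2+q_2\theta_2^2$, with right eigenvector $\theta$. A small open interval around $\tau\kappa_*$ is an admissible $B$ that contains no other eigenvalue, since $\kappa_*$ is simple by Jentzsch's theorem. Hence $V_w=Z_n\theta_{\sigma_w}\varphi_*(X_w)$ is a scalar.

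Now suppose $\varphi_*$ is non-constant. For instance, take $\cX=[0,1]$, $\mu$ uniform and $K(x,y)=\mathbf{1}\{\lvert x-y\rvert\le R\}$ with $R<1/2$; the boundary makes $\mathbb{K}\mathbf{1}$ non-constant, so $\varphi_*$ cannot be constant. Suppose also $\theta_2/\theta_1\in(\min\varphi_*/\max\varphi_*,1)$. The intermediate value theorem gives $x,y\in\cX$ with $\theta_1\varphi_*(x)=\theta_2\varphi_*(y)$. Continuity of $\varphi_*$ and full support of $\mu$ then give the following: for every $\varepsilon_0>0$, with probability tending to one, there are vertices $v,w$ with $\sigma_v=1$, $\sigma_w=2$ and $\sqrt{n}\lvert V_v-V_w\rvert<\varepsilon_0$.

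The paper's own proof misses exactly this. Its displayed equality replaces $(\varrho_j)_{\sigma(v)}\varphi_*(X_v)-(\varrho_j)_{\sigma(w)}\varphi_*(X_w)$ by $\varphi_*(X_v)\bigl((\varrho_j)_{\sigma(v)}-(\varrho_j)_{\sigma(w)}\bigr)$. That step is valid only when $\varphi_*(X_v)=\varphi_*(X_w)$, for example when $\varphi_*$ is constant on a homogeneous torus. In the two-cluster Lemma \ref{lem: discontinuity} there is no problem, because the relevant eigenvector $(1,-1)$ has entries of opposite sign. Equivalently, the rows $(a,b)$ and $(b,a)$ are never positively proportional when $a\neq b$ and $a+b>0$.

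\textbf{What your argument does establish.} It proves the lemma under the stronger hypothesis that no two rows of $\bP$ are positively proportional. This is the hypothesis the paper's argument for Theorem \ref{thm: AlgGuarantee2} actually requires. Your steps are sound:
\begin{enumerate}
\item Set $D=\operatorname{diag}(q_1,\dots,q_k)$ and choose the $\varrho_j$ to be $D$-orthonormal. Then $\bP=\sum_j\tau_j\varrho_j\varrho_j^{\top}$.
\item Hence $u_s=tu_{s'}$ with $t>0$ forces $\bP_{s\cdot}=t\bP_{s'\cdot}$. You should add that the case $u_s=u_{s'}=0$ is excluded because it would give two zero rows.
\item Your compactness argument with $\varphi_*\ge c>0$ then gives a bound that is uniform in $v,w$.
\end{enumerate}
With this normalisation $c_j=1$, which is consistent with the paper's claim $\sqrt{n}Z_j\to1$.

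One correction to your framing: a compactness argument over $t$ is not what matters. Since the $\varrho_j$ are linearly independent, $u_s=tu_{s'}$ is in fact equivalent to $\bP_{s\cdot}=t\bP_{s'\cdot}$. So when $V_w$ carries only the $\kappa_*$-coordinates, these coordinates separate the clusters exactly when no two rows of $\bP$ have a positive ratio $t$ in $[\min\varphi_*/\max\varphi_*,\max\varphi_*/\min\varphi_*]$. That interval depends on the geometry, and distinct rows exclude only the point $t=1$ from it.
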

    \begin{proof}
        Recall from Lemma \ref{lem: StrictPos} that the eigenfunction $\varphi_*$ associated with the greatest eigenvalue $\kappa_*$ of $\bbK$ is strictly positive. 
        In particular, there exists some $c>0$ such that $\varphi_*(x)>c$ for all $x\in \cX$. 

        The assumption that the rows of the symmetric matrix $P$ are pairwise distinct implies that for every $s,t \in \{1,\ldots,k \}$ there exists some vector $x$ in its column space with $x_{s} \neq x_{t}$. 
        Note that the column space of the matrix $\tilde{P} = P \operatorname{diag}(q_1,,\ldots,q_k)$ is equal to that of $P$ and spanned by the right eigenvectors $\rho_j$ of $\tilde{P}$ with nonzero eigenvalue $\tau_j \neq 0$.  
        It follows that there is some $\rho_j$ with nonzero eigenvalue and $(\rho_j)_{s} \neq (\rho_{j})_{t}$. 
        Let $r>0$ some small constant that lower bounds the absolute difference, say $r \de \min_{s\neq t} \max_{j:\tau_j\neq 0} \lvert (\rho_j)_{s} - (\rho_j)_{s}  \rvert$.  
        
        Recall that we assumed in \eqref{eq:GoodOwl} of Theorem \ref{thm: AlgGuarantee2} that $\tau_j\kappa_* \in B$ for all values with $\tau_j\neq 0$.
        It follows from the definition \eqref{eq:GiddyWisp} of $\psi_*^{(j)}$ that with $Z_j$ the (random) proportionality constant for rescaling to a unit vector, 
        \begin{align}
            \Vert V_v - V_w \Vert \geq \max_{j} \lvert (\psi_*^{(j)})_v - (\psi_*^{(j)})_w\rvert  =\lvert \varphi_*(X_v)  \rvert \max_{j} Z_j \lvert (\rho_{j})_{\sigma(v)} - (\rho_{j})_{\sigma(w)} \rvert \geq cr \max_{j}Z_j.    
        \end{align}
        Here, we used that $\sigma_v \neq \sigma_w$. 
        Use that $\sqrt{n}Z_j \to 1$ almost surely, by the law of large numbers, to conclude.  
    \end{proof}

    \begin{proof}[Proof of Theorem \ref{thm: AlgGuarantee2}]
    This follows identically to the proof of Theorem \ref{thm: AlgGuarantee} given in Section \ref{sec: ProofAlgGuarantee}, except that we use Lemma \ref{lem: discontinuity2} in place of Lemma \ref{lem: discontinuity}.        
    \end{proof}

    \section{Additional synthetic experiments}

    We continue discussing the model from Example~\ref{example: Geometric}. 
    This example serves to highlight that $\mathcal{X}$ does not need to be a torus, the latter being a common assumption in previous works.
    
    Consider two communities with 1500 nodes each. 
    We want to apply our algorithm to recover communities of the block latent-space model with the following parameters $R=0.35$, $a=0.75$ and $b=0.2$ and to investigate the effect of the underlying topology. 
    In the scenario with the square space $\mathcal{X}=[0,1]^2$, the ideal eigenvalue is $\hat{\lambda}_*=256.52$.
    The eigenvalue $\hat{\lambda}_4=255.08$ is closest to $\hat{\lambda}_*$ and we depict the coordinates of $\psi_4$ in Figure~\ref{fig:02} (left). From Figure~\ref{fig:02} (left) it is easy to see that Algorithm~\ref{alg:high-density} can accurately classify all nodes of the graph. Figure~\ref{fig:02} (right)  depicts the coordinates of the eigenvector associated with $\lambda_2=308.87$, which clearly would not recover the true community labels, showing that the standard spectral clustering does not work in this example. 

\begin{figure}[htbp]
    \centering
    \includegraphics[width=\textwidth]{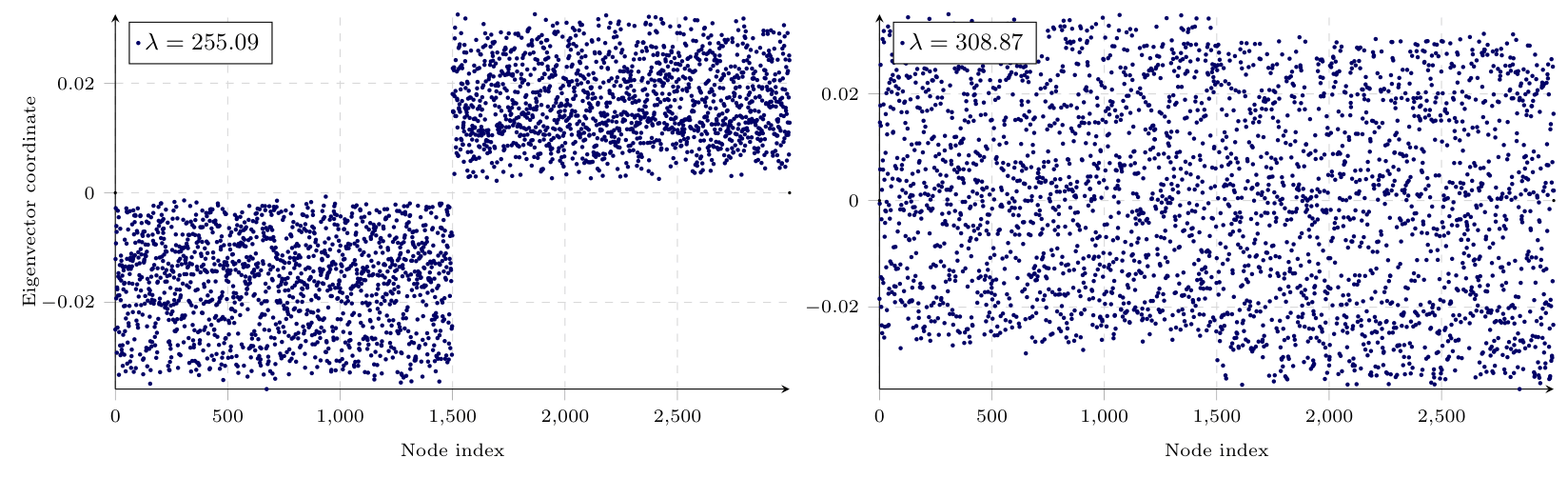}
    \caption{On the left, plot of the eigenvector coordinates associated with $\hat{\lambda}_4 = 255.09$. On the right, the eigenvector coordinates associated with $\hat{\lambda}_2 = 308.87$.}
    \label{fig:02}
\end{figure}

To see how the ideal eigenvalue changes with the geometry, we now consider the same parameters, but with $\cX = \mathbb{T}^2$, the unitary two-dimensional torus. We now have $\hat{\lambda}_*=317.26$. Also, Figure~\ref{fig:02atorus} depicts the coordinates of the eigenvectors associated with  $\hat{\lambda}_2=318.81$ (left) and  $\hat{\lambda}_4=281.19$ (right), respectively. One can see that $\psi_2$ recover the true communities. 
Compared with the previous example, this shows that the location of the optimal eigenvalue depends on the underlying geometry and may or may not coincide with the second eigenvalue.

\begin{figure}[htbp]
    \centering
    \includegraphics[width=\textwidth]{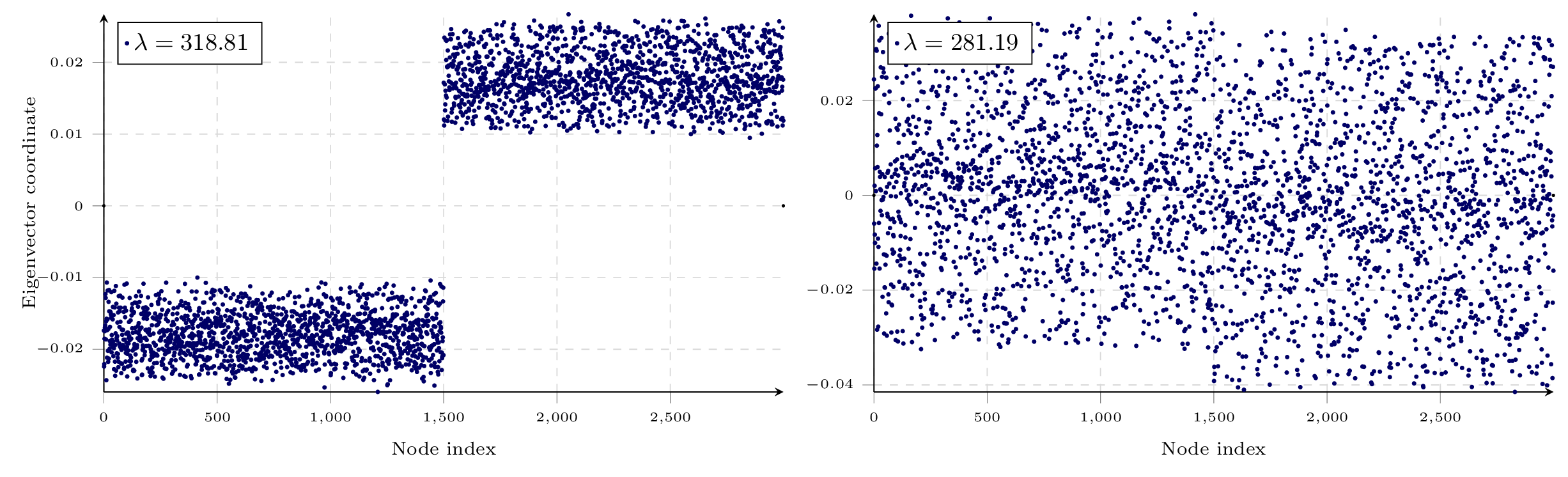}
    \caption{On the left, plot of the eigenvector coordinates of $\psi_2$ associated with $\lambda_2=318.81$. On the right, plot of the eigenvector coordinates of $\psi_4$ associated with $\lambda_4=281.19$.}
    \label{fig:02atorus}
\end{figure}
   
  We now examine a regime where $\lambda_*$ is a multiple eigenvalue in the limiting spectrum. As shown in equation (10) of \cite{avrachenkov2021higher}, this situation occurs for specific values of the model parameters. 
  (See also Appendix \ref{sec:OnParticularModel}.)
  In this setting, we compare the algorithms \texttt{HOSC}~\cite{avrachenkov2021higher} and \texttt{DBSPEC}. 
  Let us continue with $\cX = \mathbb{T}^2$ and set $n=3000$, $R=1/2\pi$, $\rho_n=1$,   and $K(x,y)=\bb1\{||x-y||\leq R\}$. We vary $a$ from $0.15$ to $1$, with step size $0.05$. For each $a$, we let $b$ be given by
   \begin{align}
    b = a\frac{1-\operatorname{sinc}(2\pi R)}{1+\operatorname{sinc}(2\pi R)}, \label{eq:AngryLeaf}
   \end{align}
    the ideal eigenvalue $\lambda_*$ is non-simple in the limiting spectrum; see equation (\ref{eq:plus_cond2}) in Appendix~\ref{sec:OnParticularModel}.
    
    Figure~\ref{fig:05} depicts the evolution of accuracy for \texttt{HOSC} (blue) and \texttt{DBSPEC} (red) for such pairs $(a,b)$. Here \texttt{DBSPEC} is run with the three eigenvectors whose eigenvalues are closest to $\lambda_*$, and its density-based step is \texttt{DBSCAN} with $\epsilon=0.01$ and $\mathrm{MinPts}=10$ in the Euclidean metric. 
    Reported accuracies are averaged over 30 independent runs, and bars indicate standard deviation for the individual samples. 
    We note that \texttt{DBSPEC} recovers the true communities for every pair $(a,b)$ consistently, while \texttt{HOSC} accuracy fluctuates wildly.  
    This confirms the benefit of the new algorithm being robust to non-simple eigenvalues.

    \begin{figure}[htbp]
    \centering
    \includegraphics[width=0.6\textwidth]{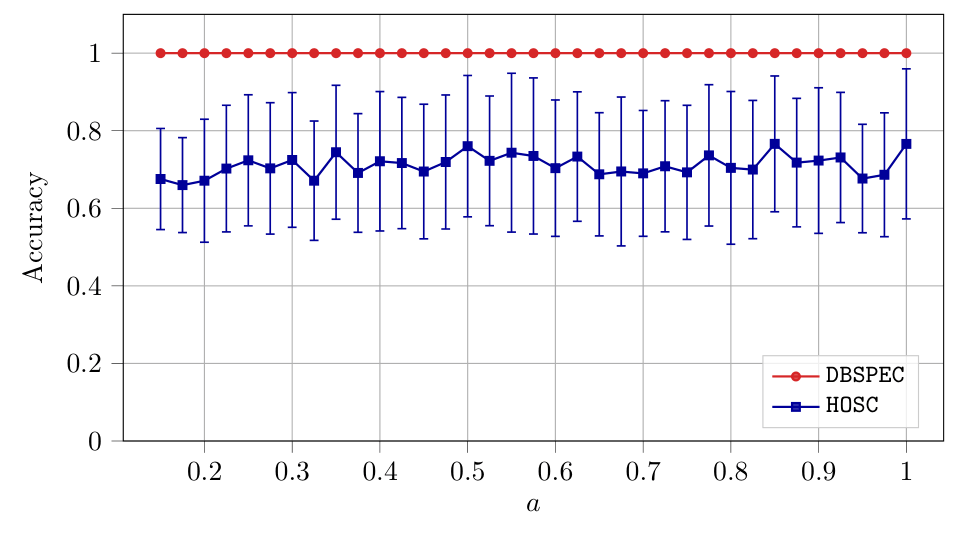}
      \caption{Accuracy of SGBM for  \texttt{HOSC}~\cite{avrachenkov2021higher} and \texttt{DBSPEC} with $b$ is chosen depending on $a$ as in \eqref{eq:AngryLeaf}. Due to the linearity of \eqref{eq:AngryLeaf}, changing $a$ is here equivalent to modifying the sparsity parameter $\rho_n$. The bars indicate one standard deviation for the individual samples across runs.}   
      \label{fig:05}
    \end{figure}

    Finally let us consider an example of sparse graphs.
    Specifically, we take $\rho_n=10\cdot\frac{(\log n)^2}{n}$ and vary  $n$ from $500$ to $10000$. The other parameters of SGBM are given by $a=1/2$ and $b=1/5$. Figure~\ref{fig:06} depicts the evolution of the accuracy and the density as $n$ grows. Reported accuracies are averaged over 50 independent runs. We observe that \texttt{DBSPEC} recovers the true community clusters even for modest size of $n$.

    \begin{figure}[htbp]
    \centering
    \includegraphics[width=0.8\textwidth]{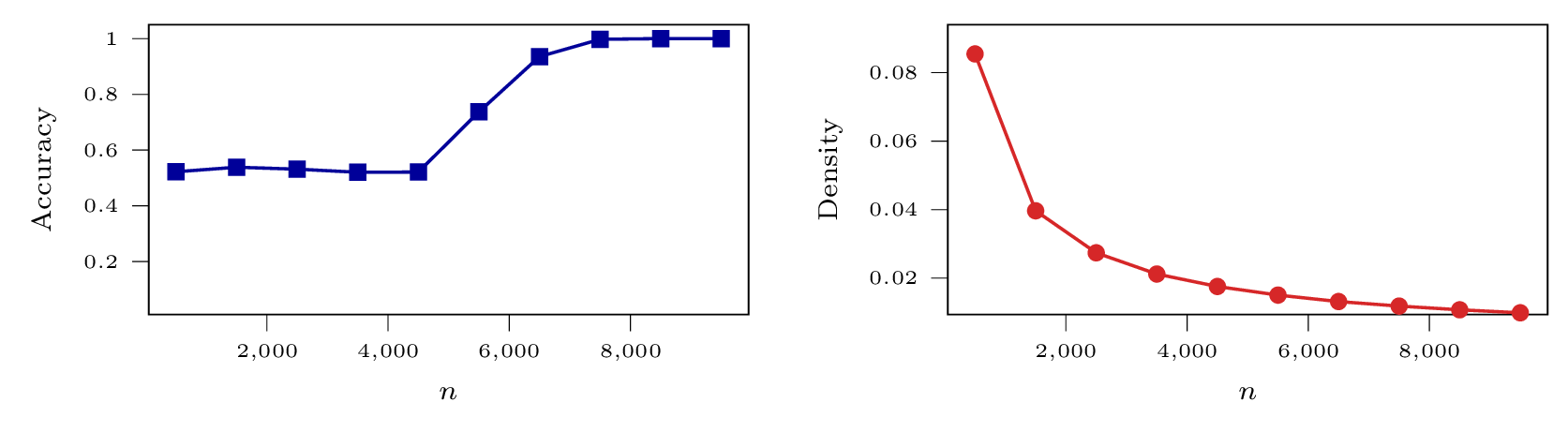}
      \caption{On the left, evolution of \texttt{DBSPEC} accuracy in the sparse regime as $n$ increases, that is, the proportion of correctly classified vertices. On the right, the average density for each $n$, where the density is the ratio between the number of edges and the total number of possible edges.}

      \label{fig:06}
    \end{figure}

\section{When the ideal eigenvalue is non-simple in the limiting spectrum}\label{sec:OnParticularModel}

Let us consider a model similar to Example~\ref{example: Geometric}, with the only difference in the choice $\mathcal{X}=\mathbb{T}^2$, the unitary torus, which can be represented by $\left[-\frac{1}{2},\frac{1}{2}\right]^2$ with the toroidal metric and specify a condition of the ideal eigenvalue multiplicity. 
As in \cite{avrachenkov2021higher}, we define
$$F_{\text{in}}(x) = a \cdot \bb1\{\Vert x-y \Vert \leq R\}
\quad \mbox{and} \quad
F_{\text{out}}(x) = b \cdot \bb1\{\Vert x-y \Vert \leq R\},
$$
for some $b<a \in (0,1]$,
$$
\mu_{in}=\int_{\mathbb{T}^2 }F_{in}(x)dx=a(2R)^2
\quad \mbox{and} \quad 
\mu_{out}=\int_{\mathbb{T}^2 }F_{out}(x)dx=b(2R)^2,$$
and
$$
\Hat{F}_{\mathrm{in}}(k)=\int_{\mathbb{T}^2}F_{\mathrm{in}}(x)e^{-2i\pi \langle k,x\rangle}dx
\quad \mbox{and} \quad
\Hat{F}_{\mathrm{out}}(k)=\int_{\mathbb{T}^2}F_{\mathrm{out}}(x)e^{-2i\pi \langle k,x\rangle}dx,
$$
the Fourier transforms of $F_{\mathrm{in}}(x)$ and $F_{\mathrm{out}}(x)$.

From Theorem~1 and Proposition~1 of \cite{avrachenkov2021higher}, we know that the ideal eigenvalue $\lambda_*$ will not be simple in the limiting spectrum, if and only if
\begin{equation}
\label{eq:plus_cond}
\Hat{F}_{\mathrm{in}}(k)+\Hat{F}_{\mathrm{out}}(k) = \mu_{\mathrm{in}}-\mu_{\mathrm{out}},
\end{equation}
for some $k \in \mathbb{Z}^2 \backslash \{0\}$. Using Lemma~3 from~\cite{avrachenkov2021higher} for the expression of the Fourier transform, we can specify the above condition to
\begin{equation}  
\label{eq:plus_cond1}
(2R)^2 (a+b) \prod_{j=1}^2 \operatorname{sinc}(2\pi k_j R) = (2R)^d (a-b),
\end{equation}
for some $k \in \mathbb{Z}^2 \backslash \{0\}$, where 
$$
\operatorname{sinc}(x) = \begin{cases}
        \frac{\sin x}{x}, \quad  \text{if } x\neq 0; \\
 1, \quad \text{otherwise.}\\
    \end{cases}
$$
In particular, by taking $k=(0,1)$, the ideal eigenvalue will be non-simple in the limiting spectrum if 
\begin{equation}  
\label{eq:plus_cond2}
b = a \frac{1-\operatorname{sinc}(2\pi R)}{1+\operatorname{sinc}(2\pi R)}.
\end{equation}

\section{Additional experiments with \texttt{DBSPEC} on real-world data}
We here consider the performance of \texttt{DBSPEC} with multi-dimensional spectral embeddings for DBLP A$^\ast$. 

We consider the three eigenvectors $\psi_8$, $\psi_9$, and $\psi_{10}$, whose associated eigenvalues are closest to our estimator
$\hat{\lambda}_*=14.31$, namely $\hat{\lambda}_8=14.92$, $\hat{\lambda}_9=14.84$ and
$\hat{\lambda}_{10}=14.63$, respectively. 
The next eigenvalue, $\hat{\lambda}_{11}=13.87$ is separated from $\hat{\lambda}_{10}$ by a comparatively wide
gap.

Figure~\ref{fig:dblp_4clusters} shows the spectral embedding given by the rows of $[\psi_8\ \psi_9\ \psi_{10}]$, after normalizing each embedded point to lie on the unit sphere. Interestingly, for a wide range of \texttt{DBSCAN} parameters (for instance $\epsilon = 0.30$ and $\texttt{MinPts} = 500$) this embedding yields four clusters rather than two. Among the 12,991 embedded points, the largest cluster contains 7,702 points, of which $84\%$ belong to the first community, while the other three clusters, with 1,657, 1,312 and 922 points, consist almost entirely of vertices of the second
community ($89\%$, $93\%$ and $83\%$ of their points, respectively), the remaining 1,398 points ($10.8\%$) are labeled as noise. In particular, if the latter three clusters are merged into a single one, the resulting
bipartition attains accuracy $0.86$ when the noise points are removed. The adjacency matrix may also suggest that four clusters are more natural for this dataset, since reordering the rows and columns of the adjacency matrix by the four clusters reveals four dense diagonal blocks, with only $3.6\%$ of the edges crossing a block boundary (Figure~\ref{fig:dblp_matrix}).

\begin{figure}[htbp]
\centering
\includegraphics[width=1\textwidth]{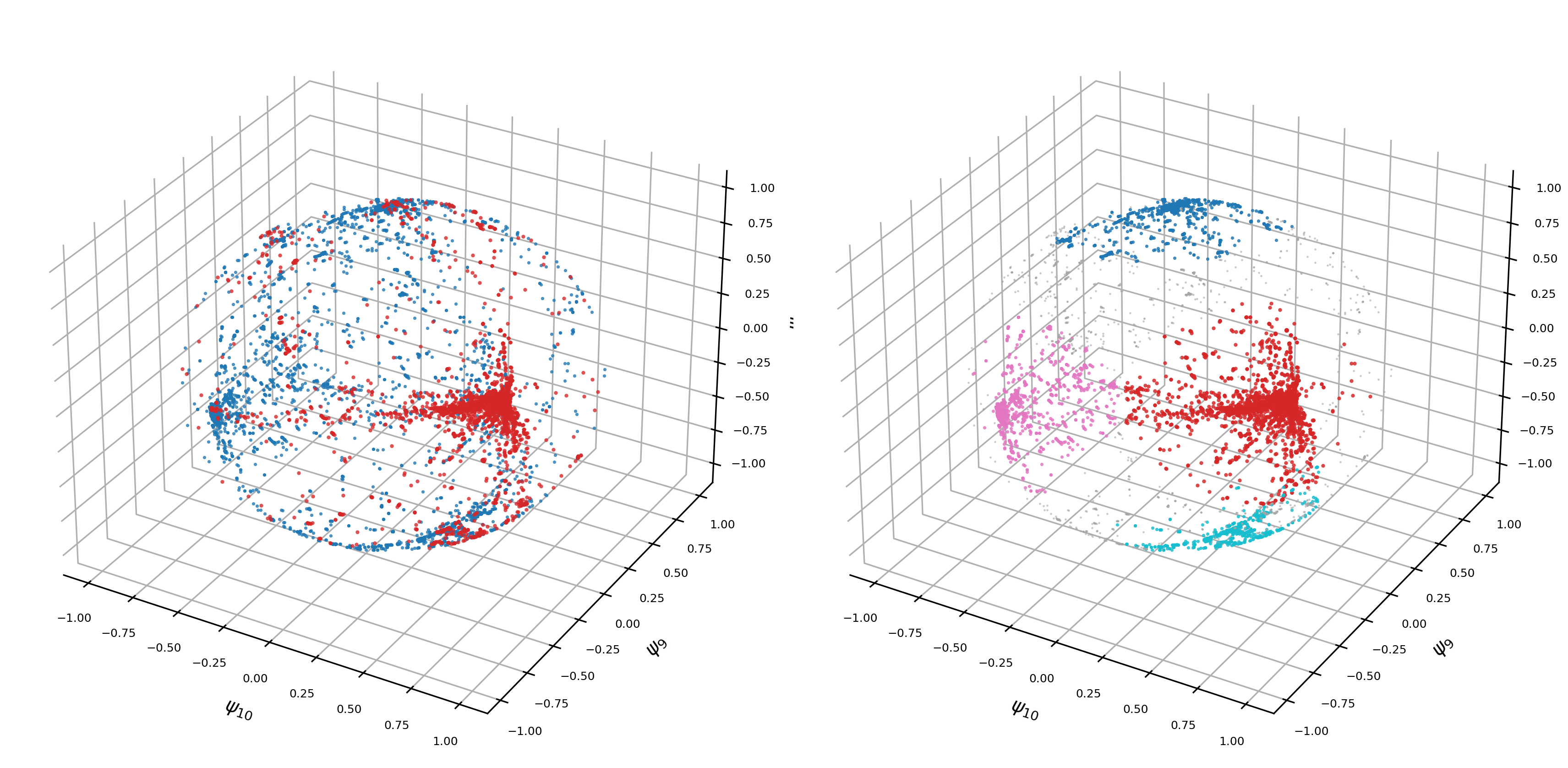}
    \caption{ Three dimensional spectral embedding based on the eigenvectors $\psi_8$, $\psi_9$, and $\psi_{10}$, with each embedded point normalized to lie on the unit sphere. On the left, points are colored according to the true communities. On the right, points are colored according to the four clusters obtained by \texttt{DBSCAN} with $\epsilon=0.30$ and $\texttt{MinPts}=500$, with noise points shown in gray ($10.8\%$ of the vertices).}
    \label{fig:dblp_4clusters}
\end{figure}

\begin{figure}[h]
\centering
\includegraphics[width=0.3\textwidth]{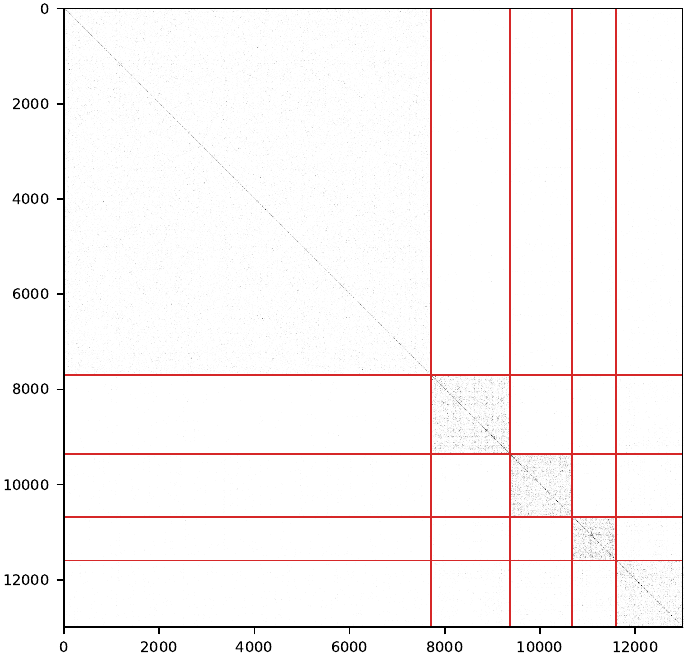}
\caption{Adjacency matrix of DBLP A$^\ast$ with rows and columns sorted by the four
\texttt{DBSCAN} clusters of Figure~\ref{fig:dblp_4clusters} (noise vertices last). Red lines
mark the cluster boundaries. The three subgroups of the second community appear as
dense diagonal blocks; only $3.6\%$ of the edges among the assigned vertices cross a
block boundary.}
\label{fig:dblp_matrix}
\end{figure}

This finding of four natural clusters is somewhat surprising, since the dataset was explicitly constructed with two ground-truth clusters. 
In DBLP, however, these ground-truth clusters are based on publication venues. 
Moreover, the variant DBLP A$^{\ast}$ is based on the two largest such venues.
One plausible hypothesis is that the four clusters correspond to natural subcommunities in these large venues.

Let us finally recall that the main text considered spectral embeddings for LiveJournal based on the rows of $[\psi_1\ \psi_2\ \psi_3]$, the eigenvectors of the three largest eigenvalues. 
The choice of these three eigenvalues was motivates as being natural when one does not have access to information about the ideal eigenvalues, but it is natural to wonder what would happen if a different choice of eigenvectors were used. 

Figure \ref{fig:livejournal} visualizes the (normalized) spectral embeddings associated to the rows of $[\psi_3\ \psi_4\ \psi_5]$ in LiveJournal. 
The accuracy of \texttt{DBSCAN} in this case is $99.2\%$, essentially identical to the performance $99.3\%$ achieved using $[\psi_1\ \psi_2\ \psi_3]$.
This confirms that the accuracy is not too sensitive to the choice of the set of eigenvectors, as long as the set includes the ideal eigenvectors.

\begin{figure}[h]
\centering
\includegraphics[width=0.45\textwidth]{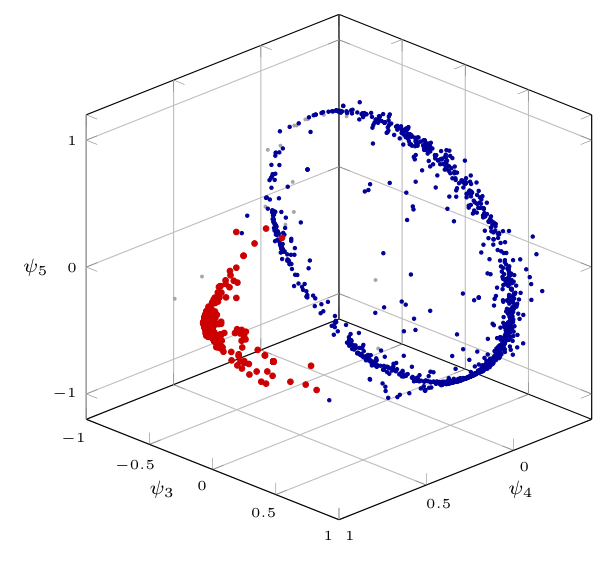}
\caption{Spectral embedding of LiveJournal given by the rows of $[\psi_3\ \psi_4\ \psi_5]$,
normalized to the unit sphere, colored by the labels returned by \texttt{DBSCAN} with
$\epsilon=0.35$ and $\text{MinPts}=50$ (gray = noise, $0.6\%$ of the vertices). }
\label{fig:livejournal}
\end{figure}

\end{document}